\numberwithin{equation}{section}
\newtheorem{Theorem}{Theorem}[section]
\newtheorem{Corollary}[Theorem]{Corollary}
\newtheorem{Lemma}[Theorem]{Lemma}
\newtheorem{prop}[Theorem]{Proposition}
 { \theoremstyle{definition}
\newtheorem{Definition}[Theorem]{Definition}

\newtheorem{Remark}[Theorem]{Remark} }
\newcommand{\ee}{\mathrm{e}}
\newcommand{\NN}{\mathbb{Z}_{>0}}
\newcommand{\ZZ}{\mathbb{Z}}
\newcommand{\CC}{\mathbb{C}}
\newcommand{\ind}{\mathds{1}}
\newcommand{\Z}{\mathbf{Z}}
\newcommand{\C}{\mathbb{C}}
\newcommand{\mean}[1]{\left\langle{#1}\right\rangle}
\newcommand{\W}{\mathcal{W}}
\newcommand{\A}{\mathcal{A}}
\newcommand{\B}{\mathcal{B}}
\DeclareMathOperator{\pf}{Pf}
\begin{document}

\newcommand{\arXivNumber}{2312.12229}

\renewcommand{\PaperNumber}{054}

\FirstPageHeading

\ShortArticleName{Fay Identities of Pfaffian Type for Hyperelliptic Curves}

\ArticleName{Fay Identities of Pfaffian Type\\ for Hyperelliptic Curves}

\Author{Ga\"etan BOROT~$^{\rm a}$ and Thomas BUC-D'ALCH\'E~$^{\rm b}$}

\AuthorNameForHeading{G.~Borot and T.~Buc-d'Alch\'e}

\Address{$^{\rm a)}$~Institut f\"ur Mathematik und Institut f\"ur Physik, Humboldt-Universit\"at zu Berlin,\\
\hphantom{$^{\rm a)}$}~Unter den Linden 6, 10099 Berlin, Germany}
\EmailD{\href{mailto:gaetan.borot@hu-berlin.de}{gaetan.borot@hu-berlin.de}}

\Address{$^{\rm b)}$~UMPA UMR 5669, ENS de Lyon, CNRS, 46, all\'ee d'Italie 69007, Lyon, France}
\EmailD{\href{mailto:thomas.buc-dalche@ens-lyon.fr}{thomas.buc-dalche@ens-lyon.fr}}

\ArticleDates{Received January 30, 2024, in final form June 06, 2024; Published online June 23, 2024}

\Abstract{We establish identities of Pfaffian type for the theta function associated with twice or half the period matrix of a hyperelliptic curve. They are implied by the large size asymptotic analysis of exact Pfaffian identities for expectation values of ratios of characteristic polynomials in ensembles of orthogonal or quaternionic self-dual random matrices. We show that they amount to identities for the theta function with the period matrix of a~hyperelliptic curve, and in this form we reprove them by direct geometric methods.}

\Keywords{random matrix theory; theta function; Fay's identity; hyperelliptic curves}

\Classification{60B20; 14H42}

\section{Introduction}

The Fay trisecant identity \cite{Fay} is a property of the Riemann theta function associated to the period matrix $\bm{\tau}$ of a compact Riemann surface $\widehat{C}$ of genus $g > 0$
\begin{gather}
0 = \theta(\boldsymbol{v}|\bm{\tau}) \theta (\boldsymbol{u}(z_1) - \boldsymbol{u}(z_3) + \boldsymbol{c} |\bm{\tau} )\theta (\boldsymbol{u}(z_2) - \boldsymbol{u}(z_4) + \boldsymbol{c} |\bm{\tau} ) \nonumber\\
\hphantom{0=}\quad{}
\times \theta (\boldsymbol{v} + \boldsymbol{u}(z_1) - \boldsymbol{u}(z_2) + \boldsymbol{u}(z_3) - \boldsymbol{u}(z_4) |\bm{\tau} ) \nonumber\\
\hphantom{0=}{} - \theta(\boldsymbol{u}(z_1) - \boldsymbol{u}(z_4) + \boldsymbol{c} |\bm{\tau} )\theta (\boldsymbol{u}(z_3) - \boldsymbol{u}(z_2) + \boldsymbol{c} |\bm{\tau} )\nonumber\\
\hphantom{0=}\quad {}\times\theta (\boldsymbol{v} + \boldsymbol{u}(z_1) - \boldsymbol{u}(z_2) |\bm{\tau} )\theta (\boldsymbol{v} + \boldsymbol{u}(z_3) - \boldsymbol{u}(z_4) |\bm{\tau} ) \nonumber\\
\hphantom{0=}{}+ \theta (\boldsymbol{u}(z_1) - \boldsymbol{u}(z_2) + c |\bm{\tau} )\theta (\boldsymbol{u}(z_3) - \boldsymbol{u}(z_4) + \boldsymbol{c} |\bm{\tau} ) \nonumber\\
\hphantom{0=}\quad {}\times \theta (\boldsymbol{v} + \boldsymbol{u}(z_1) - \boldsymbol{u}(z_4) |\bm{\tau} )\theta (\boldsymbol{v} + \boldsymbol{u}(z_3) - \boldsymbol{u}(z_2) |\bm{\tau} ) ,\label{Fayid}
\end{gather}
where $\boldsymbol{u}(z_i)$ is the image via the Abel map of a point $z_i$ in the universal cover $\widetilde{C}$ of $\widehat{C}$ and $\boldsymbol{v}$ is an arbitrary $g$-dimensional vector, all notations appearing in this formula will be reviewed later. Its proof is an application of basic facts about the geometry of the Jacobian of $\widehat{C}$. This identity admits a generalisation to a determinantal identity involving $2n$ points on $\widetilde{C}$.

The Fay identity is a showcase of the deep relations between the
geometry of Riemann surfaces and integrability. It is responsible for
the existence of the algebro-geometric solutions to the KP hierarchy
\cite{Krichever}, which are associated to any fixed Riemann surface $\widehat{C}$; in
this context~\eqref{Fayid} is an equivalent form of the Hirota equation. The
real-valued solutions among those give rise to the finite-gap
potentials for the associated linear differential system, which
historically have played an important role in the study of the KdV and
KP hierarchies \cite{Matveev}. The Fay identity is also the basis of a solution
to the Schottky problem: as conjectured by Novikov and proved by
Shiota~\cite{Shiota} building on earlier work of Mulase \cite{Mulase}, \eqref{Fayid} characterises
period matrices among complex symmetric matrices $\bm{\tau}$ with
positive-definite imaginary part. In the algebro-geometric solutions,
coupling the moduli of $\widehat{C}$ to the KP flows does not give anymore
an exact solution, but captures the long-time asymptotics of more
general solutions of KP, see, e.g., the conjectures in~\cite{BEInt}.

The $1$-hermitian matrix model exhibits Toda integrability (closely related to KP integrability), which manifests itself by the existence of determinantal formulae to compute various observables, valid for any matrix size $N$. In particular, the average of any ratio of characteristic polynomials (called $2n$-point functions) can be expressed as a determinant of expectation values of the ratio of two characteristic polynomials. The large $N$ asymptotic of these matrix models has been extensively studied, either by Riemann--Hilbert methods relying on integrability~\mbox{\cite{CFWW,CGmL, Pastur06}}, or by probabilistic techniques \cite{APS01,BG11,BGmulti,BGK,Johan,Smulticut}. The multi-cut regime, when the large $N$ spectral density of the random matrix is supported on $g + 1$ segments, is particularly interesting. As observed numerically in \cite{Jur91}, explained heuristically in \cite{BDE} and justified rigorously in \cite{BGmulti,Smulticut}, the asymptotic behavior is of oscillatory nature. In particular, fluctuations of linear statistics in the macroscopic regime are asymptotically described as the independent sum of a Gaussian and a discrete Gaussian living on an $N$-dependent lattice. This can be precisely described through the Riemann theta function of the underlying spectral curve, which is hyperelliptic of genus $g$. Given the parallel with integrability, it should not be a surprise that the exact determinantal formulae in the hermitian matrix model imply, in the large $N$ limit up to~$o(1)$, the Fay identities \eqref{Fayid}. This implication will be shown in Proposition~\ref{corol:fay-from-beta=2}.

The purpose of our work is to generalise this to orthogonal and quaternionic self-dual 1-matrix models. The determinantal formulae of the hermitian case for $2n$-point functions of ratios of characteristic polynomials are then replaced with the Pfaffian formulae found by Borodin and Strahov \cite{borodin_averages}. These models correspond to the $\beta = 1$ and $\beta = 4$ cases of the $\beta$-ensembles, whose asymptotic analysis in the $(g+1)$-cut regime has been established for all $\beta > 0$ by probabilistic techniques in \cite{BGmulti,BGK,Smulticut}. The large $N$ spectral density is described by a hyperelliptic curve of genus $g$ independent of $\beta$ and having a period matrix $\bm{\tau}$. The asymptotics of the partition function and the $2n$-point functions are governed by the theta function associated with the matrix $\frac{\beta}{2} \bm{\tau}$. The appearance of the theta function does not have a geometric origin\footnote{The asymptotic analysis carried via the Riemann--Hilbert method does give a geometric origin to the theta function, but it is only applied to the hermitian case, i.e., $\beta = 2$.} but is rather explained by eigenvalue tunnelling between the different connected components of the support \cite{BDE}. Inserting these asymptotics up to $o(1)$ in the Pfaffian identities for the $2n$-point functions yield identities between these theta functions, which can be expressed solely in terms of the geometry of the underlying spectral curve.

The spectral curves arising from the large $N$ limit of the matrix models we consider must be hyperelliptic, have real Weierstra\ss{} points, and have the Boutroux property. We show that all such curves can be realised as the spectral curve of an off-critical $\beta$-ensemble with polynomial potential (Proposition~\ref{propallhyp}). By analytic continuation we can extend the validity of the resulting identities to all hyperelliptic curves. This gives our main results: Theorem~\ref{thm:formula_beta_2} for $\beta = 2$, Theorem~\ref{thm:formula_beta_1} for $\beta = 1$ and Theorem~\ref{thm:formula_beta_4} for $\beta = 4$. It turns out that all three identities can be reformulated in terms of theta functions for the matrix of periods $\boldsymbol{\tau}$ \big(instead of $\frac{\beta}{2}\bm{\tau}$\big) and in this form we are able to give them a second proof by direct algebraic methods. Interestingly, the $\beta = 1$ and $\beta = 4$ identities are equivalent via the modular properties of theta functions, and the $\beta = 2$ identity implies the Fay identity in the special case of hyperelliptic curves.

As a byproduct of our proofs, we obtain a seemingly new formula (Proposition~\ref{lem:Ef0} proved in Section~\ref{sec:can}) for the equilibrium energy of the $\beta$-ensembles in the multi-cut regime in terms of the geometry of the spectral curve. Although the ingredients are the same, at first sight it does not have exactly the same form as the 1-matrix model specialisation of the formula known in the context of the $2$-matrix model \cite{Bertola}. Independently of our analysis, we also establish (Proposition~\ref{lem:compute-nu} proved in Appendix~\ref{AppA}) an explicit formula for the derivative with respect to filling fractions of the equilibrium entropy. For $\beta \neq 2$, the equilibrium entropy appears as the order $N$ term in the free energy of the $\beta$-ensemble, and its derivatives with respect to filling fractions appear both in the asymptotics of the partition function (Theorem~\ref{Asymppart}) and in the centering in the generalised central limit theorem (Theorem~\ref{thm:clt}).

The strategy of proof via asymptotics in integrable random matrix ensembles is somehow more interesting than the resulting identities in the particular case we studied, and constitutes the originality of this study. In principle this strategy can be applied to any random matrix ensemble:
\begin{itemize}\itemsep=0pt
\item[(i)] which is amenable to asymptotic analysis up to $o(1)$ in the large size limit and in the multi-cut regime;
\item[(ii)] in which exact formulae for $2n$-point functions in terms of $k$-point functions (with $k$ independent of $n$) are available.
\end{itemize}
In principle, more general algebraic spectral curves can be obtained in two-matrix models or in linearly coupled chain of matrices. In such models, (ii) is addressed by the Eynard--Mehta formulae \cite{EM} but obtaining (i) already for the two-matrix model away from critical points in the multi-cut regime is a notoriously hard open problem. More general algebraic spectral curves can also be obtained in repulsive particle systems with $d$ groups of particles, in which the repulsion intensity between particles of groups $i$ and $j$ is $\beta_{i,j}$. The latter appear naturally in various situations (see, e.g., \cite{BEO,BEW}) and (i) has been addressed by \cite{BGK}. Their discrete counterpart appears in models related to random two-dimensional tilings and is also amenable to asymptotic analysis \cite{BGG}. We expect that in some of these models, integrability properties (ii) should exist, and will therefore imply identities between theta functions with a more complicated structure and that depend on a larger class of algebraic curves. By this we mean the matrix $\boldsymbol{\tau}$ in the theta function will be specified from the geometry of these curves, although it may not exactly be the matrix of periods. This would in fact be the interesting situation, as the corresponding theta function is then associated to an abelian variety which may not be a Jacobian. Whether genuinely new identities for theta functions of certain abelian varieties, i.e., for which a proof by direct algebraic methods is not easily available, as consequences of the integrability of probabilistic models can be obtained is a question left to future investigations.

\section[beta-ensembles and their properties]{$\boldsymbol{\beta}$-ensembles and their properties}

We recall a few facts about the $\beta$-ensembles and review the
determinantal and Pfaffian formulae of Borodin and Strahov \cite{borodin_averages}. We use the notation $[g]$ for the integer set $\{1,\ldots,g\}$.

\subsection{The unconstrained model}

Fix a finite union $A$ of compact intervals of $\mathbb{R}$, a positive integer $N$, a real number
$\beta > 0$, and an even-degree polynomial $V$ (the potential) with real coefficients and positive top coefficient. We consider the probability measure
$\mathbb{P}_{N}^{V}$ on $A^{N}$ defined by
\begin{equation}\label{eq:measure-free}
 \dd \mathbb{P}_{N}^V(\bm{\lambda}) = \frac{1}{Z_{N}^V}\,|\Delta(\bm{\lambda})|^{\beta}\ee^{-\frac{\beta N}{2}\sum_{i = 1}^N V(\lambda_{i})}\prod_{i=1}^{N}\ind_{A}(\lambda_{i})\dd\lambda_{i},
\end{equation}
where
$\bm{\lambda} = (\lambda_{1}, \ldots, \lambda_{N}) \in A^{N}$,
$\Delta(\bm{\lambda}) = \prod_{i < j}(\lambda_{j} - \lambda_{i})$ is
the Vandermonde determinant, and
\begin{equation*}
 Z_{N}^V = \int_{A^{N}} |\Delta(\bm{\lambda})|^{\beta}\exp\Biggl(-\frac{\beta N}{2}\sum_{i = 1}^N V(\lambda_{i})\Biggr) \prod_{i=1}^{N}\dd\lambda_{i}
\end{equation*}
is the partition function. Many results that we quote are formulated with $A = \mathbb{R}$ but their validity trivially extends to the case of $A$ compact. We choose to work from the start with a~compact $A$ as it facilitates the statement of the asymptotic results we will need and does not lead to any loss of generality.

 When $\beta = 1, 2 \text{ or }4$,
 $\mathbb{P}_{N}^{V}$ is the distribution of the $N$ eigenvalues of
 a random matrix whose law~is proportional to \smash{$\exp\bigl(-\frac{\beta N}{2}V(\bm{M})\bigr)\dd \bm{M}$},
  where $\bm{M}$ is a matrix which is real symmetric ($\beta = 1$), Hermitian ($\beta = 2$) or
 quaternionic self-dual ($\beta = 4$) and which is conditioned to have spectrum in $A$. The measure $\dd \bm{M}$ is
 the product of Lebesgue measure on the $\mathbb{R}$-linearly independent entries of $\bm{M}$. In particular, when $V(\bm{M}) = \frac{1}{2}\bm{M}^2$, the entries $M_{ij}$, $i \leq j$
 of the matrix are independent Gaussian random variables. These matrix ensembles
are known under the name of Gaussian orthogonal ensemble
 (GOE) for $\beta = 1$, Gaussian unitary ensemble (GUE) for
 $\beta = 2$, and Gaussian symplectic ensemble (GSE) for $\beta = 4$,
 see \cite{Mehtabook}. The $\beta$-ensembles \eqref{eq:measure-free} constitute a
 generalisation of these models.

\subsection{The model with fixed filling fractions}

Let us write $A = \bigsqcup_{h=0}^{g} A_{h}$ where $A_h$ are the connected components of $A$. In addition to the measure \eqref{eq:measure-free}, we define the
$\beta$-ensemble with fixed filling fractions as follows. Let
$\bm{N} = (N_{h})_{h = 1}^{g} \in \NN^{g}$ such that $N_1 + \cdots + N_g < N$ and introduce $N_0 \in \mathbb{Z}_{> 0}$ such that
\[
 N_{0} + \cdots + N_{g} = N.
\]
We call $N_h/N$ the filling fraction of $A_h$. We define the measure with fixed filling fractions by
\begin{equation*}
 \dd\mathbb{P}_{N,\bm{N}/N}^V = \frac{1}{Z^{V}_{N,\bm{N}/N}}|\Delta(\bm{\lambda})|^{\beta}\exp\Biggl(-\frac{\beta N}{2}\sum_{h = 0}^{g} \sum_{i = 1}^{N_h} V(\lambda_{h, i})\Biggr)\prod_{h=0}^{g}\prod_{i=1}^{N_h}\ind_{A_{h}}(\lambda_{h, i})\dd\lambda_{h, i},
\end{equation*}
where
$\bm{\lambda} = (\lambda_{h, 1}, \ldots, \lambda_{h, N_{h}})_{h = 0}^{g}$ is a $N$-tuple and
\begin{equation*}
 Z^{V}_{N,\bm{N}/N} = \int_{A^{N}} |\Delta(\bm{\lambda})|^{\beta}\ee^{-\frac{\beta N}{2}\sum_{h = 0}^{g} \sum_{i = 1}^{N_h} V(\lambda_{h, i})}\prod_{h=0}^{g}\prod_{i=1}^{N_h}\ind_{A_{h}}(\lambda_{h, i})\dd\lambda_{h, i}
\end{equation*}
is the partition function for fixed filling fractions. To distinguish it from the model with fixed filling fractions, we refer to \eqref{eq:measure-free} as the unconstrained model.

\subsection{Equilibrium measures and their Stieltjes transform}
\label{eqmessec}
We define the empirical measure as $L_N = \frac{1}{N}\sum_{i=1}^{N}\delta_{\lambda_{i}}$. It belongs to the space of probability measures on $A$, which we equip with the weak topology. We first consider $L_N$ in the original model. The following result comes large deviation arguments \cite[Theorem~2.6.1 and Corollary~2.6.3]{BookAG}, but see also \cite{APS01,Deift,Johan}.
\begin{Theorem}\label{thm:equilibrium}
 Assume that $V$ is an even-degree real polynomial with positive top coefficient. As $N \rightarrow \infty$, $L_{N}$
 converges under $\mathbb{P}_{N}^{V}$ almost surely, and in
 expectation $($when tested against continuous bounded functions$)$ to the unique probability measure
 $\mu_{{\rm eq}}$ on $A$ maximising
 \begin{equation}\label{energy}
 \mathcal{E}[\mu] = \frac{\beta}{2}\iint_{A^2} \biggl( \ln|\xi - \eta| -\frac{V(\xi) + V(\eta)}{2}\biggr)\dd\mu(\xi)\dd\mu(\eta).
 \end{equation}
 Furthermore, $\mu_{{\rm eq}}$ has compact
 support $S$ consisting in a finite union of segments. It is characterised by the existence of a constant $c$ such
 that
 \begin{equation}\label{eq:charc-eq}
	\forall x \in A \qquad 2\int_{A}\ln|x - \xi|\dd\mu_{{\rm eq}}(\xi) - V(x) \leq c
 \end{equation}
 with equality $\mu_{{\rm eq}}$-almost everywhere.
\end{Theorem}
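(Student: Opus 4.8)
The plan is to derive the theorem from a large deviation principle (LDP) for the empirical measure $L_N$ at speed $N^2$, in the spirit of \cite{BookAG}. First I would rewrite the joint density \eqref{eq:measure-free} as $\exp(N^2\mathcal{E}[L_N] + o(N^2))$ times $1/Z_N^V$: indeed $|\Delta(\bm\lambda)|^\beta = \exp\bigl(\tfrac{\beta}{2}\sum_{i\neq j}\ln|\lambda_i - \lambda_j|\bigr)$, so the exponent of the Boltzmann weight equals $\tfrac{\beta}{2}N^2\bigl(\iint_{\xi\neq\eta}\ln|\xi-\eta|\,\dd L_N \dd L_N - \int V \dd L_N\bigr)$, which is $N^2\mathcal E[L_N]$ once one uses $\int V\dd\mu = \iint\tfrac{V(\xi)+V(\eta)}{2}\dd\mu\dd\mu$ and removes the diagonal. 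Since $A$ is compact, exponential tightness is automatic, and $\mu\mapsto\iint\ln|\xi-\eta|\dd\mu\dd\mu$ is upper semicontinuous for the weak topology (bound $\ln$ from above on $A\times A$ and approximate it from above by bounded continuous kernels), so $\mathcal E$ is upper semicontinuous. The analytic core is then the matching LDP upper and lower bounds; the delicate point is that the kernel $\ln|\xi-\eta|$ is unbounded below on the diagonal, which I would handle by truncating it at level $-M$, proving the bounds for the truncated energy, and letting $M\to\infty$ by monotone convergence while controlling the contribution of near-coincident eigenvalues.

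Granting the LDP with good rate function $J[\mu] = \sup_{\nu}\mathcal E[\nu] - \mathcal E[\mu]$, existence of a maximiser follows from upper semicontinuity of $\mathcal E$ and weak compactness of the probability measures on the compact set $A$. Uniqueness follows from strict concavity of $\mathcal E$: the only nonlinear contribution is the logarithmic energy, and for probability measures $\mu\neq\nu$ on $A$ one has the strict inequality $\iint\ln\tfrac{1}{|\xi-\eta|}\dd(\mu-\nu)\dd(\mu-\nu) > 0$, namely positive-definiteness of the logarithmic kernel on compactly supported signed measures of zero mass (via its Fourier representation). Thus $\mathcal E$ has a unique maximiser $\mu_{\rm eq}$ and $J$ vanishes only there. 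The almost sure convergence $L_N\to\mu_{\rm eq}$ then follows from the LDP upper bound and Borel--Cantelli, since $\mathbb P_N^V(L_N\notin\mathcal O) \le \ee^{-cN^2}$ for any weak neighbourhood $\mathcal O$ of $\mu_{\rm eq}$ and $N$ large, and these probabilities are summable; convergence in expectation against bounded continuous functions follows by dominated convergence.

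To obtain the variational characterisation \eqref{eq:charc-eq}, I would use first-order optimality at the maximiser. Testing against the perturbations $\mu_t = (1-t)\mu_{\rm eq} + t\nu$ for $t\in[0,1]$ and an arbitrary probability measure $\nu$ on $A$, concavity gives $\tfrac{\dd}{\dd t}\big|_{t=0^+}\mathcal E[\mu_t]\le 0$, which reads $\int U_{\mathrm{eff}}\,\dd(\nu-\mu_{\rm eq})\le 0$ with $U_{\mathrm{eff}}(x) = 2\int_A \ln|x-\xi|\dd\mu_{\rm eq}(\xi) - V(x)$. Choosing $\nu=\delta_x$ gives $U_{\mathrm{eff}}(x)\le c$ for all $x\in A$ with $c := \int U_{\mathrm{eff}}\dd\mu_{\rm eq}$, and integrating this inequality against $\mu_{\rm eq}$ forces equality $\mu_{\rm eq}$-almost everywhere, which is exactly \eqref{eq:charc-eq}; conversely these conditions pin down $\mu_{\rm eq}$ by concavity.

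Finally, for the statement that $S$ is a finite union of segments I would exploit that $V$ is polynomial. Differentiating the equality in \eqref{eq:charc-eq} on the interior of $S$ gives $2\,\mathrm{p.v.}\!\int_A \tfrac{\dd\mu_{\rm eq}(\xi)}{x-\xi} = V'(x)$, so the Stieltjes transform $W(x)=\int_A\tfrac{\dd\mu_{\rm eq}(\xi)}{x-\xi}$ satisfies the algebraic equation $W(x)^2 - V'(x)W(x) + P(x)=0$ for a polynomial $P$ of degree $\deg V - 2$; hence $W(x)=\tfrac12\bigl(V'(x)-\sqrt{V'(x)^2-4P(x)}\bigr)$ and the density of $\mu_{\rm eq}$ is, up to a positive constant, $\bigl(4P(x)-V'(x)^2\bigr)^{1/2}$ on the set where the radicand is nonnegative. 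Since $4P(x)-V'(x)^2$ is a polynomial, $S$ is a finite union of intervals with at most $\deg V$ endpoints. The main obstacle I anticipate is the rigorous LDP with the singular logarithmic interaction (the truncation argument and the control of the normalisation $\tfrac{1}{N^2}\ln Z_N^V\to\sup\mathcal E$); the remaining steps, while technical, are standard potential theory.
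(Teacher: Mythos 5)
Your proposal follows the standard large--deviation route at speed $N^2$ (rewriting the Boltzmann weight as $\ee^{N^2\mathcal{E}[L_N]}$, truncating the logarithmic singularity, upper semicontinuity plus strict concavity of $\mathcal{E}$ for existence and uniqueness, first-order optimality for the Euler--Lagrange characterisation, and the quadratic equation for the Stieltjes transform to see that $S$ is a finite union of segments), which is exactly the approach the paper relies on: it does not reprove the statement but quotes it from \cite[Theorem~2.6.1 and Corollary~2.6.3]{BookAG} together with \cite{Johan} for the algebraic equation. The sketch is correct, modulo the harmless slip that $V'(x)^2-4P(x)$ has degree $2\deg V-2$, so the support has at most $2\deg V-2$ endpoints rather than $\deg V$.
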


We will only need to consider $S = \bigsqcup_{h = 0}^{g} S_h$ where $S_h$ is a segment contained in the interior\footnote{This is usually called the `soft edge' case, by opposition to hard edges that are endpoints of $S$ in the boundary of~$A$.}~$\mathring{A}_h$ of~$A_h$.
 Without loss of generality one can and one will restrict $A_h$ to be a small enlargement of~$S_h$. The choice of this enlargement will be irrelevant for our purposes, as it does not change the equilibrium measure and only affects the model by corrections which are exponentially small in~$N$, see, e.g., \cite[Proposition~2]{APS01} or the discussion in \cite[Section~2]{BG11} and references therein. In~the model with fixed filling fractions, Theorem~\ref{thm:equilibrium} has the following adaptation.

\begin{Theorem}[{\cite[Theorem 1.2]{BGmulti}}]
 \label{mueqfixed} Consider a sequence indexed by $N$ of $g$-tuples of nonnegative
 integers $\bm{N} = (N_{1}, \ldots, N_{g})$ with $\sum_{h=1}^{g}N_{h} < N$ and
 assume there exists
 $\bm{\epsilon} = (\epsilon_{h})_{1 = 0}^{g}$ such that
 $N_{h}/N \to \epsilon_{h}$ for all $h \in [g]$.
 Then, $L_{N} = \frac{1}{N} \sum_{h = 0}^{g} \sum_{i=1}^{N_{h}}\delta_{\lambda_{h, i}}$
 converges almost surely and in expectation under $\mathbb{P}^{V}_{N,\bm{N}/N}$ towards a deterministic probability measure $\mu_{{\rm eq},\bm{\epsilon}}$, which is the maximiser of \eqref{energy} among probability measures giving mass $\epsilon_h$ to the segment $A_h$ for each $h \in [0,g]$. It is characterised by the existence of constants $(c_h)_{h = 0}^{g}$ such that
\[
\forall h \in [0,g], \quad \forall x \in A_h \qquad 2\int_{A}\ln|x - \xi|\dd\mu_{{\rm eq},\bm{\epsilon}}(\xi) - V(x) \leq c_h
\]
with equality $\mu_{{\rm eq},\bm{\epsilon}}|_{A_h}$-almost everywhere.
\end{Theorem}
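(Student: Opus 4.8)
The plan is to obtain the statement as the combination of a large deviation principle for $L_N$ under $\mathbb{P}^{V}_{N,\bm{N}/N}$ with a variational analysis of the energy $\mathcal{E}$ restricted to the affine subspace of admissible measures. Writing $|\Delta(\bm{\lambda})|^{\beta} = \exp\bigl(\beta\sum_{i<j}\ln|\lambda_i - \lambda_j|\bigr)$ and expressing both the interaction and the potential term through the empirical measure, the density of $\mathbb{P}^{V}_{N,\bm{N}/N}$ is, to leading exponential order, $\exp\bigl(N^2\,\mathcal{E}[L_N] + o(N^2)\bigr)$, where the delicate point is to separate the diagonal from the off-diagonal part of the double sum and to control the logarithmic singularity. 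First I would establish a large deviation principle at speed $N^2$ with rate function $I(\mu) = \sup \mathcal{E} - \mathcal{E}[\mu]$, the supremum and the effective domain being taken over the closed convex set $\mathcal{M}_{\bm{\epsilon}}$ of probability measures on $A$ assigning mass $\epsilon_h$ to $A_h$ for every $h$; exponential tightness is automatic here because $A$ is compact.

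Second, I would prove that the maximiser is unique. This rests on the strict concavity of $\mu \mapsto \iint_{A^2}\ln|\xi - \eta|\,\dd\mu(\xi)\dd\mu(\eta)$ on the space of compactly supported signed measures of fixed total mass, which follows from the classical positivity of the logarithmic energy of a zero-mass signed measure (via its Fourier representation). Since $\mathcal{M}_{\bm{\epsilon}}$ is convex and the constraints are affine, strict concavity forces a single maximiser $\mu_{{\rm eq},\bm{\epsilon}}$. The large deviation upper bound with a good rate function and a unique minimiser then yields convergence in probability, which upgrades to almost sure convergence by Borel--Cantelli since the probabilities of deviating decay like $\exp(-cN^2)$ and are summable; convergence in expectation against bounded continuous test functions follows from the almost sure convergence and the compactness of $A$.

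Third, I would derive the characterisation by first-order optimality. For any $\nu \in \mathcal{M}_{\bm{\epsilon}}$, concavity gives $\frac{\dd}{\dd t}\big|_{t=0^+}\mathcal{E}[(1-t)\mu_{{\rm eq},\bm{\epsilon}} + t\nu] \le 0$, which reads $\int_A U\,\dd\nu \le \int_A U\,\dd\mu_{{\rm eq},\bm{\epsilon}}$ for the effective potential $U(x) = 2\int_A \ln|x - \xi|\,\dd\mu_{{\rm eq},\bm{\epsilon}}(\xi) - V(x)$. Because admissible variations may redistribute mass arbitrarily within a single component $A_h$ but never transfer it between components, this inequality is equivalent to the existence of constants $(c_h)_{h=0}^{g}$ with $U \le c_h$ on $A_h$ and $U = c_h$ $\mu_{{\rm eq},\bm{\epsilon}}|_{A_h}$-almost everywhere; the appearance of one constant per component, rather than the single global constant of Theorem~\ref{thm:equilibrium}, is precisely the signature of the fixed filling fractions.

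The main obstacle is the rigorous proof of the large deviation principle, specifically the matching upper and lower bounds near the singular diagonal of the logarithmic interaction together with the lower semicontinuity and goodness of $I$; the variational steps are then essentially convex analysis. A secondary point worth checking is that, for a polynomial potential $V$ and in the soft-edge situation assumed, the maximiser is supported on a union of segments lying in the interior of the $A_h$, so that $U$ is real-analytic there and the equality set is a genuine union of intervals.
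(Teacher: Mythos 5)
The paper does not prove this statement: it is quoted verbatim from the cited reference \cite{BGmulti} (Theorem~1.2 there), and the unconstrained analogue (Theorem~\ref{thm:equilibrium}) is likewise attributed to \cite{BookAG}. Your outline is exactly the route that the cited literature takes --- a large deviation principle at speed $N^2$ obtained by controlling the logarithmic singularity on the diagonal, uniqueness of the maximiser from strict concavity of the logarithmic energy on zero-mass signed measures, almost sure convergence by Borel--Cantelli, and the Euler--Lagrange conditions with one Lagrange constant per component reflecting the mass constraints --- so there is no methodological divergence to report.

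Two small points you leave unaddressed. First, the statement says the conditions \emph{characterise} $\mu_{{\rm eq},\bm{\epsilon}}$, so besides deriving them from optimality you must prove the converse: if $U \leq c_h$ on $A_h$ with equality $\mu|_{A_h}$-a.e., then $\mu$ is the maximiser. This is immediate from the concavity you already invoke, since for any admissible $\nu$ one has $\mathcal{E}[\nu] \leq \mathcal{E}[\mu] + \frac{\beta}{2}\int U \,\dd(\nu - \mu)$ and $\int U \,\dd\nu \leq \sum_h c_h \epsilon_h = \int U \,\dd\mu$, but it should be said. Second, for finite $N$ the empirical measure gives mass $N_h/N$ (not $\epsilon_h$) to $A_h$, so $L_N$ never lies in $\mathcal{M}_{\bm{\epsilon}}$; the LDP has to accommodate the moving constraint set, e.g., by working with the distance to $\mathcal{M}_{\bm{N}/N}$ or by a continuity argument in $\bm{\epsilon}$. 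Neither issue undermines the strategy, and your identification of the diagonal estimates as the main technical burden is accurate.
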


The filling fractions at equilibrium
$\bm{\epsilon}^{*} = (\epsilon_{h}^{*})_{h=1}^{g}$ are defined as $\epsilon_{h}^{*} = \mu_{{\rm eq}}(A_h)$, and one can show that $\mu_{{\rm eq}} = \mu_{{\rm eq}, \bm{\epsilon}^{*}}$, see \cite[Section~1.4]{BGmulti}.

Let us now discuss the properties of the equilibrium measure, both in the unconstrained case (Theorem~\ref{thm:equilibrium}) or fixed filling fraction case (Theorem~\ref{mueqfixed}) . We introduce the Stieltjes transform of the equilibrium measure
\[
 W_{1}(x) = \int_{A}\frac{\dd \mu_{{\rm eq}}(\xi)}{x - \xi},
\]
defined for $x \in \C \setminus S$. In \cite{Johan}, Johansson introduces the polynomial
\[
 P(x) = \int_{A}\frac{V'(x) - V'(\xi)}{x - \xi}\,\dd \mu_{{\rm eq}}(\xi),
\]
and derives the equation
\begin{equation}\label{eq:variational}
 W_{1}(x)^{2} - V'(x)W_{1}(x) + P(x) = 0
\end{equation}
for all $x \in \C\setminus S$. This equation is the large $N$ limit of the first
Dyson--Schwinger equation of the model, and its origin can be traced back to
\cite{BIPZ,migdal_loop_1983}. In particular, it implies that
\begin{equation*}
 W_{1}(x) = \frac{V'(x)}{2} \pm \frac{\sqrt{V'(x)^{2} - 4P(x)}}{2}.
\end{equation*}

The determination of the squareroot should be chosen such that $W_1(x) \sim \frac{1}{x}$ as $x \rightarrow \infty$ and~$W_1$ is holomorphic in $\mathbb{C} \setminus S$. As this determination plays an important role in our discussion, it is worth reviewing in detail how this can be achieved. The standard determination of the squareroot gives a holomorphic function $x \mapsto \sqrt{x}$ on $\mathbb{C} \setminus \mathbb{R}_{\leq 0}$ such that $\sqrt{\mathbb{R}_{> 0}} = \mathbb{R}_{> 0}$ and $\bigl(\sqrt{x}\bigr)^2 = x$. We decompose $V'(x)^2 - 4P(x) = M(x)^2 \sigma(x)$, where $\sigma$ is a monic polynomial with simple real roots and $M$ is a real polynomial with positive top coefficient. We write further
\[
\sigma(x) = \prod_{h = 0}^{g} (x - a_h)(x - b_h)
\]
with
\[
a_0 < b_0 < a_1 < b_1 < \cdots < a_g < b_g.
\]
The locus
$\sigma^{-1}(\mathbb{C} \setminus \mathbb{R}_{\leq 0})$ is a union of
$g + 2$ connected components, labelled from left to right: $C_0$~contains $a_0$ in its closure, $C_h$ contains $b_{h}$
and $a_{h + 1}$ in its closure for $h \in [g - 1]$, and $C_{g + 1}$
contains $b_g$ in its closure. For $x \in C_{h}$ with $h \in [0,g+1]$, we set $s(x) = (-1)^{g + 1 -h}\sqrt{\sigma(x)}$. This
definition makes $s(x)$ a continuous (thus holomorphic) function of
$x \in \mathbb{C} \setminus S$. It is discontinuous on $(a_h,b_h)$ because by
crossing this segment we stay in the same component, so
we keep the same global sign coming from the component we are in
while the standard determination of the squareroot does get a sign
change since $\sigma(x)$ crosses $\mathbb{R}_{< 0}$. Then, $M(x)s(x)$ is a holomorphic function of $x \in \mathbb{C}\setminus S$, and $M(x)s(x) \sim t x^{d - 1}$ for some $d \geq 2$ and $t > 0$. The constraint $W_1(x) \sim \frac{1}{x}$ as $x \rightarrow \infty$ leads to the formula
\begin{equation}
\label{W1SM}W_1(x) = \frac{V'(x) - M(x)s(x)}{2}.
\end{equation}

The fact that $W_1$ is the Stieltjes transform of the equilibrium measure puts some constraints on the polynomial $M$.

\begin{Lemma}
\label{lemMmm} The support of $\mu_{{\rm eq}}$ is $S = \bigsqcup_{h = 0}^{g} S_h$ with $S_h = [a_h,b_h]$ and we have
\[
\frac{\dd \mu_{{\rm eq}}}{\dd x} = \frac{M(x)  \operatorname{Im}(s(x + {\rm i}0))}{2\pi} \ind_S(x).
\]
For each $h \in [g]$, the number of zeros $($with multiplicity$)$ of $M$ in $[b_{h - 1},a_h]$ is odd. For each $h \in [0,g]$, the zeros of $M$ in $(a_h,b_h)$ have even multiplicity $($if there is any$)$.
\end{Lemma}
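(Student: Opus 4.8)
The plan is to reconstruct $\mu_{{\rm eq}}$ from its Stieltjes transform by the Stieltjes--Perron inversion formula
\[
\frac{\dd\mu_{{\rm eq}}}{\dd x} = -\frac{1}{\pi}\operatorname{Im} W_1(x+{\rm i}0),\qquad x\in\mathbb{R},
\]
and to insert the explicit expression \eqref{W1SM}. Since $V'$ and $M$ are real polynomials they have real boundary values on $\mathbb{R}$, so that $\operatorname{Im} W_1(x+{\rm i}0) = -\tfrac12 M(x)\operatorname{Im} s(x+{\rm i}0)$ and the claimed density formula follows at once, provided I control the location and sign of $\operatorname{Im} s(x+{\rm i}0)$. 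The entire content of the lemma is thus reduced to a boundary-value analysis of $s$.

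For that analysis I would write
\[
s(x+{\rm i}0) = \prod_{r}\sqrt{x+{\rm i}0-r},
\]
the product running over the $2g+2$ roots $r$ of $\sigma$ and each factor being the principal square root. For real $x$ a factor with $r<x$ is real positive while one with $r>x$ contributes a phase ${\rm i}$, whence $s(x+{\rm i}0) = {\rm i}^{k(x)}\sqrt{|\sigma(x)|}$, where $k(x)$ is the number of roots lying to the right of $x$. On a gap $k(x)$ is even, so $s$ is real there (consistent with $s$ being holomorphic across gaps) and $\operatorname{Im} s(x+{\rm i}0)=0$; on the cut $(a_h,b_h)$ one has $k(x)=2(g-h)+1$, so $s(x+{\rm i}0)={\rm i}(-1)^{g-h}\sqrt{|\sigma(x)|}$ and $\operatorname{Im} s(x+{\rm i}0)=(-1)^{g-h}\sqrt{|\sigma(x)|}$. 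This already reproduces the density formula and shows that the density vanishes off $\bigcup_h(a_h,b_h)$. Because $\sigma$ has simple roots, each $a_h$ and $b_h$ is a genuine branch point of $s$, hence of $W_1$, whereas $W_1$ is holomorphic on the gaps; as $W_1$ is the Stieltjes transform of $\mu_{{\rm eq}}$ this forces $\Supp\mu_{{\rm eq}} = \bigsqcup_{h=0}^{g}[a_h,b_h]$, which is the first assertion.

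The two parity statements then come from the nonnegativity of $\mu_{{\rm eq}}$. On $(a_h,b_h)$ we have $\frac{\dd\mu_{{\rm eq}}}{\dd x} = \frac{(-1)^{g-h}M(x)\sqrt{|\sigma(x)|}}{2\pi}\ge 0$, so $(-1)^{g-h}M(x)\ge 0$ throughout the open cut; hence $M$ has constant sign there and every zero of $M$ inside $(a_h,b_h)$ has even multiplicity, the third assertion. For a gap, the signs forced on $M$ on the two flanking cuts $(a_{h-1},b_{h-1})$ and $(a_h,b_h)$ are $(-1)^{g-h+1}$ and $(-1)^{g-h}$, which are opposite. Picking test points $\xi_{h-1}\in(a_{h-1},b_{h-1})$ and $\xi_h\in(a_h,b_h)$ gives $M(\xi_{h-1})M(\xi_h)<0$, so $M$ has odd total zero multiplicity on $(\xi_{h-1},\xi_h)$; since the zeros in $(\xi_{h-1},b_{h-1})$ and in $(a_h,\xi_h)$ have even multiplicity by the previous step, the zeros in $[b_{h-1},a_h]$ have odd total multiplicity, the second assertion.

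I expect the main obstacle to be the sign bookkeeping in the boundary values of $s$: one must track the branch of $\sqrt{\sigma}$ consistently across the $g+2$ components $C_h$ and handle the degenerate behaviour at the $a_h,b_h$, and then convert the resulting alternation of signs into parity statements for zero counts without being tripped up by zeros of $M$ sitting at the edges $a_h,b_h$ --- which is why I localise the counting argument to the open subintervals $(\xi_{h-1},b_{h-1})$ and $(a_h,\xi_h)$ before collecting the leftover on the closed gap.
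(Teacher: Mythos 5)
Your proof is correct and follows essentially the same route as the paper: reconstructing the density from the boundary values of $W_1$ (your $-\tfrac{1}{\pi}\operatorname{Im}W_1(x+{\rm i}0)$ is the same as the paper's jump $\tfrac{1}{2{\rm i}\pi}(W_1(x-{\rm i}0)-W_1(x+{\rm i}0))$ since $\mu_{\rm eq}$ is real), determining the sign of $s(x+{\rm i}0)$ on cuts and gaps (your ${\rm i}^{k(x)}$ bookkeeping reproduces the paper's $(-1)^{g+1-h}$ convention), and deducing the parity claims from positivity. Your localisation of the zero count to $(\xi_{h-1},\xi_h)$ is a slightly more careful write-up of the same sign-change argument the paper uses.
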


\begin{proof}
By construction, for any $h \in [0,g]$ and $x \in (a_h,b_h)$ we have $s(x + {\rm i}0) \in (-1)^{g -h}{\rm i}\mathbb{R}_{> 0}$ and for any $h \in [0,g + 1]$ and $x \in (b_{h - 1},a_h)$ we have $s(x) \in (-1)^{g + 1 - h}\mathbb{R}_{> 0}$, with the conventions $b_{-1} = -\infty$ and $a_{g + 1} = +\infty$. By definition of the Stieltjes transform, the function $W_1(x)$ has a discontinuity in the interior of the support of $\mu_{{\rm eq}}$. It is identified as the \emph{real} locus where the polynomial $V'(x)^2 - 4P(x)$ takes nonpositive values, and thus coincides with $S = \bigsqcup_{h = 0}^{g} [a_h,b_h]$. The density of the equilibrium measure is reconstructed from the jump:
\begin{gather*}
\forall x \in \mathbb{R} \qquad \frac{\dd\mu_{{\rm eq}}}{\dd x} = \frac{W_1(x - {\rm i}0) - W_1(x + {\rm i}0)}{2{\rm i} \pi} = \frac{M(x)s(x + {\rm i}0)}{2{\rm i}\pi}\ind_{S}(x) \\
\hphantom{\forall x \in \mathbb{R}\qquad \frac{\dd\mu_{{\rm eq}}}{\dd x}}{}
 = \frac{M(x) \operatorname{Im}(s(x + {\rm i}0))}{2\pi}\ind_S(x).
\end{gather*}
Since $\mu_{{\rm eq}}$ is a positive measure and since $\operatorname{Im}(s(x + {\rm i}0))$ has a constant sign in each $(a_h,b_h)$, $M$ should have constant sign in $S_h$ and thus have zeros of even multiplicity there (if there is any). Likewise, since the sign of $\operatorname{Im}(s(x + {\rm i}0))$ changes between two consecutive segments in the support, $M$ should have at least a sign change in the closure of the interval between these two segments, hence an odd number of zeros.
\end{proof}

Lemma~\ref{lemMmm} allows us to give an expression for the
density $\rho$ of the equilibrium measure. Indeed,
\begin{align*}
 \rho(x) & = \frac{W_{1}(x -{\rm i}0) - W_{1}(x + {\rm i}0)}{2{\rm i}\pi} = M(x)\frac{s(x+{\rm i}0)-s(x-{\rm i}0)}{4\pi}\ind_{S}(x) \\
 & = \frac{\sqrt{-M(x)\sigma(x)}}{2\pi}\ind_{S}(x).
\end{align*}

\begin{Definition}
The effective potential is defined for $x \in A$ by
\[
U(x) := V(x) - 2 \int_{A} \ln|x - \xi|\dd \mu_{{\rm eq}}(\xi).
\]
 It satisfies
\begin{gather}
\forall x \in A \setminus S  \qquad U'(x) = V'(x) - 2W_1(x) = M(x)s(x), \nonumber\\
\forall x \in \mathring{S}  \qquad U'(x) = V'(x) - W_1(x + {\rm i}0) - W_1(x - {\rm i}0) = 0.\label{effpot}
\end{gather}
\end{Definition}
\begin{Remark}
\label{Rem:vanishint} For the equilibrium measure in the unconstrained case, Theorem~\ref{thm:equilibrium} says there exists a constant $c$ such that $U(x) = c$ for all $x \in S$. In view of \eqref{effpot}, the latter property is equivalent to
\begin{equation*}
\forall h \in [g]\qquad \int_{b_{h - 1}}^{a_h} M(x)s(x)\,\dd x = 0.
\end{equation*}
\end{Remark}

\subsection{Determinantal and Pfaffian
 formulae}

Expectation values of ratios of
characteristic polynomials, also called kernels, are quantities of interest in random matrix theory. Let us introduce the notation
 $\mean{\cdot}^{V}_{N}$ for the expectation value with respect to $\mathbb{P}_{N}^V$ (the value of $\beta$ will be specified in each case), and $\Lambda = \operatorname{diag}(\bm{\lambda})$. Given $c_1,\ldots,c_m \in \mathbb{Z}$, and $x_1,\ldots,x_m \in \mathbb{C}$ with the condition $x_j \notin A$ if $c_j < 0$, the $m$-point kernel is defined as
\[
 \Biggl\langle \prod_{j=1}^{m} \det (x_{j} - \Lambda)^{c_j}\Biggr\rangle_{N}^{V} = \Biggl\langle \prod_{j=1}^{m} \prod_{i = 1}^{N} (x_{j} - \lambda_i)^{c_j}\Biggr\rangle_{N}^{V}.
\]

In \cite{borodin_averages}, Borodin and Strahov derive formulae
to compute the kernels. In what follows, we will always consider the ``balanced'' case, that
is, when there are as many characteristic polynomials in the numerator
as in the denominator. Given two tuples of complex numbers $\bm{x} = (x_{1}, \ldots, x_{m_1})$ and
$\bm{\tilde{x}} = (\tilde{x}_{1}, \ldots, \tilde{x}_{m_2})$, we write
\begin{equation*}
 \Delta(\bm{x},\bm{\tilde{x}}) = \prod_{i=1}^{m_1}\prod_{j=1}^{m_2} (x_{i} - \tilde{x}_{j}).
\end{equation*}

\subsubsection[The determinantal case: beta = 2]{The determinantal case: $\boldsymbol{\beta = 2}$}
In that case, we have the following formulae.
\begin{Theorem}[{\cite[Theorem 4.1.1]{borodin_averages}}]
 \label{beta2form} Let $N$, $m_1$, $m_2$ be positive integers, and sets of complex numbers
 \begin{alignat*}{3}
 &\bm{x} = \{x_{1}, \ldots, x_{m_{1}}\},\qquad && \bm{x}' = \bigl\{x'_{1}, \dots, x'_{m_{1}}\bigr\}, &\\
 &\tilde{\bm{x}} = \{\tilde{x}_{1}, \ldots, \tilde{x}_{m_{2}}\},\qquad && \tilde{\bm{x}}' = \bigl\{\tilde{x}'_{1}, \dots, \tilde{x}'_{m_{2}}\bigr\}, &
 \end{alignat*}
 such that
 \begin{equation*}
 \bm{x} \cap \bm{x}' = \varnothing,\qquad\bm{\tilde{x}} \cap \bm{\tilde{x}}'=\varnothing,\qquad\bm{x}' \cap A = \varnothing,\qquad \bm{\tilde{x}}' \cap A = \varnothing.
 \end{equation*}
 We have
 \begin{gather*}
 \Biggl\langle\prod_{j=1}^{m_{1}} \frac{\det(x_{j} - \Lambda)}{\det\bigl(x'_{j} - \Lambda\bigr)}\prod_{j=1}^{m_{2}}\frac{\det(\tilde{x}_{j} - \Lambda)}{\det\bigl(\tilde{x}'_{j} - \Lambda\bigr)}\Biggr\rangle_{N}^{V}\\
 \qquad  = (-1)^{\frac{1}{2}((m_{1}+m_{2})^{2} + m_{2} - m_{1})}\frac{\Delta(\bm{x}, \bm{x}')}{\Delta(\bm{x})\Delta(\bm{x}')}\frac{\Delta(\tilde{\bm{x}},\tilde{\bm{x}}')}{\Delta(\tilde{\bm{x}})\Delta(\tilde{\bm{x}}')}
 \det\bigl(\mathcal{M}^{(2)}(\bm{x},\bm{x}';\tilde{\bm{x}},\tilde{\bm{x}}')\bigr),
 \end{gather*}
 in terms of the block matrix of size $(m_1 + m_2)$:
 \[
 \mathcal{M}^{(2)}(\bm{x},\bm{x}';\tilde{\bm{x}},\tilde{\bm{x}}') = \begin{pmatrix} \mathcal{M}^{(2)}_{++}(x_i,\tilde{x}_j) & \mathcal{M}^{(2)}_{+-}\big(x_i,x_j'\big) \\ \mathcal{M}_{-+}^{(2)}(\tilde{x}_i',\tilde{x}_j) & \mathcal{M}^{(2)}_{--}\big(\tilde{x}_i',x_j'\big)\end{pmatrix},
 \]
where $i$ is a row index, $j$ a column index, and the entries are
\begin{gather*}
 \mathcal{M}_{++}^{(2)}(x, \tilde{x}) = N\frac{Z_{N-1}^{\frac{N}{N-1}V}}{Z_{N}^{V}}\mean{\det(x - \Lambda)\det(\tilde{x} - \Lambda)}_{N-1}^{\frac{N}{N-1}V},\\
 \mathcal{M}^{(2)}_{+-}(x, x') = \frac{1}{x-x'}\mean{\frac{\det(x - \Lambda)}{\det(x' - \Lambda)}}_{N}^{V},\\
 \mathcal{M}^{(2)}_{-+}(\tilde{x}',\tilde{x}) = \frac{1}{\tilde{x}' - \tilde{x}}\mean{\frac{\det(\tilde{x} - \Lambda)}{\det(\tilde{x}' - \Lambda)}}_{N}^{V},\\
 \mathcal{M}^{(2)}_{--}(\tilde{x}', x') = \frac{1}{N+1}\frac{Z_{N+1}^{\frac{N}{N+1}V}}{Z_{N}^{V}}\mean{\frac{1}{\det(\tilde{x}' - \Lambda)\det(x' - \Lambda)}}_{N+1}^{\frac{N}{N+1}V}.
 \end{gather*}
\end{Theorem}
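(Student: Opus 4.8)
The cleanest route is through the theory of orthogonal polynomials, exploiting the fact that for $\beta = 2$ the measure \eqref{eq:measure-free} induces a determinantal point process. The plan is to reduce every object to the monic orthogonal polynomials $\pi_n$ associated with the weight $w(x) = \ee^{-NV(x)}\ind_A(x)$ and their squared norms $h_n = \int_A \pi_n(\xi)^2 w(\xi)\,\dd\xi$, and then to assemble the four blocks of $\mathcal{M}^{(2)}$ from three classical building blocks. A key preliminary observation is that the shifted potentials $\frac{N}{N\mp 1}V$ are chosen exactly so that the effective weight is the same $\ee^{-NV}$ in all three ensembles of sizes $N-1,N,N+1$; hence a single family $\pi_n$ serves all blocks. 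The three building blocks are: (i) the average of a product of characteristic polynomials, which by Heine's formula equals $\langle \prod_{j}\det(y_j - \Lambda)\rangle_N = \Delta(\bm{y})^{-1}\det[\pi_{N-1+j}(y_i)]_{i,j}$; (ii) the average of a product of inverse characteristic polynomials, which equals a determinant of the Cauchy transforms (functions of the second kind) $\hat\pi_n(y) = \int_A \frac{\pi_n(\xi)w(\xi)}{y-\xi}\,\dd\xi$; and (iii) the single balanced ratio $\mean{\det(x-\Lambda)/\det(x'-\Lambda)}_N$, which is a mixed Christoffel--Darboux kernel. Each entry of $\mathcal{M}^{(2)}$ is, up to the shift $N\mapsto N\pm 1$, one of these three objects.

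First I would integrate out the eigenvalues. For the numerator factors $\prod_j \det(y_j-\Lambda)$ one uses the Vandermonde absorption identity $\Delta(\bm{\lambda})\prod_{i,j}(y_j-\lambda_i) = \pm\,\Delta(\bm{\lambda},\bm{y})/\Delta(\bm{y})$ to fold the extra points into an augmented Vandermonde, after which the Andr\'eief (Gram--de~Bruijn) identity turns the eigenvalue integral into a determinant whose entries are inner products of orthogonal polynomials. The denominator factors $\prod_j \det(y'_j-\Lambda)^{-1}$ are the delicate part: one represents each $1/\det(y'-\Lambda)$ either by its Cauchy-transform kernel or by an auxiliary contour integral, which after integration produces determinants of the functions $\hat\pi_n$. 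Because the configuration is balanced, the two contributions combine into a single determinant over a kernel with the announced $2\times 2$ block structure, the blocks being indexed by numerator-type versus denominator-type variables.

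The final step is to collapse this determinant onto the block matrix $\mathcal{M}^{(2)}$ with exactly the stated entries. Here one telescopes the consecutive orthogonal polynomials $\pi_{N-1},\pi_N,\dots$ by the Christoffel--Darboux formula: this is what produces the factors $1/(x-x')$ and $1/(\tilde x'-\tilde x)$ in the off-diagonal blocks $\mathcal{M}^{(2)}_{+-}$ and $\mathcal{M}^{(2)}_{-+}$, and it is what identifies the diagonal blocks with $\mean{\det(x-\Lambda)\det(\tilde x-\Lambda)}_{N-1}$ and $\mean{1/(\det(\tilde x'-\Lambda)\det(x'-\Lambda))}_{N+1}$ after reexpressing the $2\times 2$ orthogonal-polynomial (resp.\ Cauchy-transform) determinants through (i) and (ii) at sizes $N-1$ and $N+1$. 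The accompanying prefactors $NZ^{\frac{N}{N-1}V}_{N-1}/Z^V_N = h_{N-1}^{-1}$ and $(N+1)^{-1}Z^{\frac{N}{N+1}V}_{N+1}/Z^V_N = h_N$ are precisely the norms needed to normalise these shifted-size averages. Finally, reordering the rows and columns into the numerator/denominator blocks and pulling out the Vandermonde normalisations yields the prefactor $\frac{\Delta(\bm{x},\bm{x}')}{\Delta(\bm{x})\Delta(\bm{x}')}\frac{\Delta(\tilde{\bm{x}},\tilde{\bm{x}}')}{\Delta(\tilde{\bm{x}})\Delta(\tilde{\bm{x}}')}$ together with the global sign $(-1)^{\frac{1}{2}((m_1+m_2)^2+m_2-m_1)}$.

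The main obstacle I anticipate is twofold, and both difficulties live in the denominator. First, the Cauchy-transform representation of $\mean{\prod_j \det(y'_j-\Lambda)^{-1}}$ must be justified with care: the integrals defining $\hat\pi_n$ must converge and be assigned the correct determination relative to the chosen contours, so that the identity holds as a genuine identity of meromorphic functions in the $y'_j$, with poles only where $y'_j\in A$ (consistent with the hypothesis $\bm{x}'\cap A = \tilde{\bm{x}}'\cap A = \varnothing$). Second, the exact bookkeeping of signs and Vandermonde factors in the Christoffel--Darboux collapse is genuinely intricate: one must track the reordering permutations of the $m_1+m_2$ rows and columns together with all normalisation constants so that the combinatorial sign comes out precisely as $(-1)^{\frac{1}{2}((m_1+m_2)^2+m_2-m_1)}$. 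Everything else --- the Vandermonde absorption, the Andr\'eief step, and the identification of the diagonal blocks with shifted-size averages --- is routine once these two points are settled.
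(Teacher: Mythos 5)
The paper does not prove this statement at all: it is imported verbatim from Borodin--Strahov \cite{borodin_averages} (their Theorem~4.1.1), so there is no internal proof to compare against. Judged on its own merits, your outline follows the same orthogonal-polynomial route that the cited reference ultimately rests on, and several of your checkpoints are exactly right: the observation that the rescaled potentials $\frac{N}{N\mp1}V$ make all three ensembles share the single weight $w(x)=\ee^{-NV(x)}\ind_A(x)$, and the identification of the prefactors $N Z^{\frac{N}{N-1}V}_{N-1}/Z^V_N = h_{N-1}^{-1}$ and $(N+1)^{-1}Z^{\frac{N}{N+1}V}_{N+1}/Z^V_N = h_N$ via $Z^V_N = N!\prod_{n=0}^{N-1}h_n$, are both correct and are precisely the normalisations that make the blocks come out as stated.

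The genuine gap is in what you call the ``final step''. For $m_1=m_2=1$ the reduction to a $2\times2$ determinant of two-point kernels is indeed a Christoffel--Darboux computation, but for general $(m_1,m_2)$ the claim that the $(m_1+m_2)$-point average is the determinant of the matrix of \emph{two-point} building blocks is the actual content of the theorem, not a routine telescoping. After the Vandermonde absorption and Andr\'eief step you obtain a determinant of size governed by $N$ and the number of inserted points, and collapsing it to an $(m_1+m_2)\times(m_1+m_2)$ determinant whose entries are themselves $N$-, $(N-1)$- and $(N+1)$-fold averages requires either the integrable-operator/Schur-expansion machinery of Borodin--Strahov or an induction on $(m_1,m_2)$ establishing the determinantal (``Giambelli-compatible'') structure of the correlation functions of ratios; Christoffel--Darboux alone does not deliver it. Relatedly, your two anticipated ``obstacles'' (convergence of the Cauchy transforms and sign bookkeeping) are the easy parts; the hard part is the one you labelled routine. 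As a sketch of strategy the proposal is sound and consistent with the literature, but it is not yet a proof.
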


\subsubsection[Orthogonal ensembles: beta = 1]{Orthogonal ensembles: $\boldsymbol{\beta = 1}$}

Recall that for an antisymmetric matrix $A$ of size $2m$, the Pfaffian is defined as
\[
\pf(A) = \frac{1}{m! 2^m} \sum_{\sigma \in \mathfrak{S}_{2m}} \operatorname{sgn}(\sigma) \prod_{i = 1}^{m} A_{\sigma(2i - 1),\sigma(2i)}.
\]

\begin{Theorem}[{\cite[Theorem 1.2.1]{borodin_averages}}]
 \label{th:idbeta1} Let $N$, $m$ be positive integers and set of complex numbers
\[
\bm{x} = \{x_{1}, \ldots, x_{m}\},\qquad \bm{x}' = \{x'_{1}, \ldots,x_{m}'\},
\]
such that $\bm{x}' \cap A = \varnothing$. We have
 \begin{equation*}
 \Biggl\langle\prod_{j=1}^{m}\frac{\det(x_{j} - \Lambda)}{{\det} (x_{j}' - \Lambda)}\Biggr\rangle_{2N}^{V}
 = \frac{\Delta(\bm{x},\bm{x}')}{\Delta(\bm{x})\Delta(\bm{x}')}\pf\bigl(\mathcal{M}^{(1)}(\bm{x},\bm{x}')\bigr),
 \end{equation*}
 in terms of the antisymmetric matrix of size $2m$
 \begin{equation}
 \label{blockm1}
 \mathcal{M}^{(1)}(\bm{x},\bm{x}') = \begin{pmatrix} \mathcal{M}^{(1)}_{++}(x_i,x_j) & \mathcal{M}^{(1)}_{+-}\big(x_i,x_j'\big) \\ \mathcal{M}^{(1)}_{-+}(x_i',x_j) & \mathcal{M}^{(1)}_{--}\big(x_i',x_j'\big) \end{pmatrix},
\end{equation}
 with entries
 \begin{gather*}
 \mathcal{M}^{(1)}_{++}(x,\tilde{x}) = (2N -1)2N(x - \tilde{x})\frac{Z^{\frac{2N}{2N-2}V}_{2N-2}}{Z^{V}_{2N}}\mean{\det(x - \Lambda)\det(\tilde{x}- \Lambda)}^{\frac{2N}{2N - 2}V}_{2N-2},\\
 \mathcal{M}^{(1)}_{+-}(x,x') = \frac{1}{x-x'}\mean{\frac{\det(x - \Lambda)}{\det(x' - \Lambda)}}^{V}_{2N} = - \mathcal{M}^{(1)}_{-+}(x',x), \\
 \mathcal{M}_{--}^{(1)}(x',\tilde{x}') = \frac{x'- \tilde{x}'}{(2N+1)(2N+2)}\frac{Z^{\frac{2N}{2N + 2}V}_{2N+2}}{Z^{V}_{2N}} \mean{\frac{1}{\det(x' - \Lambda)\det(\tilde{x}' - \Lambda)}}^{\frac{2N}{2N+2}V}_{2N+2}.
 \end{gather*}
\end{Theorem}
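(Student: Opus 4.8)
Since Theorem~\ref{th:idbeta1} is the orthogonal ($\beta=1$) counterpart of the determinantal identity in Theorem~\ref{beta2form}, the plan is to reprove it by the de~Bruijn integration method tailored to a Pfaffian point process. First I would pass to the ordered chamber $\lambda_1<\cdots<\lambda_{2N}$, on which $|\Delta(\bm\lambda)|=\Delta(\bm\lambda)$, absorbing a factor $(2N)!$; with $2N$ eigenvalues and $\beta=1$ the per-eigenvalue weight is $w(\lambda)=\ee^{-NV(\lambda)}$ since $\tfrac{\beta\cdot 2N}{2}=N$. The engine is de~Bruijn's formula, which turns an ordered integral of a determinant $\det[\phi_k(\lambda_i)]_{i,k=1}^{2N}$ against $\prod_i w(\lambda_i)\dd\lambda_i$ into the Pfaffian of the antisymmetric pairing
\[
\langle \phi_j,\phi_k\rangle=\iint \operatorname{sgn}(\eta-\xi)\,\phi_j(\xi)\phi_k(\eta)\,w(\xi)w(\eta)\,\dd\xi\,\dd\eta .
\]
The natural system of functions is the family of monic skew-orthogonal polynomials adapted to this pairing, whose normalisations reproduce the $\beta=1$ partition functions.

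The next step is to fold the external data into the integrand. The numerators enter through the Vandermonde identity $\Delta(\bm\lambda)\prod_j\det(x_j-\Lambda)=\pm\,\Delta(\bm\lambda,\bm x)/\Delta(\bm x)$, while the denominators enter through the Cauchy determinant $\det\bigl[(x_j'-\lambda_i)^{-1}\bigr]$; together these produce the prefactor $\Delta(\bm x,\bm x')/(\Delta(\bm x)\Delta(\bm x'))$ announced in the statement, and an enlarged determinant in the $\lambda$'s in which the points $\bm x,\bm x'$ survive as a border. Applying de~Bruijn to the eigenvalue rows yields a bordered Pfaffian; choosing the polynomial basis to be the skew-orthogonal completion, the pairing matrix acquires a block form whose pure skew-orthogonal block is the standard one, and a Pfaffian Schur-complement (Pfaffian-of-blocks) identity factors it out---evaluating it, via skew-orthogonality, to a product of partition-function normalisations---and leaves exactly a Pfaffian of size $2m$ indexed by $\bm x\cup\bm x'$. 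Reading each $2\times2$ pairing backwards as a two-variable average in a reduced ensemble identifies the three blocks: two numerator insertions consume two integration variables and give the size-$(2N-2)$ average in $\mathcal M^{(1)}_{++}$, two denominators add two variables and give the size-$(2N+2)$ average in $\mathcal M^{(1)}_{--}$, and a mixed pair keeps size $2N$ in $\mathcal M^{(1)}_{+-}$; the shifts $2N\mp2$ are precisely what generate the combinatorial prefactors $(2N-1)2N$ and $[(2N+1)(2N+2)]^{-1}$ together with the ratios $Z_{2N\mp2}/Z_{2N}$.

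I expect the genuine difficulty to lie in two places. First, handling the denominators $1/\det(x_j'-\Lambda)$ on the same footing as the numerators while preserving the Pfaffian structure: because $\bm x'\cap A=\varnothing$ the Cauchy factors are regular on $A$, but embedding them into the skew-orthogonal framework (equivalently, supplementing the polynomial basis with their Cauchy transforms) and checking that the bordered pairing still collapses cleanly is delicate. Second, the bookkeeping of signs and normalisations through the $\operatorname{sgn}$-kernel and the Schur-complement step: verifying the overall sign, the antisymmetry $\mathcal M^{(1)}_{+-}(x,x')=-\mathcal M^{(1)}_{-+}(x',x)$, and matching each factor to the reduced-size averages. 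A clean alternative is induction on $m$: the base case $m=1$ is immediate, since the diagonal entries $\mathcal M^{(1)}_{++}(x,x)$ and $\mathcal M^{(1)}_{--}(x',x')$ vanish by the $(x-\tilde x)$ and $(x'-\tilde x')$ factors and $\pf\bigl(\begin{smallmatrix}0&a\\-a&0\end{smallmatrix}\bigr)=a=\mathcal M^{(1)}_{+-}(x,x')$ reproduces the left-hand side after cancelling the prefactor $\Delta(x,x')/(\Delta(x)\Delta(x'))=x-x'$; the inductive step would isolate one external pair via a Pfaffian cofactor expansion, but deriving the recursion relating the $m$- and $(m-1)$-point ratios is essentially the same obstacle as the collapse above.
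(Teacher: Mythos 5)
The paper offers no proof of this statement: Theorem~\ref{th:idbeta1} is imported verbatim from Borodin and Strahov \cite[Theorem 1.2.1]{borodin_averages} and is used as a black box, so there is no internal argument to compare yours against. Judged on its own terms, your proposal is a reasonable outline of the standard route (de~Bruijn's integration formula, skew-orthogonal polynomials, bordered Pfaffians), which is indeed close in spirit to how Borodin and Strahov operate, but it is a plan rather than a proof. The two places you yourself flag as ``the genuine difficulty'' --- embedding the Cauchy factors $1/(x_j'-\lambda_i)$ into the skew-orthogonal framework, and the Pfaffian Schur-complement collapse from a bordered Pfaffian of size $2N+2m$ down to one of size $2m$ --- are precisely the content of the theorem, and they are left undone. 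Your base case $m=1$ is a tautology (it merely unwinds the definition of $\mathcal{M}^{(1)}_{+-}$), and you concede that the inductive step reduces to the same unresolved collapse.

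To make this concrete: after de~Bruijn, the pairings you must evaluate are skew-symmetric integrals of the form $\iint \operatorname{sgn}(\eta-\xi)\,\phi(\xi)\psi(\eta)\,w(\xi)w(\eta)\,\dd\xi\,\dd\eta$ where $\phi,\psi$ range over skew-orthogonal polynomials and Cauchy kernels. The theorem asserts that, after the collapse, each surviving $2\times 2$ pairing equals a one- or two-point average in an ensemble of $2N-2$, $2N$, or $2N+2$ eigenvalues \emph{with the rescaled potential} $\tfrac{2N}{2N\mp 2}V$ and the exact prefactors $(2N-1)2N$ and $[(2N+1)(2N+2)]^{-1}$. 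That re-identification is forced by the $N$-dependence of the weight $\ee^{-\frac{\beta N}{2}V}$ in \eqref{eq:measure-free} and requires Christoffel--Uvarov-type manipulations for skew-orthogonal polynomials (adding or removing integration variables changes both the number of eigenvalues and the effective potential simultaneously); nothing in your sketch produces these specific normalisations, you only assert that they ``are precisely what generate'' them. Likewise, verifying that the minor-summation over the skew-orthogonal block evaluates to the ratio of $\beta=1$ partition functions, with the correct overall sign and the antisymmetry $\mathcal{M}^{(1)}_{+-}(x,x')=-\mathcal{M}^{(1)}_{-+}(x',x)$, is asserted rather than shown. As it stands the proposal would need all of this filled in before it could count as a proof; citing \cite{borodin_averages}, as the paper does, is the honest alternative.
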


\subsubsection[Symplectic ensembles: beta = 4]{Symplectic ensembles: $\boldsymbol{\beta = 4}$}
The case $\beta = 4$ is very similar to the case $\beta = 1$.

\begin{Theorem}[{\cite[Theorem 1.2.1]{borodin_averages}}]
 \label{th:idbeta4} Let $N$, $m$ be positive integers and two sets of complex numbers
\[
\bm{x} = \{x_{1}, \ldots, x_{m}\},\qquad \bm{x}' = \{x'_{1}, \ldots, x'_{m}\},
\]
 such that $\bm{x}' \cap A= \varnothing$. We have
 \begin{equation*}
 \Biggl\langle\prod_{j=1}^{m}\frac{\det(x_{j} - \Lambda)^{2}}{\det\big(x'_{j} - \Lambda\big)^{2}}\Biggr\rangle_{N}^{V}
 = \frac{\Delta(\bm{x},\bm{x}')}{\Delta(\bm{x})\Delta(\bm{x}')}\pf\bigl(\mathcal{M}^{(4)}(\bm{x},\bm{x}')\bigr),
 \end{equation*}
 with the same block structure as \eqref{blockm1} but entries
 \begin{gather*}
 \mathcal{M}^{(4)}_{++}(x,\tilde{x}) = N\frac{Z^{\frac{N}{N-1}V}_{N-1}}{Z^{V}_{N}}(x- \tilde{x})\bigl\langle\det(x - \Lambda)^{2}\det(\tilde{x} - \Lambda)^{2}\bigr\rangle^{\frac{N}{N - 1}V}_{N-1},\\
 \mathcal{M}^{(4)}_{+-}(x,x')  = \frac{1}{x-x'}\Biggl\langle\frac{\det(x - \Lambda)^{2}}{\det(x' - \Lambda)^{2}}\Biggr\rangle^{V}_{N} = -\mathcal{M}^{(4)}_{-+}(x',x), \\
 \mathcal{M}^{(4)}_{--}(x',\tilde{x}') = \frac{1}{N+1}\frac{Z^{\frac{N}{N+1}V}_{N+1}}{Z^{V}_{N}}(x'- \tilde{x}')\mean{\frac{1}{\det(x' - \Lambda)^{2}\det(\tilde{x}' - \Lambda)^{2}}}^{\frac{N}{N+1}V}_{N}.
 \end{gather*}
\end{Theorem}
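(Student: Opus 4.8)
The plan is to establish the formula by the symplectic analogue of the route used for $\beta=1$ in Theorem~\ref{th:idbeta1}, reducing the $N$-fold eigenvalue integral to a Pfaffian of fixed size $2m$. Expanding the definition, the left-hand side equals
\[
\frac{1}{Z_N^V}\int_{A^N}\Delta(\bm{\lambda})^4\prod_{j=1}^m\prod_{i=1}^N\frac{(x_j-\lambda_i)^2}{(x_j'-\lambda_i)^2}\,\ee^{-\frac{\beta N}{2}\sum_{i}V(\lambda_i)}\prod_{i=1}^N\dd\lambda_i,\qquad \beta=4.
\]
The first step is to linearise the Vandermonde power by the confluent Vandermonde identity: for any family $(p_a)_{a\geq0}$ of monic polynomials with $\deg p_a=a$ one has, up to sign, $\Delta(\bm{\lambda})^4=\det\big[p_a(\lambda_k),\,p_a'(\lambda_k)\big]_{a\in[2N],\,k\in[N]}$, a $2N\times2N$ determinant in which each eigenvalue $\lambda_k$ contributes a value column and a derivative column and the rows are indexed by $a\in\{0,\dots,2N-1\}$. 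This is the representation that makes the symplectic de Bruijn integration formula applicable.

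The $2m$ external characteristic polynomials must then be incorporated. The numerator factors $\prod_i(x_j-\lambda_i)^2$ are symmetric polynomials in $\bm{\lambda}$ and enlarge the confluent determinant by rows and columns carried by the points $x_j$ together with their derivative (confluent) partners; the denominator factors $\prod_i(x_j'-\lambda_i)^{-2}$ are the delicate ingredient, and I would represent each of them through the resolvent/Cauchy kernel so as to keep the integrand in antisymmetric-determinant form. Together the external points are responsible, after integration, for the Cauchy-type scalar prefactor $\Delta(\bm{x},\bm{x}')/(\Delta(\bm{x})\Delta(\bm{x}'))$ of the statement, while the $\lambda$-integrations build the four kernel blocks: the numerator slots feed $\mathcal{M}^{(4)}_{++}$ (a two-point average of the products $\det(x-\Lambda)^2\det(\tilde x-\Lambda)^2$), whereas the denominator slots feed $\mathcal{M}^{(4)}_{+-}$, $\mathcal{M}^{(4)}_{-+}$ and $\mathcal{M}^{(4)}_{--}$ (averages of inverse squared characteristic polynomials).

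The core reduction is then the symplectic de Bruijn formula
\[
\int_{A^N}\det\big[\phi_a(\lambda_k),\,\phi_a'(\lambda_k)\big]_{a\in[2N],\,k\in[N]}\prod_{k}\dd\mu(\lambda_k)
= N!\,\pf\Big[\textstyle\int_A\big(\phi_a\phi_b'-\phi_b\phi_a'\big)\dd\mu\Big]_{a,b\in[2N]},
\]
applied with $\dd\mu(\lambda)=\ee^{-\frac{\beta N}{2}V(\lambda)}\dd\lambda$. After this step the integral becomes a single Pfaffian of size $2(N+m)$ indexed by the polynomial slots together with the external points. To collapse it to the claimed $2m\times2m$ Pfaffian I would choose the $p_a$ to be skew-orthogonal polynomials for the symplectic skew product, so that the block coming from the $\lambda$-slots becomes block-diagonal and can be evaluated explicitly; a Pfaffian analogue of the Christoffel--Darboux/Schur-complement identity then leaves a Pfaffian over the external indices whose entries are exactly $\mathcal{M}^{(4)}_{++}$, $\mathcal{M}^{(4)}_{+-}$, $\mathcal{M}^{(4)}_{-+}$, $\mathcal{M}^{(4)}_{--}$, each expressed as a two-point average in the ensemble of size $N\pm1$ with the correspondingly rescaled potential $\frac{N}{N\mp1}V$ and its ratio of partition functions.

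The main obstacle is the simultaneous and consistent treatment of the numerator and denominator insertions inside the Pfaffian formalism, together with the exact bookkeeping of the scalar prefactors: the shifted sizes $N\pm1$, the rescaled potentials, the partition-function ratios such as $N\,Z^{\frac{N}{N-1}V}_{N-1}/Z^V_N$, and---most error-prone---the global sign and combinatorial normalisation produced by the confluent-determinant and de Bruijn steps. An alternative would be to deduce the identity from the $\beta=1$ formula of Theorem~\ref{th:idbeta1} through the classical duality between orthogonal and symplectic ensembles, but matching the system sizes and the inverse characteristic polynomials makes the direct symplectic computation cleaner.
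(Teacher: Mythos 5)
First, a point of order: the paper does not prove this statement at all --- it is quoted verbatim from Borodin and Strahov \cite[Theorem~1.2.1]{borodin_averages} and used as a black box. So there is no ``paper's own proof'' to compare against; what you have written is a sketch of how one might reprove the cited result.

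As a sketch it follows a recognized strategy (confluent Vandermonde, de Bruijn's symplectic integration formula, skew-orthogonal polynomials), but it has a concrete error and leaves the genuinely hard steps unaddressed. The error: if each external point $x_j$ enters the confluent Vandermonde ``together with its derivative partner'', i.e.\ as a doubled node, the generalized Vandermonde evaluates to $\prod_{i,j}(x_j-\lambda_i)^{4}$, not $\prod_{i,j}(x_j-\lambda_i)^{2}$; since the $\lambda_i$ are already doubled, the numerator points must enter as \emph{simple} columns to produce $\det(x_j-\Lambda)^2$. This is not cosmetic, because the whole block structure of the final Pfaffian depends on this bookkeeping. Beyond that, the two steps where all the content of the theorem lives are only gestured at: (i) the denominators $\det(x'_j-\Lambda)^{-2}$, which is precisely what distinguishes Borodin--Strahov's result from the classical product formulas --- ``represent them through the resolvent/Cauchy kernel so as to keep the integrand in antisymmetric-determinant form'' is not a construction; and (ii) the re-identification of the entries of the collapsed Pfaffian as expectation values in ensembles of $N\pm 1$ eigenvalues with the rescaled potentials $\frac{N}{N\mp1}V$ and the stated partition-function ratios. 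The latter requires separate reproducing identities (Christoffel--Darboux-type formulas for the two-point averages of products, respectively of inverse products, of characteristic polynomials), which are themselves theorems of comparable depth to the statement being proved; note also that the rescaling $\frac{N}{N\mp1}V$ is forced by the $N$-dependent normalisation $\ee^{-\frac{\beta N}{2}V}$ of the weight, a point your sketch asserts but never derives. As it stands the proposal is a plausible programme rather than a proof, and if you want to fill it in you should simply follow \cite{borodin_averages}.
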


\section{Geometry of the spectral curves}

This section collects the information on theta functions and geometry of the spectral curve that will be needed later to present the large $N$ asymptotics in the $\beta$-ensembles. We only give the details necessary to understand the formulae of Sections~\ref{sec:expans-part} and \ref{sec:derivation-formulae} in a self-contained way. We refer to the many textbooks address in details theta functions and the geometry of Riemann surfaces, for instance \cite{Bertolac,Farkas-Kra, Fay,Tata1}.

\subsection{Theta functions}
\label{sec:theta-function}

Let us recall the definition and properties of the theta function.
\begin{Definition}
 Let $\bm{\tau}$ be a complex $g \times g$ symmetric matrix such that $\Im \boldsymbol{\tau}$ is positive definite. The
 theta function with characteristics
 $\boldsymbol{\mu},\boldsymbol{\nu} \in \mathbb{R}^{g}$ is the function defined by
\[
 \forall \boldsymbol{z} \in \mathbb{C}^{g} \qquad \vartheta_{\bm{\mu},\bm{\nu}}(\bm{z}|\boldsymbol{\tau}) = \sum_{\bm{n}\in\ZZ^{g}}\exp({\rm i}\pi(\bm{n}+\bm{\mu})\cdot \bm{\tau}(\bm{n}+\bm{\mu}) + 2{\rm i}\pi(\bm{n}+\bm{\mu})\cdot(\bm{z} + \bm{\nu})).
\]
We set $\theta := \vartheta_{\boldsymbol{0},\boldsymbol{0}}$.
\end{Definition}
The condition $\Im\bm{\tau} > 0$ ensures that the function is well
defined. Let us define the period lattice associated to $\bm{\tau}$ as $\mathbb{L} = \ZZ^{g} \oplus \bm{\tau}(\ZZ^{g})$. The theta function is quasi-periodic: for $\bm{m}, \bm{n}\in \ZZ^{g}$ we have
$\bm{m} + \bm{\tau}(\bm{n})\in\mathbb{L}$, and for any $\bm{z} \in \CC^{g}$
\begin{equation*}
 \vartheta_{\bm{\mu},\bm{\nu}} (\bm{z} + \bm{m} + \bm{\tau}(\bm{n}) |\bm{\tau} ) ={\rm e}^{2{\rm i}\pi \bm{m}\cdot \bm{\mu} - {\rm i}\pi\bm{n} \cdot (\bm{\tau}(\bm{n}) + 2\bm{z} + 2\bm{\nu})}\vartheta_{\bm{\mu},\bm{\nu}}(\bm{z}|\bm{\tau}).
\end{equation*}
\begin{Definition}
An odd half-integer characteristic is $\bm{c} = \frac{1}{2}\bm{e} + \frac{1}{2}\bm{\tau}(\bm{e}')$, with $\bm{e},\bm{e}' \in \ZZ^{g}$ such that $\bm{e} \cdot \bm{e}' \in 2\mathbb{Z} + 1$.
 \end{Definition}
 By direct computation, if $\bm{c}$ is a odd half-integer characteristic, then $\theta(\bm{c}|\bm{\tau}) = 0$.

\subsection{Geometry of Riemann surfaces}

\subsubsection{Basis of cycles and forms}
\label{sec:base-cycles-forms}
Let $\widehat{C}$ be a compact Riemann surface of genus $g$. The first
homology group $H_{1}(\widehat{C}; \mathbb{Z})$ has a basis $(\A_{h}, \B_{h})_{h = 1}^{g}$ which can be chosen to have the following properties under the intersection pairing
\begin{equation*}
 \forall h,k \in [g]\qquad \A_{h}\cap\A_{k} = 0,\qquad \B_{h}\cap\B_{k} = 0,\qquad \A_{h}\cap \B_{k} = \delta_{h,k}.
\end{equation*}
Such a basis is called a symplectic basis of homology, and $\widehat{C}$ equipped with such a basis is called a marked Riemann surface. We can for instance choose a point $p_{0}$ and simple closed curves on $\widehat{C}$ representing the $2g$ classes $(\A_{h}, \B_{h})_{h = 1}^g$
such that all the curves intersect each other at $p_{0}$ only. For spectral curves of $\beta$-ensembles, we will later work with another set of representatives (Section~\ref{sec:constr-spectr-curve}). We keep the same notation for homology classes and their representatives. The
surface $\widehat{C}^{0} = \widehat{C} \setminus \bigcup_{h=1}^{g}(\A_{h}\cup\B_{h})$ is
then simply-connected. The $\A$-cycles $(\A_h)_{h = 1}^{g}$ determine a~dual basis of holomorphic 1-forms
$(\dd u_{h})_{h = 1}^{g}$, such that
\begin{equation*}
 \begin{split}
\forall h,k \in [g]\qquad \oint_{\A_{h}}\dd u_{k} = \delta_{h,k}.
 \end{split}
\end{equation*}
The matrix of periods $\bm{\tau}$ is then defined by
\begin{equation}
\label{tauhk} \forall h,k \in [g] \qquad \tau_{h,k} = \oint_{\B_{h}}\dd u_{k}.
\end{equation}
It is symmetric and $\operatorname{Im}(\bm{\tau})$ is definite positive, in particular we can consider the theta function with matrix $\frac{\beta}{2} \bm{\tau}$ for any $\beta > 0$. The theta function with matrix equal to \eqref{tauhk} is called the Riemann theta function.

\subsubsection{Abel map}
\label{SecAbel} With the 1-forms $(\dd u_{h})_{h = 1}^{g}$ defined in Section
\ref{sec:base-cycles-forms}, we can introduce the Abel map.
\begin{Definition}
 Choose a base point $p_{0}$ in $\widehat{C}$. The Abel map
 $\bm{u} \colon \widehat{C}^{0} \to \C^{g}$ is defined by
 \begin{equation*}
 \begin{split}
 u_{i}(z) &= \int_{p_{0}}^{z}\dd u_{i},
 \end{split}
 \end{equation*}
 where the path of integration is in $\widehat{C}^0$.
 \end{Definition}

The definition of the Abel map depends on a choice of base point $p_{0}$.
However, we will often consider differences $\bm{u}(z) - \bm{u}(w)$ of Abel
maps, which are independent of $p_{0}$. Depending on the context, we may also consider the Abel map as a map $\bm{u} \colon \widetilde{C} \rightarrow \mathbb{C}^g$ defined on the universal cover~$\widetilde{C}$ of $\widehat{C}$ based at $p_0$. We say that $\boldsymbol{c}$ is non-singular if $\theta(\bm{c} + \bm{u}(z) - \bm{u}(w)|\bm{\tau})$ is not identically~$0$ when $z,w \in \widetilde{C}$. Non-singular odd half-integer characteristics exist, and in what follows we fix one. \looseness=-1

\subsubsection{Prime form}

We introduce the holomorphic 1-form
\begin{equation*}
 \omega_{\bm{c}} = \sum_{h=1}^{g}\partial_{z_{h}}\theta(\bm{z}|\bm{\tau})\big|_{\bm{z} = \bm{c}}\,\dd u_{h}.
\end{equation*}
The prime form is
\begin{equation}
\label{Eprimtheta}
E(z_1,z_2) = \frac{\theta(\bm{c} + \bm{u}(z_1) - \bm{u}(z_2)|\bm{\tau})}{\sqrt{\omega_{\bm{c}}(z_1)}\sqrt{\omega_{\bm{c}}(z_2)}}.
\end{equation}
It is defined as a holomorphic bispinor on $\widetilde{C} \times \widetilde{C}$, i.e., a $\bigl(-\frac{1}{2}\bigr) \otimes \bigl(-\frac{1}{2}\bigr)$ form. It has zeros only at $z_1 = z_2$, and in local coordinates $\zeta$, we have
\[
E(z_1,z_2) \mathop{\sim}_{z_1 \rightarrow z_2} \frac{\zeta(z_1) - \zeta(z_2)}{\sqrt{\dd \zeta(z_1) \dd \zeta(z_2)}}.
\]
The prime form on the Riemann sphere $\widehat{\mathbb{C}}$ reads
\[
E_0(x_1,x_2) = \frac{x_1 - x_2}{\sqrt{\dd x_1 \dd x_2}}.
\]
Given a meromorphic function $X \colon \widehat{C} \rightarrow \widehat{\mathbb{C}}$, we can define the relative prime form
\begin{equation*}
\tilde{E}(z_1,z_2) = \frac{E(z_1,z_2)}{E_0(X(z_1),X(z_2))}.
\end{equation*}
We observe that $\tilde{E}(z_1,z_2)$ is a function on $\widetilde{C} \times \widetilde{C}$, such that
\begin{equation}
\label{limEtilde} \lim_{z_2 \rightarrow z_1} \tilde{E}(z_1,z_2) = 1.
\end{equation}

\subsubsection{Fundamental bidifferential}
\begin{Definition}
 The fundamental bidifferential $B(z, w)$ is the unique bidifferential \big(i.e., a $1 \otimes 1$ form on \smash{$\widehat{C} \times \widehat{C}$\big)} such that
 \begin{enumerate}\itemsep=0pt
 \item Symmetry: $B(z, w) = B(w, z)$.
 \item Normalisation: $\forall h \in [g]$ $\oint_{\A_{h}}B(\cdot, w) = 0$.
 \item Singularities: $B(z, w)$ is meromorphic with only a double pole at $z = w$, and if $\zeta$ is a local coordinate, we have
\[
 B(z, w) \mathop{=}_{z \rightarrow w} \biggl(\frac{1}{(\zeta(z) - \zeta(w))^{2}} + S_{B,\zeta}(w) + \order{\zeta(z) - \zeta(w)}\biggr)\dd \zeta(z) \dd \zeta(w).
\]
for some function $S_{B,\zeta}$ locally defined on $\widehat{C}$.
 \end{enumerate}
\end{Definition}

The fundamental bidifferential can be expressed as
\begin{equation*}
 \begin{split}
 B(z,w) &= \dd_{z}\dd_{w}\ln \theta(\bm{c} + \bm{u}(z) - \bm{u}(w)),
 \end{split}
\end{equation*}
and this expression is independent on the choice of a non-singular odd half-integer characteristics~$\bm{c}$. Given $p,q \in \widehat{C}$ and a choice of path $\gamma_{p,q}$ from $p$ to $q$, we define the meromorphic~form
\begin{equation*}
 \begin{split}
 \dd S_{p,q}(z) = \int_{\gamma_{p,q}} B(z, \cdot).
 \end{split}
\end{equation*}
It has two poles of order 1 in $p$ and $q$, with respective residue
$-1$ and $+1$. Equivalently, we can consider that it is specified by the choice of two points $p$ and $q$ in the universal cover $\widetilde{C}$. The prime form appears in the following computation.
\begin{Lemma}
\label{IntB} For $i = 1,2$, let $z_i,\tilde{z}_i \in \widetilde{C}$ and $\gamma_i$ a path from $\tilde{z}_i$ to $z_i$. Then
\[
\int_{\gamma_1} \int_{\gamma_2} B = \int_{\gamma_1} \dd S_{\tilde{z}_2,z_2} = \ln\biggl(\frac{E(z_1,z_2)E(\tilde{z}_1,\tilde{z}_2)}{E(z_1,\tilde{z}_2)E(\tilde{z}_1,z_2)}\biggr).
\]
\end{Lemma}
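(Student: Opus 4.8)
The plan is to reduce everything to the logarithmic-derivative expression for the fundamental bidifferential and then integrate twice, invoking the definition of the prime form only at the very end. The first equality is essentially a restatement of the definition of $\dd S_{p,q}$ together with Fubini. Since $\gamma_2$ runs from $\tilde{z}_2$ to $z_2$, by definition $\dd S_{\tilde{z}_2,z_2}(z) = \int_{\gamma_2} B(z,\cdot)$; integrating this $1$-form in $z$ along $\gamma_1$ gives precisely the iterated integral $\int_{\gamma_1}\int_{\gamma_2} B$. The only thing to verify here is that the order of integration is immaterial, which holds because $B(z,w)$ is holomorphic off its diagonal pole and the two paths can be taken disjoint in $\widetilde{C}$.

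For the second equality I would use $B(z,w) = \dd_z \dd_w \ln\theta(\bm{c} + \bm{u}(z) - \bm{u}(w))$. Integrating first in $w$ along $\gamma_2$ yields
\[
\int_{\gamma_2} B(z,\cdot) = \dd_z \ln\frac{\theta(\bm{c} + \bm{u}(z) - \bm{u}(z_2))}{\theta(\bm{c} + \bm{u}(z) - \bm{u}(\tilde{z}_2))},
\]
which is exactly $\dd S_{\tilde{z}_2,z_2}(z)$ (its residues $+1$ at $z_2$ and $-1$ at $\tilde{z}_2$ match those of $\dd S_{\tilde{z}_2,z_2}$, since the theta numerator and denominator vanish there). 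Integrating once more in $z$ along $\gamma_1$ gives
\[
\int_{\gamma_1}\int_{\gamma_2} B = \ln\frac{\theta(\bm{c} + \bm{u}(z_1) - \bm{u}(z_2))\,\theta(\bm{c} + \bm{u}(\tilde{z}_1) - \bm{u}(\tilde{z}_2))}{\theta(\bm{c} + \bm{u}(z_1) - \bm{u}(\tilde{z}_2))\,\theta(\bm{c} + \bm{u}(\tilde{z}_1) - \bm{u}(z_2))}.
\]

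It then remains to rewrite the right-hand side via the prime form. Using \eqref{Eprimtheta} I would form the cross-ratio $\frac{E(z_1,z_2)E(\tilde{z}_1,\tilde{z}_2)}{E(z_1,\tilde{z}_2)E(\tilde{z}_1,z_2)}$ and observe that each half-order factor $\sqrt{\omega_{\bm{c}}(\cdot)}$ attached to $z_1$, $z_2$, $\tilde{z}_1$, $\tilde{z}_2$ occurs once in the numerator and once in the denominator, so the spinorial prefactors cancel completely. The cross-ratio therefore equals the ratio of thetas above, and taking logarithms finishes the proof.

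The hard part will be the careful bookkeeping of multivaluedness rather than any genuine computation. Since $\ln\theta$ is only locally defined, one must work on the universal cover $\widetilde{C}$, where $\bm{u}$ and hence the theta arguments are single-valued, and select continuous branches of the logarithm along each path (the paths avoid the zero divisor $z=w$ of the prime form). Both sides are then defined modulo $2\pi{\rm i}\,\ZZ$, and the identity is understood with these matching branch choices; this ambiguity is the natural common meaning of the two sides. The only other point worth stating explicitly is the cancellation of the half-order differentials in the cross-ratio of prime forms, which is what makes the right-hand side a genuine scalar function rather than a nonzero-weight form.
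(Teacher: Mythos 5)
Your argument is correct and is the standard derivation: the paper itself states Lemma~\ref{IntB} without proof (it is a classical fact going back to Fay), and your route --- write $B = \dd_z\dd_w\ln\theta(\bm{c}+\bm{u}(z)-\bm{u}(w))$, integrate twice along $\gamma_2$ and $\gamma_1$, and observe that the $\sqrt{\omega_{\bm{c}}}$ half-order factors cancel in the cross-ratio of prime forms --- is exactly how one proves it. The only bookkeeping point you gloss over is that $\theta(\bm{c}+\bm{u}(z)-\bm{u}(w))$ also vanishes on a $w$-independent divisor $\Delta$ of degree $g-1$ (the zero divisor of $\omega_{\bm{c}}$), so the individual $\ln\theta$ terms can be singular where the cross-ratio is not; since these zeros are independent of the second argument they drop out of $\dd_z\dd_w\ln\theta$ and of the final ratio, so this does not affect the conclusion.
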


\subsubsection{Decomposition of meromorphic forms}
\label{sec:decomp-merom-forms}

One distinguishes between three kinds of meromorphic forms:
\begin{itemize}\itemsep=0pt
 \item holomorphic 1-forms (first kind);
 \item meromorphic 1-forms with vanishing residues (second kind);
 \item meromorphic 1-form with non-vanishing residues (third kind).
\end{itemize}

The space of first kind differentials has for basis $(\dd u_{h})_{h = 1}^{g}$, which is dual to the basis of the $\A$-cycles.
Assume that a choice of local coordinate $\zeta_p$ near each point $p \in \widehat{C}$ has been made. A~basis of the space of second kind differentials is then given by
\begin{equation*}
 \dd B_{p,k}(z) = \mathop{\mathrm{Res}}_{z' = p}\zeta_{p}(z')^{-k}B(z', z)
\end{equation*}
for $p \in \widehat{C}$ and $k \in \mathbb{Z}_{> 0}$. Given the properties of the fundamental bidifferential, its only pole is at $p$ with order $(k + 1)$, where it behaves like
\[
\dd B_{p,k}(z) \mathop{=}_{z \rightarrow p} -\dd\bigl(\zeta_{p}(z)^{-k}\bigr) + \order{\dd \zeta_p(z)}.
\]
Besides, we have $\oint_{\A_h} \dd B_{p,k} = 0$ for any $h \in [g]$.

Assume that for each $p \in \widehat{C}^0$, a choice of path $\gamma_{p_0,p}$ from $p_0$ to $p$ in $\widehat{C}^0$ has been made. An example of third kind differential is
\begin{equation}
 \label{Spp} \dd S_{p_0,p}(z) = \int_{\gamma_{p_0,p}} B(z, \cdot).
\end{equation}
It has two poles of order 1 in $p_0$ and $p$, respectively of residue $-1$ and $+1$, and has zero $\A$-periods.

Every meromorphic $1$-form $\phi$ can be decomposed uniquely as a sum of first kind, second kind and third kind differentials:\footnote{Given $\phi$, we can always perturb the representatives of $\A$ and $\B$-cycles so that all poles are contained in $\widehat{C}^0$ and we can use \eqref{Spp}.}
\begin{gather*}
\phi(z) = \Biggl(\sum_{h = 1}^{g} \oint_{\A_h} \phi\Biggr) \dd u_h(z) + \sum_{\substack{p\ \text{simple pole} \\ \text{of}\ \phi}} \bigl(\mathop{\text{Res}}_{p} \phi\bigr) \dd S_{p_0,p}(z) \\
\hphantom{\phi(z) =}{}
 + \sum_{k \geq 1} \sum_{\substack{p\ \text{pole of} \\ \text{order}\ (k + 1)\ {\rm of}\ \phi}} \Bigl(\mathop{\mathrm{Res}}_{p} \zeta_p^{k} \phi \Bigr)\frac{\dd B_{p,k}(z)}{k}.
\end{gather*}

\subsection{The spectral curve}

We elaborate on Section~\ref{eqmessec} and construct the spectral curve associated to the equilibrium measure of ${\beta}$-ensembles, for the moment indifferently in the unconstrained case or the fixed filling fraction case. This prepares us for Section~\ref{sec:expans-part} where the asymptotics in the $\beta$-ensembles is described solely in terms of the geometry of this spectral curve.

\subsubsection{Construction of the marked Riemann surface}
\label{sec:constr-spectr-curve}

The equation \eqref{eq:variational} satisfied by the Stieltjes transform of the equilibrium measure has two solutions:
\begin{gather}
\nonumber
 F_{+}(x) = \frac{V'(x)}{2} + y(x) = W_{1}(x),\\
 F_{-}(x) = \frac{V'(x)}{2} - y(x) = V'(x) - W_{1}(x),\label{Fpm}
\end{gather}
where $y(x) = -\frac{1}{2}M(x)s(x)$ and $y(x)^2 = \frac{1}{4}V'(x)^2 - P(x)$. After the birational transformation $(x,y) \mapsto \bigl(x,s = -\frac{2y}{M(x)}\bigr)$, we have the equation of an hyperelliptic curve
\[
 s^{2} = \sigma(x) = \prod_{h = 0}^{g} (x - a_h)(x - b_h),
\]
where the Weierstra\ss{} points $a_0,b_0,\ldots,a_g,b_g$ are real. This curve is constructed from two sheets homeomorphic to
$\CC\setminus S$, which are glued together along $S$. The two sheets are embedded into the curve
by
\[
 \iota_{\pm }\colon \   \CC \setminus S   \longrightarrow   C \subset \CC \times \CC,\qquad  x   \longmapsto   (x,\pm s(x))  .
\]
We denote the sheets by $C_{\pm} = \iota_{\pm}(\C \setminus S)$, and $\widehat{C}_{\pm}$ are the sheets including their point at infinity. Adding these points at infinity $\infty_{\pm}$ to $C$, we get a compact Riemann surface $\widehat{C}$. We define the projection map as the meromorphic function
\[
 X \colon \  \widehat{C}   \longrightarrow   \widehat{\mathbb{C}},\qquad (x, s)   \longmapsto   x .
\]
This function has simple poles at $\infty_{\pm}$ and it defines a degree $2$ branched covering of the Riemann sphere~$\hat{\CC}$, whose branch points are the zeros of $\sigma$. It allows us to define local coordinates $\zeta_{p}$ around any point $p \in \widehat{C}$, which is used in Section~\ref{sec:decomp-merom-forms} to define a basis of meromorphic differentials:
\begin{itemize}\itemsep=0pt
 \item If $p$ is a ramification point, we take $\zeta_{p}(z) = \sqrt{X(z) - X(p)}$ for some choice of sign for the squareroot.
 \item If $p = \infty_{\pm}$, then $\zeta_{\infty_{\pm}} = X(z)^{-1}$.
 \item In all the other cases, $\zeta_{p}(z) = X(z) - X(p)$.
\end{itemize}
We shall take $p_0 = \infty_+$ as reference point for the definition of the Abel map (Section~\ref{SecAbel}). For later use, we analyse the prime form over $\widehat{\mathbb{C}}$ near $\infty_{\pm}$.
\begin{Lemma}
\label{LemE0in}We have
\[
E_0(z,\tilde{z})\sqrt{\dd \zeta_{\infty_\pm}(\tilde{z})}\Big|_{\tilde{z} = \infty_\pm} = \frac{-1}{\sqrt{-\dd X(z)}}.
\]
\end{Lemma}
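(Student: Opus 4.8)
The plan is to unfold the definition of the spherical prime form and compute the limit directly. Here the notation $E_0(z,\tilde{z})$ is understood through the projection $X$, i.e.\ $E_0(z,\tilde{z}) = E_0(X(z),X(\tilde{z}))$, so that
\begin{equation*}
E_0(z,\tilde{z}) = \frac{X(z) - X(\tilde{z})}{\sqrt{\dd X(z)}\,\sqrt{\dd X(\tilde{z})}}.
\end{equation*}
Since the local coordinate at $\infty_\pm$ is $\zeta_{\infty_\pm} = X^{-1}$, we have $\dd\zeta_{\infty_\pm}(\tilde{z}) = -X(\tilde{z})^{-2}\,\dd X(\tilde{z})$, hence $\sqrt{\dd\zeta_{\infty_\pm}(\tilde{z})} = \sqrt{-X(\tilde{z})^{-2}}\,\sqrt{\dd X(\tilde{z})}$ for a suitable determination of the half-form square root.

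The first step is to multiply $E_0(z,\tilde{z})$ by $\sqrt{\dd\zeta_{\infty_\pm}(\tilde{z})}$. The factor $\sqrt{\dd X(\tilde{z})}$ in the denominator of $E_0$ cancels against the one carried by $\sqrt{\dd\zeta_{\infty_\pm}(\tilde{z})}$, leaving
\begin{equation*}
E_0(z,\tilde{z})\sqrt{\dd\zeta_{\infty_\pm}(\tilde{z})} = \frac{\bigl(X(z) - X(\tilde{z})\bigr)\sqrt{-X(\tilde{z})^{-2}}}{\sqrt{\dd X(z)}}.
\end{equation*}
Writing the numerator as $\sqrt{-1}\,\bigl(X(z)X(\tilde{z})^{-1} - 1\bigr)$ and letting $\tilde{z}\to\infty_\pm$, so that $X(\tilde{z})\to\infty$ and $X(z)X(\tilde{z})^{-1}\to 0$, the numerator tends to $-\sqrt{-1}$. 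Thus the expression converges to $\frac{-\sqrt{-1}}{\sqrt{\dd X(z)}}$, which, carrying the factor $\sqrt{-1}$ back under the root, is exactly $\frac{-1}{\sqrt{-\dd X(z)}}$. The output is a $\bigl(-\tfrac{1}{2}\bigr)$-form in $z$, as it must be: pairing the bispinor $E_0$ against the half-form $\sqrt{\dd\zeta_{\infty_\pm}(\tilde{z})}$ and evaluating the $\tilde{z}$-slot at $\infty_\pm$ contracts that slot to a scalar and leaves a half-form in the $z$ variable, matching the right-hand side.

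The one genuinely delicate point is the branch bookkeeping for the square roots of half-forms. One must fix compatible determinations of $\sqrt{-X(\tilde{z})^{-2}}$, of $\sqrt{\dd X(z)}$, and of $\sqrt{-\dd X(z)}$ so that the emergent factor $\sqrt{-1}$ merges into $\sqrt{-\dd X(z)}$ with the correct sign rather than its opposite; this is precisely the spinor and odd-characteristic sign convention set up in Section~\ref{sec:theta-function}, and everything beyond it is the routine manipulation above.
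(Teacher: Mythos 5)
Your computation is correct and is essentially the paper's own proof: both substitute $\dd\zeta_{\infty_\pm}(\tilde{z}) = -X(\tilde{z})^{-2}\dd X(\tilde{z})$ into $E_0(z,\tilde{z}) = (X(z)-X(\tilde{z}))/\sqrt{\dd X(z)\dd X(\tilde{z})}$, cancel the $\sqrt{\dd X(\tilde{z})}$ factors, and take the limit $X(\tilde{z})\to\infty$ to produce $-\sqrt{-1}/\sqrt{\dd X(z)} = -1/\sqrt{-\dd X(z)}$. The paper is equally brief about the branch choices, so your extra remarks add nothing essential but do no harm (though attributing the sign convention to Section~\ref{sec:theta-function} is a slight misdirection, as that section concerns theta functions rather than half-form determinations).
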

\begin{proof}
Since $\dd X(\tilde{z}) = -X(\tilde{z})^2 \dd\zeta_{\infty_\pm}(\tilde{z})$, we have
\[
\lim_{\tilde{z} \rightarrow \infty_\pm} E_0(z,\tilde{z})\sqrt{\dd \zeta_{\infty_\pm}(\tilde{z})} = \lim_{\tilde{z} \rightarrow \infty_\pm} (X(z) - X(\tilde{z})) \frac{\sqrt{-X(\tilde{z})^{-2}\dd X(\tilde{z})}}{\sqrt{\dd X(z)\dd X(\tilde{z})}} = \frac{-1}{\sqrt{-\dd X(z)}}.
\tag*{\qed}
\]
\renewcommand{\qed}{}
\end{proof}

We choose representatives for a symplectic basis of homology on $\widehat{C}$ like in Figure~\ref{fig:cycles}. Namely, we take $\A_{h}$ representing a counterclockwise loop in $C_+$ going around the cut $S_h$, for $h \in [g]$. For convenience we fix a representative $\A_0$ of a counterclockwise loop surrounding $S_0$ in $C_+$, whose homology class is $-(\A_1 + \cdots + \A_g)$. We take $\B_{h}$ representing a loop in $\widehat{C}$ travelling from $S_{0}$
to~$S_{h}$ in $C_+$ and in the opposite direction in $C_-$.

\begin{figure}[ht] \centering
 \includegraphics[width=0.8\textwidth]{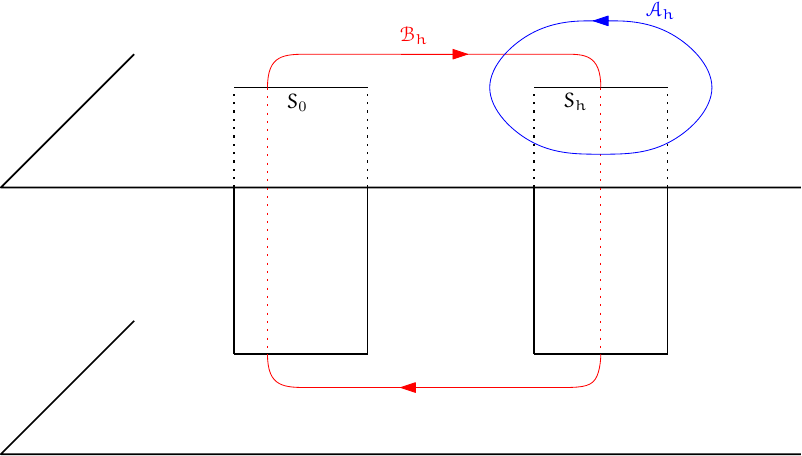}
 \caption{Two cycles $\A_{h}$ and $\B_{h}$.}\label{fig:cycles}
\end{figure}

\subsubsection[The 1-form phi and the Stieltjes transform]{The 1-form $\boldsymbol{\phi}$ and the Stieltjes transform}

The branched covering $X \colon \widehat{C} \rightarrow \widehat{\mathbb{C}}$ alone does not determine the equilibrium measure: we also need to specify the meromorphic function
 $Y\colon (x,s) \mapsto -\frac{1}{2}M(x)s$ on $\widehat{C}$. It is such that $Y(x) = \mp \frac{1}{2} M(x)s(x)$ for $x \in \widehat{C}_\pm$. The advantage to work with the Riemann surface $\widehat{C}$ is that the function~$W_1$, originally defined on $\CC \setminus S$ (Section~\ref{eqmessec}) can be analytically continued to a~meromorphic function on the whole $\widehat{C}$. Indeed, the meromorphic function ${\W_{1}(z) = \frac{V'(X(z))}{2} + Y(z)}$ coincides with $W_1(X(z))$ for $z \in \widehat{C}_+$. We can also see this by noticing that $F_{\pm}$ in \eqref{Fpm} are two solutions of the Riemann--Hilbert problem
\begin{equation*}
\forall x \in \mathring{S}\qquad F(x + {\rm i}0) + F(x - {\rm i}0) = V'(x).
\end{equation*}

\begin{Definition} We equip $\widehat{C}$ with the meromorphic 1-form $\phi(z) = \W_{1}(z)\dd X(z)$.
\end{Definition}
The previous discussion shows that $\phi$ has a simple pole at $\infty_+$ with residue $-1$, and a higher order pole at $\infty_-$ with
\begin{equation*}
 \phi(z)  \mathop{=}_{z \rightarrow \infty_-} \dd V(X(z)) - \frac{\dd X(z)}{X(z)} + \order{\frac{\dd X(z)}{X(z)^2}},
\end{equation*}
where we recall that $\zeta_{\infty_\pm} = 1/X$ is a local coordinate near $\infty_\pm$. As $X$ has two simple poles at~$\infty_\pm$, the form $\dd X$ thus
has double poles at these points with
\[
\dd X = -\zeta_{\infty_\pm}^{2} \dd \zeta_{\infty_\pm}.
\]

The $1$-form $\phi$ therefore decomposes as in Section~\ref{sec:decomp-merom-forms}:
\begin{equation}\label{eq:decomp-omega}
 \phi = \sum_{h=1}^{g} 2{\rm i}\pi\epsilon^{*}_{h}\dd u_{h} + \dd S_{\infty_{+}, \infty_{-}} - \sum_{k=1}^{d}\frac{t_{k}}{k} \dd B_{\infty_{-}, k},
\end{equation}
where the potential is $V(x) = \sum_{k = 1}^{d} \frac{t_kx^k}{k}$. The path from $\infty_+$ to $\infty_-$ used in the definition of the $1$-form $\dd S_{\infty_+,\infty_-}$ is chosen so that it does not intersect $(\mathcal{A}_h,\mathcal{B}_h)_{h = 1}^{g}$. For instance, one can take it to be $\iota_+\bigl(a_0 + {\rm i}\mathbb{R}_{\geq 0}\bigr) \cup \iota_-\bigl(a_0 - {\rm i}\mathbb{R}_{\leq 0}\bigr)$.

\begin{Remark}\label{remB0} For the equilibrium measure in the unconstrained case, the property noticed in Remark~\ref{Rem:vanishint} can be equivalently rewritten as
\[
\forall h \in [g] \qquad \oint_{\mathcal{B}_h} \phi = 0.
\]
As the filling fractions are real, this implies that for any $\gamma \in H_1\bigl(\widehat{C},\mathbb{Z}\bigr)$, we have $\operatorname{Re}\bigl(\oint_{\gamma} \phi\bigr) = 0$. Pairs $\bigl(\widehat{C},\phi\bigr)$ satisfying this property are called Boutroux curves, see \cite{bertola_boutroux_2007}.
\end{Remark}

\subsubsection{The fundamental bidifferential and the second correlator}
\label{sec:second-corr-fund}

Under the assumptions discussed in Section~\ref{sec:expans-part-funct}, \cite{BGmulti} shows that the $N \to \infty$ limit of the second correlator in the model with
fixed filling fractions $\bm{N}/N \rightarrow \bm{\epsilon}$ exists
\begin{equation*}
 W_{2}(x_1,x_2) = \lim_{N \to \infty} \frac{\beta}{2}\Biggl(\bigg\langle\Tr\biggl(\frac{1}{x_1 - \Lambda}\biggr)\Tr\biggl(\frac{1}{x_2 - \Lambda}\biggr)\bigg\rangle^{V}_{N,\bm{N}/N} - W_{1}(x_1)W_{1}(x_2)\Biggr).
\end{equation*}
It can be shown to satisfy the Riemann--Hilbert problem
\begin{equation}
\label{RH2}
 \forall (x_1,x_2) \in (\CC \setminus S) \times \mathring{S},\qquad W_{2}(x_1,x_2 + {\rm i}0) + W_{2}(x_1,x_2 - {\rm i}0) = -\frac{1}{(x_1 - x_2)^{2}}
\end{equation}
for $x \in C\setminus S$ and $y \in S$, see for instance \cite[Chapter
3]{CoursEynard} or \cite{BEO}. Besides, $W_2(x_1,x_2) = \order{1/x_i^2}$ as $x_i \rightarrow \infty$ since the total number of particles is deterministic, and
\begin{equation}
\label{Aw2}
\forall h \in [g]\qquad \oint_{\A_h} W_2(x_1,x_2)\,\dd x_1 = 0,
\end{equation}
since the filling fraction of the segment $A_h$ is fixed.

The Riemann--Hilbert problem \eqref{RH2} implies that we can define a meromorphic
function $\W_{2}(z_1,z_2)$ on $\widehat{C} \times \widehat{C}$ such that
$\W_{2}(z_1,z_2) = W_{2}(X(z_1),X(z_2))$ when
$z_1,z_2 \in \widehat{C}_{+}$. By examining the behavior of $\W_2$ at the poles and considering the $\A$-period conditions \eqref{Aw2}, one can identify it in terms of the fundamental bidifferential:
\begin{equation}
 \label{BW2}
 B(z_1,z_2) = \W_{2}(z_1,z_2)\dd X(z_1)\dd X(z_2) + \frac{\dd X(z_1)\dd X(z_2)}{(X(z_1) - X(z_2))^{2}}.
\end{equation}

\begin{Definition}
The spectral curve of a $\beta$-ensemble is the marked compact Riemann surface $\big(\widehat{C},\mathcal{A},\mathcal{B}\big)$ equipped with the meromorphic functions $X$, $Y$ and the bidifferential $B$.
\end{Definition}

\subsection[Characterisation of spectral curves of beta-ensembles]{Characterisation of spectral curves of $\boldsymbol{\beta}$-ensembles}

In Section \ref{sec:constr-spectr-curve}, we explained that the Riemann surface $\widehat{C}$ underlying the spectral curve of \mbox{a~$\beta$-ensemble} is hyperelliptic with real Weierstra\ss{} points. We now prove the converse, namely that all such Riemann surfaces can be realised (non uniquely) as the underlying Riemann surface of the spectral curve of an unconstrained $\beta$-ensemble.

 \begin{Definition}
 If $G$ is a meromorphic function in a neighborhood of $\infty$ in $\widehat{\mathbb{C}}$, we define its polynomial part $\mathcal{V}[G](x)$, which is the unique polynomial such that $G(x) = \mathcal{V}[G](x) + \order{\frac{1}{x}}$ as $x \rightarrow \infty$.
 \end{Definition}

\begin{prop}
\label{propallhyp} For any $a_0 < b_0 < \cdots < a_g < b_g$, there exists a polynomial $V$ of degree $(2g + 2)$ with top coefficient $\frac{t_{2g + 2}}{2g + 2} > 0$ and there exist for each $h \in [0,g]$ a segment $A_h$ which is a neighborhood of $[a_h,b_h]$ in $\mathbb{R}$, such that the unconstrained $\beta$-ensemble with potential $V$ on $A = \bigsqcup_{h = 0}^{g} A_h$ admits an equilibrium measure with support $S = \bigsqcup_{h = 0}^{g} [a_h,b_h]$ and in \eqref{W1SM} we have $M(x) = t_{2g + 2}\prod_{h = 1}^g (x - z_h)$ having roots outside $A$ and such that $b_{h - 1} < z_h < a_h$ for any $h \in [g]$.
\end{prop}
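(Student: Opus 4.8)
The plan is to run the construction of Section~\ref{sec:constr-spectr-curve} in reverse: rather than extracting $(\sigma,M)$ from a given $V$, I fix $\sigma(x)=\prod_{h=0}^{g}(x-a_h)(x-b_h)$ and \emph{construct} a suitable $M$, then \emph{define} $V$ by reading off a polynomial part. Concretely, for $\bm{z}=(z_1,\dots,z_g)$ with $z_h\in(b_{h-1},a_h)$ put $M_{\bm z}(x)=\prod_{h=1}^{g}(x-z_h)$, let $s$ be the branch of $\sqrt\sigma$ fixed before Lemma~\ref{lemMmm}, and set $V'(x)=\mathcal{V}[M_{\bm z}\,s](x)$. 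Then $V$ has degree $2g+2$, and taking $W_1=(V'-M_{\bm z}s)/2$ as in \eqref{W1SM} I want the candidate density $\rho=\frac{M_{\bm z}\operatorname{Im}(s(\cdot+\mathrm{i}0))}{2\pi}\ind_S$ of Lemma~\ref{lemMmm} to be the equilibrium measure. A direct sign count (the factors of $M_{\bm z}$ together with the component sign of $s$) shows $\rho\ge 0$ on $S$ precisely because each $z_h$ lies in the gap $(b_{h-1},a_h)$. Scaling $M_{\bm z}\mapsto t M_{\bm z}$ scales both $V'$ and $\rho$ linearly, so I fix the overall constant $t=t_{2g+2}>0$ by normalising $\int_S\rho=1$; this makes the top coefficient $t_{2g+2}/(2g+2)>0$ and leaves the remaining conditions, which are homogeneous in $M_{\bm z}$, to constrain only the roots $\bm z$.

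Those remaining conditions are the $g$ vanishing gap-integrals of Remark~\ref{Rem:vanishint}, namely $\int_{b_{h-1}}^{a_h}M_{\bm z}(x)s(x)\,\dd x=0$. The key observation is a barycentric reformulation: writing the integrand over the $h$-th gap as the fixed positive weight $\psi_h(x)=\abs{s(x)}\prod_{k\ne h}\abs{x-z_k}$ times the single linear factor $(x-z_h)$, the $h$-th equation reads $\int_{b_{h-1}}^{a_h}(x-z_h)\psi_h(x)\,\dd x=0$, i.e.\ $z_h$ must be the $\psi_h$-barycenter of the gap. Since $\psi_h$ is strictly positive on the open gap for every admissible position of the other roots, the $h$-th quantity takes values of strictly opposite sign on the two faces $z_h=b_{h-1}$ and $z_h=a_h$, uniformly in the remaining variables. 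The Poincar\'e--Miranda theorem (equivalently, a Brouwer fixed point of the barycenter map on the box $\prod_h[b_{h-1},a_h]$) then yields a solution $\bm z^\ast$ with every $z_h^\ast$ strictly interior to its gap. I expect this existence step to be the main obstacle, and it is the barycentric reformulation that makes the uniform boundary-sign hypothesis available.

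It remains to verify that the resulting measure is genuinely the equilibrium measure. By \eqref{effpot} the effective potential $U$ is constant on each $S_h$, and the vanishing gap-integrals give $U(a_h)-U(b_{h-1})=\int_{b_{h-1}}^{a_h}M_{\bm z^\ast}s\,\dd x=0$, so $U$ attains one common value $c$ on all of $S$. For the inequality \eqref{eq:charc-eq} off $S$ I analyse $U'=M_{\bm z^\ast}s$ on the gaps: the same factor count shows $U'>0$ on $(b_{h-1},z_h^\ast)$ and $U'<0$ on $(z_h^\ast,a_h)$, so $U$ rises above $c$ immediately to the right of each $b_{h-1}$ and immediately to the left of each $a_h$, hence $U>c$ on a punctured neighbourhood of $S$. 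Choosing each $A_h$ to be an enlargement of $[a_h,b_h]$ small enough to exclude all the $z_h^\ast$ (possible since they are interior to the gaps) and to lie inside this neighbourhood, the characterisation \eqref{eq:charc-eq} of Theorem~\ref{thm:equilibrium} holds; by uniqueness the constructed $\rho$ is the density of $\mu_{\mathrm{eq}}$, its support is exactly $S$ because $\rho$ vanishes only at the endpoints, and $M=t_{2g+2}\prod_{h}(x-z_h^\ast)$ has the asserted form with all roots outside $A$.
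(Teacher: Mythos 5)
Your proposal is correct and follows essentially the same route as the paper: fix $\sigma$, parametrise the roots of $M$ by their positions in the gaps, obtain the vanishing of the gap-integrals $\int_{b_{h-1}}^{a_h} M\,s\,\dd x$ via the opposite boundary signs and the Poincar\'e--Miranda theorem, normalise $t_{2g+2}$ by total mass one, define $V$ as a primitive of $\mathcal{V}[M\cdot s]$, and conclude by the variational characterisation and uniqueness of the equilibrium measure. Your barycentric rephrasing of the Miranda step and your explicit sign analysis of $U'$ on the gaps (showing $U>c$ off $S$, which the paper leaves implicit when invoking the characterisation of Theorem~\ref{thm:equilibrium}) are welcome refinements but do not change the argument.
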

\begin{proof} Take $2g + 2$ real points $a_0 < b_0 < \cdots < a_g < b_g$ and introduce polynomials $\sigma(x) = \prod_{h = 0}^{g}(x - a_h)(x - b_h)$. We have seen in Section~\ref{eqmessec} that there exists a unique holomorphic function $s(x)$ on $\mathbb{C} \setminus \bigsqcup_{h = 0}^{g} [a_h,b_h]$ such that $s(x)^2 = \sigma(x)$ and $s(x) \sim x^{g + 1}$
as $x \rightarrow \infty$ in the complex plane. Take $h \in [g]$ and introduce the continuous function
\[
\forall \boldsymbol{\lambda}\in [0,1]^{g}\qquad J_h(\boldsymbol{\lambda}) = \int_{b_{h - 1}}^{a_h} s(x) \prod_{k = 1}^{g}(x - (\lambda_k b_{k - 1} + (1 - \lambda_k) a_k))\,\dd x.
\]
By continuity, $s(x)$ has constant sign for $x \in (b_{h - 1},a_h)$. Besides, for $\lambda_h \in \{0,1\}$ we have $\operatorname{sgn}(J_{h}(\boldsymbol{\lambda})) = (-1)^{g - h + 1 - \lambda_h}$. The Poincar\'e--Miranda theorem \cite{Miranda} then implies the existence of $\boldsymbol{\lambda}^* \in [0,1]^{g}$ such that $J_h(\boldsymbol{\lambda}^*)$ vanishes for any $h \in [g]$. Since $J_h(\boldsymbol{\lambda})$ does not vanish when $\lambda_h \in \{0,1\}$, this $\boldsymbol{\lambda}^*$ must be in $(0,1)^{g}$. We let $z_h = \lambda_h^*b_{h - 1} + \big(1 - \lambda_h^*\big)a_h$ for $h \in [g]$ and introduce the polynomial $M(x) = t_{2g + 2}\prod_{h = 1}^{g} (x - z_h)$, where the constant $t_{2g + 2}$ is chosen such that
\begin{equation}
\label{sumhb}
\sum_{h = 0}^{g} \int_{a_h}^{b_h} \frac{M(x) \operatorname{Im}(s(x + {\rm i}0))}{2\pi} = 1.
\end{equation}
The sign discussion for $s$ in Section~\ref{eqmessec} reveals that all terms in \eqref{sumhb} are positive, thus $t_{2g + 2} > 0$. Then, $V(x) = \int_{0}^{x} \mathcal{V}[M \cdot s](\xi) \dd \xi$ is a polynomial of degree $2g + 2$ with top coefficient $\frac{t_{2g + 2}}{2g + 2}$, and
\[
\mathop{\text{Res}}_{x = \infty} \frac{s(x)M(x)}{2}]\, \dd x = -\frac{1}{2{\rm i}\pi} \sum_{h = 0}^{g} \oint_{\mathcal{A}_h} \frac{s(x)M(x)}{2}\, \dd x = \sum_{h = 0}^{g} \int_{a_h}^{b_h} \frac{M(x) \operatorname{Im}(s(x + {\rm i}0))}{2\pi} = 1,
\]
where $\mathcal{A}_h$ is a counterclockwise loop around $[a_h,b_h]$. Therefore,
\[
\frac{V'(x)}{2} \mathop{\mathop{=}}_{x \rightarrow \infty} \frac{M(x)s(x)}{2} + \frac{1}{x} + \mathcal{O}\biggl(\frac{1}{x^2}\biggr).
\]
This $V$ defines the potential in a $\beta$-ensemble which we consider over the domain $A = \bigsqcup_{h = 0}^{g} A_h$, where $A_h = \bigl[a_h',b_h'\bigr]$ and $z_{h} < a_h' < a_h$ and $b_h < b_h' < z_{h + 1}$ for any $h \in [0,g]$, with the conventions $z_0 = -\infty$ and $z_{g + 1} = +\infty$. It remains to check that $W(x) := \frac{1}{2}(V'(x) - M(x)s(x))$ is the Stieltjes transform of the equilibrium measure of this (unconstrained) $\beta$-ensemble.

We define the measure $\mu$ with support $S = \bigsqcup_{h = 0}^{g} [a_h,b_h]$ and density
\[
\frac{\dd\mu}{\dd x} = \frac{W(x - {\rm i}0) - W(x + {\rm i}0)}{2{\rm i}\pi} = \frac{M(x)s(x)}{2\pi} \ind_S(x).
\]
By construction, $W$ is the Stieltjes transform of $\mu$. Since $M$ has a single zero between each components of the support, $\mu$ is a positive measure, see the sign discussion in Section~\ref{eqmessec}. Define $U(x) = V(x) - 2\int_{S} \ln|x - \xi|\dd \mu(\xi)$. We clearly have
\[
\forall x \in S \qquad U'(x) = V'(x) - W_1(x + {\rm i}0) - W_1(x - {\rm i}0) = 0.
\]
Integrating this from $a_{h}$ to $x \in (a_h,b_h)$, we find a constant $c_h$ such that $U(x) = c_h$ for any $x \in (a_h,b_h)$. Besides, for any $h \in [g]$ we compute
\[
c_{h} - c_{h - 1} = \int_{b_{h - 1}}^{a_h} U'(x)\, \dd x = \int_{b_{h - 1}}^{a_h}\bigl(V'(x) - 2W_1(x)\bigr)\dd x = \int_{b_{h - 1}}^{a_{h}} M(x)s(x) \,\dd x.
\]
Since we have chosen $(z_1,\ldots,z_g)$ so that this integral vanishes, $c_h$ is independent of $h$. As a result, $\mu$ satisfies the characterisation of the equilibrium measure from Theorem~\ref{thm:equilibrium} (unconstrained case). By uniqueness, this must be the equilibrium measure: $\mu = \mu_{{\rm eq}}$.
\end{proof}

\subsection{Deformations of the curve}

We consider real and complex deformations of the complex curves, that will be used in Section~\ref{sec:expans-part} to extend the validity of our formulae beyond their realisation for spectral curves of $\beta$-ensembles. We first show that within the class of spectral curves of $\beta$-ensembles, we can always realise any vector of filling fractions in a small neighborhood of a given one by perturbation of the support.

\begin{Lemma}
 \label{periodma}Let $a_0 < b_0 < \cdots < a_g < b_g$ and take a
 corresponding $M(x) = t_{2g + 2} \prod_{h=1}^{g}(x - z_{h})$ with $z_h \in (b_{h - 1},a_h)$ as in
 Proposition~$\ref{propallhyp}$. There exists a small neighborhood
 $\Omega \subset \mathbb{R}^{2g}$ of $(a_h,z_h)_{h = 1}^{g}$ such
 that the map $\Pi \colon \Omega \rightarrow \mathbb{R}^{2g}$ given by
\[
\Pi(\tilde{a}_1,\tilde{z}_1,\ldots,\tilde{a}_g,\tilde{z}_g) = \biggl(\int_{\tilde{a}_h}^{b_h} \tilde{M}(x)\sqrt{-\tilde{\sigma}(x)}\,\dd x,\,\int_{b_{h - 1}}^{\tilde{a}_h} \tilde{M}(x) \sqrt{\tilde{\sigma}(x)}\, \dd x\biggr)_{h = 1}^{g}
\]
is a diffeomorphism onto its image, where we have set
\[
\tilde{M}(x) = t_{2g + 2} \prod_{k=1}^{g}(x - \tilde{z}_k) \qquad \text{and} \qquad \tilde{\sigma}(x) = (x - a_0)(x - b_0)\prod_{k = 1}^{g} (x - \tilde{a}_k)(x -b_k).
\]
\end{Lemma}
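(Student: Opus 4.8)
The plan is to apply the inverse function theorem at the base point $(a_h,z_h)_{h=1}^{g}$: it is enough to check that $\Pi$ is $C^1$ near this point and that its differential $\dd\Pi$ there is invertible. The $C^1$ regularity is routine: after an affine change of variables sending each $[\tilde a_h,b_h]$ and $[b_{h-1},\tilde a_h]$ to a fixed reference interval, the integrands become smooth functions of $(\tilde a,\tilde z)$ multiplied by a fixed integrable square-root factor, and one may differentiate under the integral sign. Because $\tilde\sigma$ vanishes at the moving endpoint $\tilde a_h$, all boundary terms produced by differentiating with respect to $\tilde a_h$ vanish, so $\dd\Pi$ comes entirely from differentiating the integrands.

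Carrying out this differentiation at the base point, where $\tilde\sigma=\sigma$ and $\tilde M=M$, and using $\partial_{\tilde a_j}\tilde\sigma=-\tilde\sigma/(x-a_j)$ together with $\partial_{\tilde z_j}\tilde M=-\tilde M/(x-z_j)$, one finds that for a tangent vector $\delta=(\delta\tilde a_j,\delta\tilde z_j)_{j=1}^{g}$
\[
\delta P_h=\int_{a_h}^{b_h}\eta(x)\sqrt{-\sigma(x)}\,\dd x,\qquad \delta Q_h=\int_{b_{h-1}}^{a_h}\eta(x)\sqrt{\sigma(x)}\,\dd x,
\]
where $\eta(x)=-M(x)\bigl(\tfrac12\sum_{j=1}^{g}\tfrac{\delta\tilde a_j}{x-a_j}+\sum_{j=1}^{g}\tfrac{\delta\tilde z_j}{x-z_j}\bigr)$. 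Since $M$ vanishes at the $z_j$, the apparent poles of $\eta$ at the $z_j$ cancel, and as $\delta$ runs over $\RR^{2g}$ the function $\eta$ ranges over the $2g$-dimensional space of rational functions $q(x)+\sum_{j=1}^{g}s_j/(x-a_j)$ with $\deg q\le g-1$. Moreover $\delta\mapsto\eta$ is a linear isomorphism, because $\eta\equiv0$ forces the partial-fraction coefficients at the distinct poles $a_j,z_j$ to vanish. Writing $\eta=P(x)/\prod_{j=1}^{g}(x-a_j)$ we then have $\deg P\le 2g-1$.

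It remains to prove that $\dd\Pi$ is injective, i.e.\ that $\eta\equiv0$ whenever $\delta P_h=\delta Q_h=0$ for all $h\in[g]$; by the isomorphism above this yields $\delta=0$. I argue by a sign count, in the spirit of the proof of Proposition~\ref{propallhyp}. The weight $\sqrt{-\sigma}$ is positive on each open cut $(a_h,b_h)$, and the weight $\sqrt{\sigma}$ is real of constant sign on each open gap $(b_{h-1},a_h)$. Hence $\delta P_h=0$ forces $\eta$ to change sign inside each of the $g$ cuts, and $\delta Q_h=0$ forces $\eta$ to change sign inside each of the $g$ gaps. These $2g$ sign changes lie in pairwise disjoint open intervals on which $\eta$ has no poles (its poles $a_j$ are the shared endpoints of adjacent cuts and gaps), so they provide $2g$ distinct real zeros of the numerator $P$. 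As $\deg P\le 2g-1$, this forces $P\equiv0$, hence $\eta\equiv0$ and $\delta=0$. Therefore $\dd\Pi$ is invertible, and by the inverse function theorem $\Pi$ is a diffeomorphism onto its image on a sufficiently small neighborhood $\Omega$.

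The only delicate points are the two pieces of bookkeeping in the second paragraph: verifying that the moving-endpoint boundary terms really vanish (which hinges on $\sqrt{\tilde\sigma}$ being zero at $\tilde a_h$) so that every entry of $\dd\Pi$ is the integral of the single density $\eta$ against a fixed-sign weight, and checking that the poles of $\eta$ at the $a_j$, sitting exactly at the common endpoints of consecutive intervals, neither create spurious zeros nor absorb the required interior sign changes. With these in hand the degree-versus-number-of-zeros contradiction is immediate and closes the proof.
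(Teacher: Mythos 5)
Your proposal is correct. You compute the same differential as the paper (the same Jacobian entries, obtained by differentiating under the integral sign after noting that the boundary terms at the moving endpoints $\tilde a_h$ vanish because $\sqrt{\pm\tilde\sigma}$ does), but you establish its invertibility by a genuinely different mechanism. The paper expands the $2g\times 2g$ Jacobian determinant multilinearly into a $2g$-fold integral whose integrand contains a Cauchy determinant, evaluates that determinant in closed form, and observes that the resulting integrand keeps a constant sign over the whole integration range, so the determinant cannot vanish. You instead show that the kernel of $\dd\Pi$ is trivial: a kernel vector $\delta$ produces a single density $\eta=-M\bigl(\tfrac12\sum_j\delta\tilde a_j/(x-a_j)+\sum_j\delta\tilde z_j/(x-z_j)\bigr)$ whose integrals against positive weights over the $g$ cuts and the $g$ gaps all vanish, forcing $2g$ sign changes in pairwise disjoint open intervals, which is incompatible with $\eta$ having numerator degree at most $2g-1$ unless $\eta\equiv0$; injectivity of $\delta\mapsto\eta$ (from the distinctness of the poles $a_j$, $z_j$) then gives $\delta=0$. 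Your route avoids the Cauchy determinant identity entirely and is the standard ``too many zeros'' argument for period-type maps; the paper's route gives slightly more for free, namely an explicit sign for the Jacobian determinant and its nonvanishing on a whole neighbourhood rather than just at the base point, though for the inverse function theorem only the base point is needed. Both proofs are complete, and the two delicate points you flag (vanishing boundary terms, and the poles of $\eta$ sitting only at shared endpoints of consecutive intervals) are exactly the right ones and are handled correctly.
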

This will be used in the following form.
\begin{Corollary}
\label{codense}There is a dense set of $a_0 < b_0 < \cdots < a_g < b_g$ for which there exists \mbox{a~$\beta$-ensemble} whose associated equilibrium measure of Theorem~$\ref{thm:equilibrium}$ has filling fractions $\boldsymbol{\epsilon}^*$ whose components $\epsilon_{1}^*,\ldots,\epsilon_g^*$ are $\mathbb{Q}$-linearly independent.
\end{Corollary}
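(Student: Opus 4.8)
The plan is to combine the realisation result of Proposition~\ref{propallhyp} with the local diffeomorphism property of the period map in Lemma~\ref{periodma}, and then to invoke a genericity argument: the $\mathbb{Q}$-linearly dependent $g$-tuples form a meagre, Lebesgue-negligible subset of $\mathbb{R}^g$, so $\mathbb{Q}$-linearly independent tuples are dense in any open set.

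First I would fix an arbitrary configuration $a_0 < b_0 < \cdots < a_g < b_g$ together with a radius $\delta > 0$, and realise it through Proposition~\ref{propallhyp}: this produces $M(x) = t_{2g+2}\prod_{h}(x - z_h)$ with $z_h \in (b_{h-1},a_h)$ and an unconstrained $\beta$-ensemble whose equilibrium measure is supported on $\bigsqcup_h [a_h,b_h]$. The key observation is that the two blocks of components of the map $\Pi$ in Lemma~\ref{periodma} have a direct meaning at this base point. By the density formula of Lemma~\ref{lemMmm}, the first block $F_h = \int_{a_h}^{b_h} M\sqrt{-\sigma}\,\dd x$ equals $2\pi(-1)^{g-h}\epsilon_h^*$, i.e., the filling fractions up to an invertible diagonal rescaling; while the vanishing of the second block $G_h = \int_{b_{h-1}}^{a_h} M\sqrt{\sigma}\,\dd x$ is, up to a sign, exactly the equilibrium condition $\int_{b_{h-1}}^{a_h} Ms\,\dd x = 0$ of Remark~\ref{Rem:vanishint}. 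Thus the base configuration lies on the slice $\{G = 0\}$, on which $\epsilon^*$ is read off from $F$.

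Since $\Pi$ is a diffeomorphism from $\Omega$ onto its open image, the slice $\Pi(\Omega) \cap \{G = 0\}$ is relatively open in the coordinate subspace $\{G = 0\} \cong \mathbb{R}^g$; hence as the parameters range over $\Pi^{-1}(\{G = 0\}) \cap \Omega$ the filling fractions sweep out an open set $\mathcal{O} \subset \mathbb{R}^g$ containing the base value. The $\mathbb{Q}$-linearly dependent tuples form the countable union $\bigcup_{q \in \mathbb{Q}^g\setminus\{0\}}\{x : q\cdot x = 0\}$ of proper hyperplanes, so their complement is dense in $\mathcal{O}$; I would pick $F^*$ in this complement within distance $\delta$ of the base value and pull it back through $\Pi^{-1}$ to parameters $(\tilde a_h^*,\tilde z_h^*)$ close to $(a_h,z_h)$. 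As only the $\tilde a_h$ among the Weierstra\ss{} points move (the $a_0$ and all $b_h$ stay fixed), the resulting configuration is within $\delta$ of the original, and a rational relation among its filling fractions would descend to one among the $F_h^*$ — the factor $2\pi$ cancels and the signs $(-1)^{g-h}$ merely modify the rational coefficients by signs — which is excluded. This yields density.

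The step I expect to require the most care, though I anticipate it to be routine rather than structural, is checking that the perturbed parameters still define a genuine equilibrium measure of a $\beta$-ensemble: positivity of the density and the global off-support inequality of Theorem~\ref{thm:equilibrium} are open conditions inherited from the base case of Proposition~\ref{propallhyp} under a sufficiently small perturbation, and the total mass can be reset to $1$ by rescaling $M$ (equivalently $V$), which multiplies all filling fractions by a common nonzero constant and therefore preserves $\mathbb{Q}$-linear independence. The genuinely essential input is Lemma~\ref{periodma}, whose diffeomorphism property is precisely what prevents the filling fractions from being confined to a lower-dimensional set.
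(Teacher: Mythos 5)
Your proposal is correct and follows essentially the same route as the paper: realise the base configuration via Proposition~\ref{propallhyp}, observe that at the base point $\Pi$ takes the value $\bigl((-1)^{g-h}2\pi\epsilon_h^*,0\bigr)_{h=1}^{g}$, perturb within the slice where the second block vanishes (the equilibrium condition of Remark~\ref{Rem:vanishint}) to reach a $\mathbb{Q}$-linearly independent tuple using the local diffeomorphism of Lemma~\ref{periodma}, and renormalise by the total mass, which rescales all filling fractions by a common constant and preserves $\mathbb{Q}$-linear independence. The paper's own proof carries out exactly these steps, including the verification that the perturbed data still defines an equilibrium measure via the effective potential.
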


\begin{proof}[Proof of Lemma~\ref{periodma}] $\Pi$ is a smooth
 function of $(\tilde{a}_h,\tilde{z}_h)_{h = 1}^{g}$ in the range
 $b_0 < \tilde{z}_{1} < \tilde{a}_1 < b_1 < \tilde{z}_2 < \tilde{a}_2 < \cdots < \tilde{z}_g < \tilde{a}_g < b_g$.
 We compute its Jacobian
\begin{gather}
 \det \begin{pmatrix} \displaystyle\int_{\tilde{a}_h}^{b_h} \frac{-\tilde{M}(x)\sqrt{-\tilde{\sigma}(x)}}{2(x - \tilde{a}_k)}\dd x & \displaystyle\int_{\tilde{a}_{h}}^{b_h} \frac{-\tilde{M}(x)\sqrt{-\tilde{\sigma}(x)}}{(x - \tilde{z}_k)} \dd x \vspace{2mm}\\ \displaystyle\int_{b_{h - 1}}^{\tilde{a}_h} \frac{-\tilde{M}(x)\sqrt{\tilde{\sigma}(x)}}{2(x - \tilde{a}_k)}\dd x & \displaystyle\int_{b_{h - 1}}^{\tilde{a}_h} \frac{-\tilde{M}(x)\sqrt{\tilde{\sigma}(x)}}{(x - \tilde{z}_k)} \dd x \end{pmatrix}_{1 \leq h,k \leq g} \nonumber\\
\qquad{} = \frac{1}{2^g} \int_{\tilde{a}_1}^{b_1}    \cdots \int_{\tilde{a}_{g}}^{b_g} \prod_{h = 1}^{g} \dd x_h \tilde{M}(x_h)\sqrt{-\tilde{\sigma}(x_h)}\nonumber \\
\qquad\quad{}\times
\int_{b_0}^{\tilde{a}_1}    \cdots \int_{b_{g - 1}}^{\tilde{a}_g} \dd \xi_h \tilde{M}(\xi_h) \sqrt{\tilde{\sigma}(\xi_h)} \cdot \det\begin{pmatrix} \dfrac{1}{x_h - \tilde{a}_k} & \dfrac{1}{x_h - \tilde{z}_k} \vspace{2mm}\\ \dfrac{1}{\xi_h - \tilde{a}_k} & \dfrac{1}{\xi_h - \tilde{z}_k}\end{pmatrix}_{1 \leq h,k \leq g}, \label{injt}
\end{gather}
where we used the fact that $\sqrt{\pm \tilde{\sigma}(x)}$ vanishes at the endpoints of the integration intervals. The determinant in the integrand is a Cauchy determinant and can be readily evaluated
\begin{gather*}
	\det
	\begin{pmatrix}
	 \dfrac{1}{x_h - \tilde{a}_k} & \dfrac{1}{x_h - \tilde{z}_k} \vspace{2mm}\\
	 \dfrac{1}{\xi_h - \tilde{a}_k} & \dfrac{1}{\xi_h - \tilde{z}_k}
	\end{pmatrix}_{1 \leq h,k \leq g}
\\
\qquad{}	= \Delta(\boldsymbol{\tilde{a}})\Delta(\boldsymbol{\tilde{z}}) \Delta(\boldsymbol{x}) \Delta(\boldsymbol{\xi}) \prod_{h,k = 1}^{g} \frac{ (\tilde{z}_h - \tilde{a}_k)(\xi_h - x_k) }{ (x_h - \tilde{a}_k)(\xi_h - \tilde{a}_k)(x_h - \tilde{z}_k)(\xi_h - \tilde{z}_k)}\\
\qquad{} = \frac{t_{2g + 2}^{2g} \Delta(\boldsymbol{\tilde{a}})\Delta(\boldsymbol{\tilde{z}}) \Delta(\boldsymbol{x}) \Delta(\boldsymbol{\xi})}{\prod_{h=1}^{g} \tilde{M}(x_{h}) \tilde{M}(\xi_{h})} \prod_{h,k = 1}^{g} \frac{(\tilde{z}_h - \tilde{a}_k)(\xi_h - x_k)}{(x_h - \tilde{a}_k)(\xi_h - \tilde{a}_k)}.
\end{gather*}
For $(\tilde{a}_h,\tilde{z}_h)_{h = 1}^{g}$ close enough to
$(a_h,z_h)_{h = 1}^{g}$, the zeros of $\tilde{M}$ are outside
$\bigsqcup_{h = 1}^{g} \bigl[\tilde{a}_{h},\tilde{b}_h\bigr]$, so that the
sign of the integrand in \eqref{injt} remains constant in the whole integration range. The determinant of the
Jacobian of $\Pi$ is thus nonzero, and $\Pi$ is a local
diffeomorphism.
\end{proof}

\begin{proof}[Proof of Corollary~\ref{codense}]
If $(\tilde{a}_h,\tilde{z}_h)_{h = 1}^{g} \in \Omega$, call $\tilde{\mu}$ the measure supported on $\tilde{S} = [a_0,b_0] \cup \bigsqcup_{h = 1}^{g} [\tilde{a}_h,b_h]$ with density $\frac{1}{2\pi} \tilde{M}(x)\sqrt{-\tilde{\sigma}(x)}$. At $(a_h,z_h)_{h = 1}^{g}$ this $\tilde{\mu}$ is by construction the equilibrium measure of a $\beta$-ensemble, which we simply denote $\mu$: it is in particular a probability measure with vector of filling fractions \smash{$\big(\epsilon_h^*\big)_{h = 1}^{g}$} and the $h$-th second component of $\Pi(a_1,z_1,\ldots,a_g,z_g)$ is equal to $(-1)^{g - h}(U(a_h) - U(b_{h - 1})) = 0$ for $h \in [g]$. So, $\Pi$ induces a homeomorphism from~$\Omega$ to a neighborhood $\Omega' \subset \mathbb{R}^{2g}$ of $\bigl((-1)^{g-h}2\pi \epsilon_h^* ,0\bigr)_{h = 1}^{g}$. By continuity with respect to the parameters, $\tilde{\mu}$ remains a positive measure on each component of \smash{$\tilde{S}$} for all parameters in a (possibly smaller)~$\Omega$, and that the total mass of $\tilde{\mu}$ defines a positive continuous function on $\Omega$. In particular, $\tilde{\mu}' = \tilde{\mu}/\tilde{\mu}\big(\tilde{S}\big)$ is a probability measure on $\tilde{S}$.

If $\epsilon_1^*,\ldots,\epsilon_g^*$ are $\mathbb{Q}$-linearly dependent, we can approximate $\bigl((-1)^{g-h}2\pi \epsilon_h^* ,\,0\bigr)_{h = 1}^{g}$ to arbitrary precision by $2g$-tuples $\bigl((-1)^{g-h}2\pi \tilde{\epsilon}_h ,\,0\bigr)_{h = 1}^{g} \in \Omega'$ such that $\tilde{\epsilon}_1,\ldots,\tilde{\epsilon}_g$ are $\mathbb{Q}$-linearly independent. Applying $\Pi^{-1}$, we get an approximation $(\tilde{a}_h,\tilde{z}_h)_{h = 1}^{g}$ of $(a_h,z_h)_{h = 1}^{g}$ at arbitrary precision whose associated probability measure $\tilde{\mu}'$ is by construction (follow the proof of Proposition~\ref{propallhyp}) the equilibrium measure of the $\beta$-ensemble with potential
\[
\tilde{V}(x) = \frac{1}{\tilde{\mu}\big(\tilde{S}\big)} \int_0^{x} \mathcal{V}\bigl[\tilde{M}\cdot \tilde{s}\bigr](x),
\]
with $\tilde{s}$ like $s$ of Section~\ref{eqmessec} but with
$\tilde{a}$s instead of $a$s, i.e., a choice of square root of
$\prod_{h = 0}^{g}(x - \tilde{a}_{h})(x - b_{h})$. Let us detail this
claim. We introduce the effective energy
$\tilde{U}(x) = \tilde{V}(x) - 2\int_{\tilde{S}}\ln|x - \xi| \dd \tilde{\mu}(\xi)$
associated to $\tilde{\mu}$. It satisfies both
$\tilde{U}(\tilde{a}_{h}) - \tilde{U}(b_{h}) = 0$
(because the second component of the image of
$(\tilde{a}_{h}, \tilde{z}_{h})_{h=1}^{g}$ by $\Pi$ is zero) and
$\tilde{U}'(x) = 0$ for $x$ in the support of $\tilde{\mu}$ (because the Stieltjes
transform of $\tilde{V}'$ is the polynomial part of the density of
$\tilde{\mu}$, up to a factor $2\pi$). It satisfies \eqref{eq:charc-eq},
and is the equilibrium measure of the $\beta$-ensemble with potential
$\tilde{V}$. This equilibrium measure has vector of filling fractions
$\bigl(\tilde{\epsilon}_h/\tilde{\mu}\big(\tilde{S}\big)\bigr)_{h = 1}^{g}$, whose components
remain $\mathbb{Q}$-linearly independent.
\end{proof}

In a second step, we will leave the realm of spectral curves of
$\beta$-ensembles and rather consider their complex deformations. Here it becomes important to keep track of the marking.
The equation of a hyperelliptic curve $s^2 = \prod_{h = 0}^{g} (x - a_h)(x - b_h)$ is parameterised by the set~$\bm{\Delta}_{2g + 2}$ of $(2g + 2)$-tuple $(a_h,b_h)_{h = 0}^{g}$ of pairwise distinct complex numbers. Its universal cover~$\widetilde{\bm{\Delta}}_{2g + 2}$ based at a tuple of strictly increasing real numbers parametrises the equation of the hyperelliptic curve together with a choice of marking: at the base point it is the one described in Section~\ref{sec:constr-spectr-curve}, and there is a unique way to get from there a marking for any other point in~$\widetilde{\bm{\Delta}}_{2g + 2}$ by performing continuous deformations of the representatives of the homology cycles. The outcome is an analytic family $\hat{\bm{C}} \rightarrow \widetilde{\bm{\Delta}}_{2g + 2}$ of marked hyperelliptic curves, which coincide with the one described in Section~\ref{sec:constr-spectr-curve} above the connected component of the base point in the real locus of~$\widetilde{\bm{\Delta}}_{2g + 2}$. Concretely, in other real connected components, the symplectic basis of homology has changed by an $\text{Sp}_{2g}(\mathbb{Z})$-transformation compared to Section~\ref{sec:constr-spectr-curve}, and so must have the matrix of periods. Let us denote likewise $\widetilde{\bm{C}}$ the family of universal covers over $\widetilde{\bm{\Delta}}_{2g + 2}$. We will rely on the following basic fact in complex geometry, see, e.g., \cite[Chapter
1]{period-book}.
\begin{Lemma}\label{lem:holo-periods}
 The period matrix \eqref{tauhk} is a holomorphic function on $\widetilde{\bm{\Delta}}_{2g + 2}$. The Abel map based at $\infty_+$ is a holomorphic function $\widetilde{\bm{C}} \rightarrow \mathbb{C}^{g}$.
\end{Lemma}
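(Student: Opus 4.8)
The plan is to reduce both statements to one standard principle --- the holomorphic dependence of a period integral on parameters, when the cycle of integration and the integrand vary holomorphically and avoid the singular locus --- and then to obtain the normalised forms, the period matrix, and the Abel map by operations (integration over cycles, matrix inversion, integration along paths) that each preserve holomorphy. First I would fix the explicit basis of holomorphic $1$-forms on the curve $s^2 = \sigma(x) = \prod_{h = 0}^{g}(x - a_h)(x - b_h)$ given by $\varpi_k = x^{k - 1}\dd x/s$ for $k \in [g]$. These depend holomorphically --- indeed algebraically --- on the branch points $(a_h,b_h)_{h = 0}^{g}$, which are exactly the coordinates on $\bm{\Delta}_{2g + 2}$.

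Next I would establish the holomorphy of the periods by a local-constancy argument for the cycles. Fix a point of $\widetilde{\bm{\Delta}}_{2g + 2}$; in a small neighborhood the $2g + 2$ branch points stay inside disjoint disks, so one may choose representatives of $\A_h,\B_h$ as fixed $x$-plane contours (lifted to the two sheets through the branch of $s$ prescribed by the marking) that avoid these disks. On such a contour $\sigma_{\bm{p}}(x)$ is nonvanishing and jointly holomorphic in $(x,\bm{p})$, hence the branch $s_{\bm{p}}(x)$ extends to a jointly holomorphic square root there. Each period $\oint_{\gamma}\varpi_k$ is then a holomorphic function of $\bm{p}$, by Morera's theorem in $\bm{p}$ together with Fubini, or equivalently by differentiation under the integral sign. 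The role of the universal cover $\widetilde{\bm{\Delta}}_{2g + 2}$ is precisely to make the choice of cycles and of the branch of $s$ globally coherent, so that these locally holomorphic functions glue.

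With $\mathsf{A}_{h,k} = \oint_{\A_h}\varpi_k$ and $\mathsf{B}_{h,k} = \oint_{\B_h}\varpi_k$ both holomorphic in $\bm{p}$, I would note that $\mathsf{A}$ is invertible at every point: the map $\omega \mapsto \bigl(\oint_{\A_h}\omega\bigr)_{h}$ from holomorphic $1$-forms to $\C^{g}$ is injective, since a nonzero holomorphic form with vanishing $\A$-periods would contradict positivity of $\frac{{\rm i}}{2}\iint \omega \wedge \bar{\omega}$ via the Riemann bilinear relations, and injectivity between spaces of equal dimension gives an isomorphism. The normalised basis of \eqref{tauhk} is therefore $\dd u_k = \sum_{j}(\mathsf{A}^{-1})_{j,k}\varpi_j$, which satisfies $\oint_{\A_h}\dd u_k = \delta_{h,k}$, and the period matrix is $\bm{\tau} = \mathsf{B}\,\mathsf{A}^{-1}$. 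As matrix inversion is holomorphic on the locus of invertible matrices, $\bm{\tau}$ is holomorphic on $\widetilde{\bm{\Delta}}_{2g + 2}$.

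Finally, for the Abel map I would write $u_k(z) = \int_{\infty_+}^{z}\dd u_k$ and choose the path of integration to vary holomorphically with $(z,\bm{p})$, as the lift of an $x$-plane path avoiding the branch points. The integrand is then jointly holomorphic and stays off the singular locus, so the same holomorphy-under-integration argument of the second step yields that $u_k$ is holomorphic on the total space $\widetilde{\bm{C}}$. I expect the only genuine content to lie in that joint-holomorphy-in-parameters step --- namely, arranging parameter-independent contours and a coherent branch of $s$ near a fixed point of the base --- while the invertibility of $\mathsf{A}$ and the subsequent matrix manipulations are routine.
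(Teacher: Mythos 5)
Your proof is correct, and it is worth noting that the paper does not actually prove this lemma: it is stated as a ``basic fact in complex geometry'' with a citation to Carlson--M\"uller-Stach--Peters, Chapter~1. What you have written is a correct, self-contained version of precisely the standard argument that reference packages: explicit algebraic basis $\varpi_k = x^{k-1}\dd x/s$ (your indexing is the right one --- note these are holomorphic at $\infty_\pm$ exactly because $\deg\sigma = 2g+2$ is even, so $\infty_\pm$ are unramified and $x^{k-1}\dd x/s \sim \mp t^{\,g-k}\dd t$ in $t = 1/x$; the paper's own Section~5.4 writes $x^k\dd x/s$, $k\in[g]$, which is an off-by-one slip), locally parameter-independent contours avoiding disks around the branch points to get joint holomorphy of the unnormalised periods, non-degeneracy of the $\A$-period matrix from the Riemann bilinear relations, and $\bm{\tau} = \mathsf{B}\mathsf{A}^{-1}$. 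The role you assign to the universal cover $\widetilde{\bm{\Delta}}_{2g+2}$ --- coherent continuation of the marking and of the branch of $s$, so that the locally defined holomorphic functions glue --- matches the discussion preceding the lemma in the paper. The only point I would ask you to tidy is the last step: your $x$-plane paths give joint holomorphy of $u_k(z)$ away from the ramification points and $\infty_\pm$ of the fibre; to get holomorphy on all of $\widetilde{\bm{C}}$ you should add one line saying that in the local coordinates $\zeta_p = \sqrt{X - X(p)}$ (respectively $1/X$) the forms $\dd u_k$ remain holomorphic, so the primitive extends holomorphically across these loci (removable singularities in $z$, and then jointly by Osgood/Hartogs or by the same integral representation). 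This is routine and does not affect the validity of the argument.
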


\section{Asymptotics of the partition function and the kernels}

\label{sec:expans-part}
\subsection{Expansion of the partition function and generalised central limit theorem}
\label{sec:expans-part-funct}

The large $N$ asymptotic expansion of the partition function of the
$\beta$-ensembles in the multi-cut regime was established in \cite{BGmulti}, under
assumptions which are satisfied for the potentials that we consider in
Theorem~\ref{thm:equilibrium}. In particular, the off-criticality assumption on $A$ corresponds to $M$ of Lemma~\ref{lemMmm} having no zeros on $A$. We reproduce here the formulae for these asymptotics, which will be our starting point.

\begin{Theorem}[{\cite[Theorem 1.5]{BGmulti}}]
\label{Asymppart}Let $g \geq 1$, let $V$ as in Theorem~$\ref{thm:equilibrium}$ and assume $M$ from Lemma~$\ref{lemMmm}$ has no zeros on $A$. The partition function has the following expansion
as $N \to \infty$
\[
 Z^{V}_{N} \sim N^{\frac{\beta}{2}N + \varkappa}{\rm e}^{N^{2}\mathcal{E}[\mu_{{\rm eq}}] + N\mathcal{S}[\mu_{{\rm eq}}] + \mathcal{G}[\mu_{{\rm eq}}]}\,\vartheta_{-N\bm{\epsilon}^{*},\bm{0}}\bigl(\bm{v}_{{\rm eq}}\big| \tfrac{\beta}{2}\bm{\tau}\bigr).
\]
Here
\[
\mathcal{S}[\mu] = \biggl(1 - \frac{\beta}{2}\biggr)({\rm Ent}[\mu] - \ln(\beta/2)) + (\beta/2)\ln(2\pi/e) - \ln \Gamma(\beta/2),
\]
where ${\rm Ent}[\mu]$ is the von Neumann entropy of the probability measure $\mu$, $\varkappa$ is a known universal constant depending only on $g$ and $\beta$, $\mathcal{G}[\mu]$ is a continuous functional whose expression is irrelevant for our purposes, and
\begin{equation}
\label{vnabl}\bm{v}_{{\rm eq}} = \frac{\nabla_{\epsilon} \mathcal{S}[\mu_{{\rm eq},\bm{\epsilon}}]}{2{\rm i}\pi}\bigg|_{\bm{\epsilon} = \bm{\epsilon}^*} .
\end{equation}
\end{Theorem}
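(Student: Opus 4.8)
The plan is to reduce the unconstrained model to the fixed-filling-fraction model and then reconstruct the theta function as a lattice sum over the discrete filling fractions. Since the integrand defining $Z_N^V$ is symmetric in $\bm\lambda$ and $A = \bigsqcup_{h=0}^g A_h$ is a disjoint union, I would first split $A^N$ according to the number $N_h$ of eigenvalues in each component, which gives
\[
Z_N^V = \sum_{N_0 + \cdots + N_g = N} \frac{N!}{N_0!\cdots N_g!}\, Z_{N,\bm N/N}^V =: \sum_{N_0 + \cdots + N_g = N} {\rm e}^{\mathcal{F}_N(\bm N/N)},
\]
where $Z_{N,\bm N/N}^V$ is the fixed-filling partition function of Theorem~\ref{mueqfixed} and $\mathcal{F}_N(\bm\epsilon)$ collects the logarithm of the summand. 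The constraint $\sum_{h=0}^g N_h = N$ leaves $(N_1,\ldots,N_g)$ free, so the sum effectively runs over $\ZZ^g$, matching the dimension of the theta function.

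The technical heart, which I expect to be the main obstacle, is to establish a \emph{smooth} large-$N$ expansion of $\mathcal{F}_N(\bm\epsilon)$, valid uniformly for $\bm\epsilon$ in a neighborhood of the equilibrium value $\bm\epsilon^*$, of the form
\[
\mathcal{F}_N(\bm\epsilon) = N^2\,\mathcal{E}[\mu_{{\rm eq},\bm\epsilon}] + \bigl(\tfrac{\beta}{2}N + \varkappa\bigr)\ln N + N\,\mathcal{S}[\mu_{{\rm eq},\bm\epsilon}] + \mathcal{G}[\mu_{{\rm eq},\bm\epsilon}] + o(1),
\]
with coefficients depending smoothly on $\bm\epsilon$; here $\mathcal{S}$ already incorporates the Stirling expansion of the multinomial coefficient. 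This is the content of \cite{BGmulti}, obtained by combining concentration-of-measure and large-deviation estimates, which localise the empirical measure near $\mu_{{\rm eq},\bm\epsilon}$, with a systematic analysis of the Dyson--Schwinger (loop) equations to reach the subleading orders. The delicate points are the uniform control of the error terms in $\bm\epsilon$ and the clean isolation of the fluctuation field; absent these, the recombination below would not be justified.

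Granting the smooth expansion, I would run a Laplace analysis of the sum around $\bm\epsilon^*$, writing $N_h = N\epsilon_h^* + \ell_h$ with $\bm\ell = (\ell_1,\ldots,\ell_g)$ ranging over $\ZZ^g$ shifted by the fractional parts of $N\bm\epsilon^*$. Because $\mu_{{\rm eq}} = \mu_{{\rm eq},\bm\epsilon^*}$ maximises $\mathcal{E}$, the map $\bm\epsilon \mapsto \mathcal{E}[\mu_{{\rm eq},\bm\epsilon}]$ is stationary at $\bm\epsilon^*$, so the first-order variation of the leading term vanishes and only the quadratic form in $\bm\ell$ survives at order one (using $\partial_{N_h} = N^{-1}\partial_{\epsilon_h}$). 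The key identification is that the Hessian of the energy reconstructs the period matrix,
\[
\partial_{\epsilon_h}\partial_{\epsilon_k}\mathcal{E}[\mu_{{\rm eq},\bm\epsilon}]\big|_{\bm\epsilon = \bm\epsilon^*} = 2{\rm i}\pi\cdot\tfrac{\beta}{2}\,\tau_{h,k},
\]
which one checks by noting that $\partial_{\epsilon_h}\mathcal{E}$ is, up to constants, a $\mathcal{B}_h$-period of $\phi$ and that differentiating once more produces $\tau_{h,k}$ as in \eqref{tauhk}. Since the Weierstra\ss{} points are real, $\bm\tau$ is purely imaginary, so this Hessian is real and negative definite: it simultaneously guarantees Gaussian convergence of the lattice sum and matches the quadratic weight $\exp\bigl({\rm i}\pi\,\bm\ell\cdot\tfrac{\beta}{2}\bm\tau\,\bm\ell\bigr)$ of $\theta\bigl(\,\cdot\,\big|\tfrac{\beta}{2}\bm\tau\bigr)$.

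It remains to collect the subleading $\bm\ell$-dependence. The order-$N$ term $N\,\mathcal{S}[\mu_{{\rm eq},\bm\epsilon}]$ contributes at first order in $\bm\ell$ the real quantity $\bm\ell\cdot\nabla_\epsilon\mathcal{S}[\mu_{{\rm eq},\bm\epsilon}]\big|_{\bm\epsilon^*}$, which I rewrite as $2{\rm i}\pi\,\bm\ell\cdot\bm v_{{\rm eq}}$ with $\bm v_{{\rm eq}}$ purely imaginary as in \eqref{vnabl}; the remaining smooth coefficients vary by only $o(1)$ over the effective Gaussian width and hence contribute their values at $\bm\epsilon^*$. The shift of $\bm\ell$ off $\ZZ^g$ by the fractional parts of $N\bm\epsilon^*$ is exactly the data of the characteristic $-N\bm\epsilon^*$. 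Assembling, the lattice sum becomes $\vartheta_{-N\bm\epsilon^*,\bm 0}\bigl(\bm v_{{\rm eq}}\big|\tfrac{\beta}{2}\bm\tau\bigr)$, while the $\bm\epsilon$-independent prefactors combine into $N^{\frac{\beta}{2}N+\varkappa}\,{\rm e}^{N^2\mathcal{E}[\mu_{{\rm eq}}] + N\mathcal{S}[\mu_{{\rm eq}}] + \mathcal{G}[\mu_{{\rm eq}}]}$, which is the claimed asymptotics.
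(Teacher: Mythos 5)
The paper does not prove this statement: it is imported verbatim from \cite[Theorem~1.5]{BGmulti}, so there is no in-paper proof to compare against. Your outline is a faithful reconstruction of the strategy of that reference --- decompose $Z_N^V$ over the discrete filling fractions, invoke the smooth uniform $o(1)$ expansion of the fixed-filling-fraction free energy obtained there from concentration and Dyson--Schwinger analysis, and resum the lattice fluctuations into $\vartheta_{-N\bm{\epsilon}^*,\bm{0}}\bigl(\bm{v}_{{\rm eq}}\big|\tfrac{\beta}{2}\bm{\tau}\bigr)$ using that the Hessian of $\bm{\epsilon}\mapsto\mathcal{E}[\mu_{{\rm eq},\bm{\epsilon}}]$ at $\bm{\epsilon}^*$ equals $2{\rm i}\pi\,\tfrac{\beta}{2}\bm{\tau}$ --- and you correctly identify that the only genuinely hard ingredient is the uniform-in-$\bm{\epsilon}$ expansion, which you (like the paper) defer to \cite{BGmulti} rather than prove.
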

In \eqref{vnabl} one differentiates with respect to $\bm{\epsilon} = (\epsilon_1,\ldots,\epsilon_g)$,
keeping in mind that the filling fraction $\epsilon_{0}$, associated to the
component of the support $[a_{0}, b_{0}]$, satisfies
$\epsilon_0 = 1 - (\epsilon_1 + \cdots + \epsilon_g)$ and thus depends on
$\epsilon_{1}, \ldots, \epsilon_{g}$. The kernels appearing in the determinantal and
Pfaffian formulae of Borodin and Strahov (Theorems~\ref{beta2form}--\ref{th:idbeta4}) can be
estimated using the following generalised central limit theorem. We
use the notation $\oint_{S}$ for a sum of contour integrals in the
positive direction around the connected components of the support $S$
of the equilibrium measure.

\begin{Theorem}[{\cite[Theorem 1.6]{BGmulti}}]\label{thm:clt}
 Let $f$ be a holomorphic function in a complex neighborhood of $A$. Under the same assumptions as Theorem~$\ref{Asymppart}$, we have as $N \to \infty$
\[
\bigl\langle\ee^{\sum_{i=1}^{N}f(\lambda_i)}\bigr\rangle^{V}_{N} \sim{\rm e}^{N\mathcal{L}[f] + \mathcal{H}[f] + \frac{1}{2}\mathcal{Q}[f,f]} \frac{\vartheta_{-N\bm{\epsilon}^{*},\bm{0}}\bigl(\bm{v}_{{\rm eq}} + \bm{\mathcal{U}}[f] \big| \frac{\beta}{2}\bm{\tau}\bigr)}{\vartheta_{-N\bm{\epsilon}^{*},\bm{0}}\bigl(\bm{v}_{{\rm eq}} \big| \frac{\beta}{2}\bm{\tau}\bigr)} .
\]
Here
\[
\mathcal{L}[f] = \oint_{S} W_1(\xi)\frac{f(\xi)\dd\xi}{2{\rm i}\pi}, \qquad
\mathcal{Q}[f,f] = \frac{2}{\beta} \oint_{S^2} W_{2}(\xi_1,\xi_2) \frac{f(\xi_1)\dd \xi_1}{2{\rm i}\pi} \frac{f(\xi_2)\dd \xi_2}{2{\rm i}\pi},
\]
where $W_1$ and $W_2$ are calculated in a model with fixed filling fraction tending to $\bm{\epsilon}^*$,
$\mathcal{H}[f]$ is a~linear form whose expression is irrelevant, and
\[
\bm{\mathcal{U}}[f] = \oint_{S} f(X(z)) \frac{ \dd \bm{u}(z)}{2{\rm i}\pi}.
\]
\end{Theorem}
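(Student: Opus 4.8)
The plan is to write the Laplace transform of the linear statistic as a ratio of partition functions with a potential perturbed at order $1/N$, and then to feed both into the asymptotic expansion of Theorem~\ref{Asymppart}. Absorbing $f$ into the potential gives the exact identity
\[
\bigl\langle \ee^{\sum_{i=1}^{N} f(\lambda_i)}\bigr\rangle_{N}^{V} = \frac{Z_{N}^{\tilde V}}{Z_{N}^{V}}, \qquad \tilde V = V - \frac{2}{\beta N}\,f,
\]
valid since $-\frac{\beta N}{2}\tilde V = -\frac{\beta N}{2}V + f$ termwise. As $f$ is holomorphic near $A$, the potential $\tilde V$ is analytic and, for $N$ large, an $O(1/N)$ off-critical perturbation of $V$ with the same support structure; after approximating $f$ by polynomials (or invoking the analytic version of Theorem~\ref{Asymppart}) all its hypotheses hold for $\tilde V$. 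The whole proof then reduces to Taylor-expanding every functional in Theorem~\ref{Asymppart} to the relevant order in the parameter $t = \frac{2}{\beta N}$.

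The prefactor $N^{\frac{\beta}{2}N + \varkappa}$ is unchanged and cancels in the ratio. For the energy the envelope theorem applies, because $\mu_{{\rm eq}}$ maximises \eqref{energy}: at first order only the explicit $V$-dependence contributes, so
\[
N^{2}\bigl(\mathcal{E}[\mu_{{\rm eq}}^{\tilde V}] - \mathcal{E}[\mu_{{\rm eq}}^{V}]\bigr) = N \int_{S} f\,\dd\mu_{{\rm eq}} + \tfrac{1}{2}\mathcal{Q}[f,f] + o(1),
\]
where the first term equals $N\mathcal{L}[f]$ after rewriting $\int_S f\,\dd\mu_{{\rm eq}} = \oint_{S} W_{1}(\xi)f(\xi)\frac{\dd\xi}{2\mathrm{i}\pi}$ via the jump of the Stieltjes transform, and the $O(1)$ second-order term is governed by the linear response $\partial_t \mu_{{\rm eq}}^{V - tf}$, which is controlled by the two-point function; matching it to \eqref{BW2} produces exactly $\tfrac{1}{2}\mathcal{Q}[f,f] = \tfrac{1}{\beta}\oint_{S^{2}} W_{2} f f$. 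Since the relevant susceptibility is that of the constrained problem, $W_{1}$ and $W_{2}$ here are those of the fixed-filling-fraction model, as stated. The subleading smooth functionals $N\mathcal{S}$ and $\mathcal{G}$ contribute the linear form $\mathcal{H}[f]$ together with constants that cancel.

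It remains to track the theta factor $\vartheta_{-N\bm{\epsilon}^{*},\bm{0}}\bigl(\bm{v}_{{\rm eq}}\,\big|\,\tfrac{\beta}{2}\bm{\tau}\bigr)$. The period matrix changes only by $O(1/N)$, and since the sum defining the theta function localises on values of $\bm{n} + \bm{\mu}$ of size $O(1)$ (with $\bm{\mu} = -N\bm{\epsilon}^{*}$), this matrix shift is negligible; likewise $\bm{v}_{{\rm eq}}$, defined by \eqref{vnabl}, moves only by $O(1/N)$. The single genuinely $O(1)$ effect is the shift of the \emph{characteristic}: the equilibrium filling fractions vary under the perturbation, and $-N\bigl(\bm{\epsilon}^{*,\tilde V} - \bm{\epsilon}^{*}\bigr)$ tends to a finite vector. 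Converting this characteristic shift into an argument shift through the quasi-periodicity of the theta function (equivalently, through the resummation over the integer numbers of eigenvalues per cut that underlies Theorem~\ref{Asymppart}) turns $\bm{v}_{{\rm eq}}$ into $\bm{v}_{{\rm eq}} + \bm{\mathcal{U}}[f]$, the accompanying phases being absorbed into the exponential prefactors above. The period form of the shift follows from the identity $\oint_{S} f(X(z))\,\frac{\dd u_{h}(z)}{2\mathrm{i}\pi} = \int_{S} f\,\partial_{\epsilon_{h}}\mu_{{\rm eq},\bm{\epsilon}}\big|_{\bm{\epsilon}^{*}}$, i.e.\ the derivative of the constrained equilibrium measure with respect to the filling fractions pairs with $f$ through the holomorphic differentials $\dd\bm{u}$, which identifies the shift with $\bm{\mathcal{U}}[f]$.

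The main obstacle is analytic rather than algebraic: one must show that the expansion of Theorem~\ref{Asymppart} holds \emph{uniformly} for $\tilde V$ ranging over the $O(1/N)$ family $V - \frac{2}{\beta N}f$, so that the Taylor expansions above can be performed inside the $\sim$ with error $o(1)$; this is where the real work of \cite{BGmulti} lies, via concentration of the empirical measure (Theorems~\ref{thm:equilibrium} and~\ref{mueqfixed}) and the Dyson--Schwinger equations. A secondary technical point is the identification of the quadratic response with $W_{2}$ and of the filling-fraction response with the periods of $\dd\bm{u}$, both of which ultimately rest on the same loop-equation analysis that produces \eqref{BW2} and on the deformation theory of the equilibrium problem.
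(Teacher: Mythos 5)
First, a point of comparison: the paper does not prove this statement at all. Theorem~\ref{thm:clt} is imported verbatim from \cite[Theorem~1.6]{BGmulti}, so there is no internal proof to match your argument against; any assessment has to be made against the actual proof in that reference. Your high-level strategy --- absorb $f$ into the potential via $\tilde V = V - \tfrac{2}{\beta N}f$, expand the free energy at fixed filling fractions, and resum over the fillings to produce the theta function --- is indeed the strategy of \cite{BGmulti}, and your bookkeeping of the orders in $t = \tfrac{2}{\beta N}$ for the energy term ($N\mathcal{L}[f]$ from the envelope theorem, $\tfrac12\mathcal{Q}[f,f]$ from the linear response identified with $W_2$) is essentially right.

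There are, however, two genuine gaps. First, the reduction to Theorem~\ref{Asymppart} applied to $\tilde V$ cannot work as stated: Theorem~\ref{Asymppart} is formulated for real even-degree polynomial potentials, whereas the $f$ actually needed in this paper (and in the CLT as stated) is complex-valued, e.g.\ $f = c\ln(x-\lambda)$ with $x \in \mathbb{C}\setminus A$. One must instead expand at fixed filling fractions uniformly in a complex parameter multiplying $f$ and use analyticity; this is not a cosmetic point, since for complex $\tilde V$ there is no equilibrium measure or maximisation problem to invoke the envelope theorem on. Second, your mechanism for the shift $\bm{v}_{{\rm eq}} \to \bm{v}_{{\rm eq}} + \bm{\mathcal{U}}[f]$ is misattributed. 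You derive it from the displacement of the equilibrium filling fractions $\bm{\epsilon}^{*,\tilde V}$ and quasi-periodicity of the characteristic; but a characteristic shift by $\bm{\delta}$ produces an argument shift of the specific form $\tfrac{\beta}{2}\bm{\tau}(\bm{\delta})$, and showing that this equals $\bm{\mathcal{U}}[f]$ would require a nontrivial identity relating $\partial_t\bm{\epsilon}^*$ to $\bm{\tau}^{-1}\bigl(\bm{\mathcal{U}}[f]\bigr)$ which you neither state nor prove. In the actual derivation the shift arises more directly: in the sum over fillings $\bm{N}$, the $f$-dependent leading term is $N\mathcal{L}_{\bm{\epsilon}}[f]$ with $\bm{\epsilon} = \bm{N}/N$, and its first-order Taylor expansion around $\bm{\epsilon}^*$ contributes $2{\rm i}\pi(\bm{N} - N\bm{\epsilon}^*)\cdot\bm{\mathcal{U}}[f]$ to the exponent (using $\partial_{\epsilon_h}W_1\,\dd X = 2{\rm i}\pi\,\dd u_h$), which is exactly an argument shift of the theta sum by $\bm{\mathcal{U}}[f]$ with the characteristic left untouched. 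Beyond these two points, the hard analytic content (uniformity of the all-order expansion at fixed fillings) is, as you acknowledge, simply deferred to \cite{BGmulti}, which is acceptable for a quoted theorem but means the proposal is an outline rather than a proof.
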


\subsection{Three explicit formulae}

We will establish in Section~\ref{sec:can} the following expression for the equilibrium energy in terms of the geometry of the spectral curve.
\begin{prop} \label{lem:Ef0} The equilibrium energy is
\[
-\mathcal{E}[\mu_{{\rm eq}}] = \frac{\beta}{2}\mathcal{L}[V] + \frac{\beta^2}{8}\mathcal{Q}[V,V] + {\rm i}\pi \beta \bm{\epsilon}^* \cdot  (\bm{\tau}(\bm{\epsilon}^*) + \bm{u}(\infty_-) ).
\]
\end{prop}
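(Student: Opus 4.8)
The plan is to reduce the statement to a single geometric quantity — the common value $\ell$ of the effective potential on the support — and then to evaluate $\ell$ by integrating the one-form $\phi$ and inserting its canonical decomposition \eqref{eq:decomp-omega}. First I would peel off the easy term. Expanding $\mathcal{E}[\mu_{\rm eq}]$ and using that $\mu_{\rm eq}$ is a probability measure gives $\mathcal{E}[\mu_{\rm eq}] = \frac{\beta}{2}\iint \ln|\xi-\eta|\,\dd\mu_{\rm eq}(\xi)\dd\mu_{\rm eq}(\eta) - \frac{\beta}{2}\int V\,\dd\mu_{\rm eq}$. By Theorem~\ref{thm:equilibrium} the effective potential is constant, $U\equiv \ell$, on $S$; integrating $V - 2\int\ln|\cdot-\xi|\dd\mu_{\rm eq} = \ell$ against $\mu_{\rm eq}$ expresses the logarithmic energy through $\int V\dd\mu_{\rm eq}$ and $\ell$, and yields $-\mathcal{E}[\mu_{\rm eq}] = \tfrac{\beta}{4}\int V\dd\mu_{\rm eq} + \tfrac{\beta}{4}\ell$. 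The jump of $W_1$ across $S$ gives $\int V\dd\mu_{\rm eq} = \oint_S W_1(\xi)V(\xi)\tfrac{\dd\xi}{2{\rm i}\pi} = \mathcal{L}[V]$, and since the $W_2$ entering \eqref{BW2} is the geometric ($\beta$-independent) object, $\tfrac{\beta}{2}\mathcal{Q}[V,V] = \oint_{S^2}W_2\,\frac{V(\xi_1)\dd\xi_1}{2{\rm i}\pi}\frac{V(\xi_2)\dd\xi_2}{2{\rm i}\pi}$ carries no $\beta$. Thus the claim is equivalent to the $\beta$-free identity $\ell = \mathcal{L}[V] + \tfrac{\beta}{2}\mathcal{Q}[V,V] + 4{\rm i}\pi\,\bm{\epsilon}^*\cdot(\bm{\tau}(\bm{\epsilon}^*)+\bm{u}(\infty_-))$.

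Next I would produce an integral formula for $\ell$. On the top sheet one has $U'(x)\,\dd x = \dd V - 2\phi$ by \eqref{effpot} and \eqref{Fpm}, while $U(x) = V(x) - 2\ln|x| + o(1)$ as $x\to\infty$. Integrating $U'$ from the branch point $b_g$ out to infinity on $C_+$ and using $U(b_g)=\ell$ gives the regularized expression $\ell = V(b_g) + 2\,\Xi$, where $\Xi = \lim_{z\to\infty_+}\bigl(\int_{b_g}^{z}\phi - \ln X(z)\bigr)$, the integral being taken along $C_+$. I would then insert \eqref{eq:decomp-omega} into $\Xi$: the holomorphic part contributes $-2{\rm i}\pi\,\bm{\epsilon}^*\cdot\bm{u}(b_g)$ (up to $\mathcal{A}$-periods fixed by the marking), the third-kind part $\dd S_{\infty_+,\infty_-}$ absorbs the logarithmic divergence and is evaluated through the prime form via Lemmas~\ref{IntB} and~\ref{LemE0in}, and the second-kind part $-\sum_k \tfrac{t_k}{k}\dd B_{\infty_-,k}$ reduces to residues at $\infty_-$.

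The remaining step is to reorganize these contributions into the three invariants on the right-hand side. The Riemann bilinear relation applied to $\phi$ and the normalized holomorphic differentials, together with $\oint_{\mathcal{A}_h}\phi = 2{\rm i}\pi\epsilon_h^*$ and $\oint_{\mathcal{B}_h}\phi = 0$ (Remarks~\ref{Rem:vanishint} and~\ref{remB0}), produces the period term $4{\rm i}\pi\,\bm{\epsilon}^*\cdot\bm{\tau}(\bm{\epsilon}^*)$ and converts the endpoint data $\bm{u}(b_g)$ into $\bm{u}(\infty_-)$; here the hyperelliptic involution, under which the holomorphic differentials are odd, is the efficient tool. The linear-in-$V$ residues at $\infty_-$ must assemble into $\mathcal{L}[V]$, and the quadratic term must emerge from the second-order behaviour of $\phi$ at $\infty_-$ once \eqref{BW2} is used to trade the bidifferential $B$ for $W_2$, the subtraction $\tfrac{\dd X_1\dd X_2}{(X_1-X_2)^2}$ being exactly what cancels the spurious diagonal pole so that the double contour around $S$ reproduces $\tfrac{\beta}{2}\mathcal{Q}[V,V]$.

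The \textbf{hardest part} is precisely this last reorganization: matching the regularization constants coming from the prime-form expansions at $\infty_\pm$, and the bookkeeping that turns the single-integral representation of $\ell$ into the manifestly first-plus-second-order combination $\mathcal{L}[V] + \tfrac{\beta}{2}\mathcal{Q}[V,V]$. Both are controlled by residue calculus at $\infty_\pm$ and by the bilinear relations, but they require keeping precise track of the marking fixed in Section~\ref{sec:constr-spectr-curve}, since it is the $\mathcal{B}$-period structure and the identification $\oint_{\mathcal{B}_h}\phi = 0$ that ultimately force the specific geometric combination $\bm{\tau}(\bm{\epsilon}^*)+\bm{u}(\infty_-)$ to appear.
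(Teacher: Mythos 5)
Your reductions are correct as far as they go: $-\mathcal{E}[\mu_{{\rm eq}}] = \frac{\beta}{4}\mathcal{L}[V] + \frac{\beta}{4}\ell$ where $\ell$ is the constant value of $U$ on $S$, the identity $\ell = V(b_g) + 2\Xi$ holds, and the proposition is indeed equivalent to the $\beta$-free statement $\ell = \mathcal{L}[V] + \frac{\beta}{2}\mathcal{Q}[V,V] + 4{\rm i}\pi\,\bm{\epsilon}^*\cdot(\bm{\tau}(\bm{\epsilon}^*)+\bm{u}(\infty_-))$. The gap is in the step you yourself flag as hardest, and it is more than unexecuted bookkeeping. When you insert the decomposition \eqref{eq:decomp-omega} into $\Xi$, the second-kind part contributes $2\sum_{k}\frac{t_k}{k}\int_{\infty_+}^{z_{b_g}} \dd B_{\infty_-,k}$ (with $z_{b_g}$ the point of $\widehat{C}$ above $b_g$), which by Lemma~\ref{L44} is $\beta\,\mathcal{Q}\bigl[\ln(b_g-\bullet),V\bigr]$ --- a \emph{mixed} quantity, linear in the $t_k$ and attached to the point $b_g$, not the double residue at $\infty_-$ that produces $\frac{\beta}{2}\mathcal{Q}[V,V]$. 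So your claim that ``the quadratic term must emerge from the second-order behaviour of $\phi$ at $\infty_-$'' does not follow from the single-point evaluation at $b_g$: one still has to show that $V(b_g)$, $\bm{\epsilon}^*\cdot\bm{u}(z_{b_g})$, the prime-form constants at $b_g$ and $\beta\,\mathcal{Q}[\ln(b_g-\bullet),V]$ reassemble into the $b_g$-independent combination $\mathcal{L}[V]+\frac{\beta}{2}\mathcal{Q}[V,V]+4{\rm i}\pi\bm{\epsilon}^*\cdot(\bm{\tau}(\bm{\epsilon}^*)+\bm{u}(\infty_-))$, and that requires a further nontrivial identity. A cleaner variant of your own strategy is to average $U\equiv\ell$ against $\mu_{{\rm eq}}$ instead of evaluating at $b_g$: then $\ell = \mathcal{L}[V] - 2\iint\ln|x-y|\,\dd\mu_{{\rm eq}}(x)\dd\mu_{{\rm eq}}(y)$ and the double contour integral --- hence $\mathcal{Q}[V,V]$ --- appears for structural reasons; even so, the period bookkeeping producing the term $4{\rm i}\pi\bm{\epsilon}^*\cdot(\bm{\tau}(\bm{\epsilon}^*)+\bm{u}(\infty_-))$ remains comparable in length to Lemmas~\ref{L44} and~\ref{lem:compute-nu-infty} combined.

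You should also be aware that the paper's proof (Section~\ref{sec:can}) is of an entirely different and much softer nature. Proposition~\ref{lem:Ef0} is obtained there as a by-product of the proof of Theorem~\ref{thm:formula_beta_2}: running that proof without assuming the proposition yields the identity \eqref{eq:formula-beta-2} with the extra prefactor \eqref{aphsee}; sending all four points to $\infty_+$ forces this prefactor to equal $1$; and since $\bm{\tau}$ and $\bm{u}(\infty_-)$ are purely imaginary for a hyperelliptic curve with real Weierstra\ss{} points, the real exponent must vanish. No direct evaluation of $\ell$ is ever performed. Your approach, if completed, would give an independent geometric proof in the spirit of Bertola's free-energy formula --- which the paper explicitly chose not to carry out --- but as written the decisive computation is asserted rather than proved, and the asserted mechanism for producing $\mathcal{Q}[V,V]$ is not the one that the evaluation at $b_g$ actually delivers.
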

The right-hand side is proportional to $\frac{\beta}{2}$, since $W_2$ contains a factor $\frac{2}{\beta}$, and so is the left-hand side. Accordingly this is an identity between $\beta$-independent quantities. There is a classical link between random matrix theory and the theory
of Frobenius manifolds: the free energy at leading order,
namely $\mathcal{E}[\mu_{{\rm eq}}]$ coincides with the
prepotential of the Hurwitz--Frobenius manifold associated to the
spectral curve of the random matrix ensemble. A formula for this free energy was established in the more general context of the 2-matrix model in \cite{Bertola}, involving only the geometry of the spectral curve. It involves the same ingredient but does not have exactly the same form as Proposition~\ref{lem:Ef0}, which is the formula we need.

 Before going further, we give two extra formulae. The first one evaluates the argument $\bm{v}_{{\rm eq}}$ of the theta functions in Theorems~\ref{Asymppart} and \ref{thm:clt}; it is not necessary for Section~\ref{sec:derivation-formulae}, but we include it for completeness. The second one will be used in the proof of Lemma~\ref{Kasymint}.

 \begin{prop}\label{lem:compute-nu}
 Assume $M$ in \eqref{W1SM} is of the form
 $M(x) = t_{2g + 2}\prod_{h = 1}^g (x - z_h)$ with roots
 $z_{h}\notin A$ satisfying $b_{h - 1} < z_h < a_h$ for any
 $h \in [g]$, and denote by $(\bm{e}_1,\ldots,\bm{e}_g)$ the
 canonical basis of $\mathbb{C}^g$.

 The function $\operatorname{Im}\bm{u}(z) = \int_{\infty_+}^{z} \dd \bm{u}$ is a
 single-valued function of $z \in \widehat{C}_+$, and we have
\begin{equation*}
\bm{v}_{{\rm eq}} = 2\pi\biggl(1 - \frac{\beta}{2}\biggr)\Biggl[ \frac{g + 1}{2{\rm i}}\bm{u}(\infty_-) + \sum_{k = 1}^{g} \biggl(\operatorname{Im}\bm{u}(z_k) + \frac{g + 1 - k}{2{\rm i}} \bm{\tau}(e_k)\biggr)\Biggr].
\end{equation*}
\end{prop}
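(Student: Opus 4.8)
The plan is to first reduce $\bm{v}_{{\rm eq}}$ to a gradient of the equilibrium entropy, and then evaluate that gradient geometrically. Since in the expression for $\mathcal{S}[\mu]$ in Theorem~\ref{Asymppart} every term other than $\bigl(1-\tfrac{\beta}{2}\bigr)\mathrm{Ent}[\mu]$ is a $\mu$-independent constant, \eqref{vnabl} gives at once
\[
\bm{v}_{{\rm eq}} = \frac{1-\beta/2}{2{\rm i}\pi}\,\nabla_{\bm{\epsilon}}\mathrm{Ent}[\mu_{{\rm eq},\bm{\epsilon}}]\big|_{\bm{\epsilon}=\bm{\epsilon}^*},
\]
so the entire content is the computation of $\partial_{\epsilon_k}\mathrm{Ent}$ at $\bm{\epsilon}^*$. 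I would first record the single-valuedness claim: on $\widehat{C}_+$ the only independent loops are the $\mathcal{A}_h$, around which $\dd\bm{u}$ has the real periods $\oint_{\mathcal{A}_h}\dd\bm{u}\in\ZZ^{g}$; hence $\operatorname{Im}\int_{\infty_+}^{z}\dd\bm{u}$ is single-valued there, which is what makes $\operatorname{Im}\bm{u}(z_k)$ meaningful.

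The key geometric input is the variation of the spectral curve with the filling fractions. Working in the fixed-filling-fraction model at fixed potential $V$, the form $\phi=\W_1\,\dd X$ is characterized by its singular parts at $\infty_\pm$ (fixed by $V$) and by $\oint_{\mathcal{A}_h}\phi=2{\rm i}\pi\epsilon_h$. Differentiating in $\epsilon_k$ leaves the singular parts untouched, so $\partial_{\epsilon_k}\phi$ is the holomorphic differential with $\oint_{\mathcal{A}_h}\partial_{\epsilon_k}\phi=2{\rm i}\pi\delta_{hk}$, i.e.\ $\partial_{\epsilon_k}\phi=2{\rm i}\pi\,\dd u_k$ (the standard ``variation of filling fraction'' formula; one may also differentiate \eqref{eq:decomp-omega} directly). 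Taking the jump across a cut and using that $\dd u_k$ is odd under the hyperelliptic involution, this yields the boundary-value identity $\partial_{\epsilon_k}(\rho\,\dd x)=-2\,\dd u_k$ on $S$, with $\dd u_k$ the limit from $\widehat{C}_+$.

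Now $\mathrm{Ent}[\mu]=-\int_S\rho\ln\rho\,\dd x$. Because $\rho$ vanishes like a square root at the Weierstra\ss{} points, the boundary terms from the moving support drop out and $\int_S\partial_{\epsilon_k}\rho\,\dd x=\partial_{\epsilon_k}1=0$, so $\partial_{\epsilon_k}\mathrm{Ent}=-\int_S\ln\rho\,\partial_{\epsilon_k}\rho\,\dd x=2\int_S\ln\rho\,\dd u_k$. Writing $2\ln\rho=\ln(-M\sigma)-2\ln(2\pi)$ and using $\sum_{h=0}^g\oint_{\mathcal{A}_h}\dd u_k=0$ (which follows from $[\mathcal{A}_0]=-\sum_{h\geq 1}[\mathcal{A}_h]$) to kill the constant, this becomes $\int_S\ln(-M\sigma)\,\dd u_k$. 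I would then view $\ln(-M\sigma)$, up to branch, as the boundary value of a primitive $\mathcal{L}=\int\eta$ of the explicit third-kind differential
\[
\eta=\dd\ln(M\sigma)=\sum_{j=1}^{g}\frac{\dd X}{X-z_j}+\sum_{h=0}^{g}\Bigl(\frac{\dd X}{X-a_h}+\frac{\dd X}{X-b_h}\Bigr),
\]
with simple poles at $z_j^{\pm}$, at the $2g+2$ Weierstra\ss{} points, and at $\infty_\pm$, and evaluate $\int_S\mathcal{L}\,\dd u_k$ by a Riemann bilinear/reciprocity argument on the surface cut along the $\mathcal{A},\mathcal{B}$ cycles. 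The residue contributions produce the Abel-map values $\bm{u}(z_j)$, $\bm{u}(\infty_-)$ and $\bm{u}$ at the Weierstra\ss{} points, while the branch mismatch across cuts produces the $\mathcal{B}$-periods, i.e.\ the entries of $\bm{\tau}$.

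The main obstacle is exactly this last bookkeeping. Three things must be controlled at once: the multivaluedness of $\mathcal{L}$ (the logarithm) together with a compatible choice of branch on each cut; the apparent divergences of $\ln\rho$ at the Weierstra\ss{} points, which must cancel between boundary terms and residues; and the conversion of the resulting complex residue sum into the real form of the statement. The Weierstra\ss{}-point residues are handled using that on a hyperelliptic curve the images $\bm{u}(a_h),\bm{u}(b_h)$ are explicit half-integer characteristics, so that $2\sum\bm{u}(\text{Weierstra\ss{}})$ collapses into integer combinations of $\bm{e}_j$ and $\bm{\tau}(\bm{e}_j)$; it is here, after using $\bm{u}(\infty_+)=\bm{0}$ and pairing the $z_j^{+}$ and $z_j^{-}$ contributions into $\operatorname{Im}\bm{u}(z_j)$, that the integer coefficients $g+1$ and $g+1-j$ and the factors $\tfrac{1}{2{\rm i}}$ of the statement emerge. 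The reality of $\nabla_{\bm{\epsilon}}\mathrm{Ent}$, automatic since entropy and filling fractions are real, is a useful consistency check: it forces the real part of the bracket in the statement to vanish, an identity reflecting the real structure of the curve.
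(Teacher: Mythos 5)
Your reduction is sound up to the point where the real work begins: passing from $\bm{v}_{{\rm eq}}$ to $\bigl(\tfrac{\beta}{2}-1\bigr)\int_S(\partial_{\epsilon_k}\rho)\ln\rho\,\dd x$, invoking $\partial_{\epsilon_k}W_1(X(z))\dd X(z)=2{\rm i}\pi\,\dd u_k(z)$, using $\int_S\partial_{\epsilon_k}\rho=0$ to drop constants, and the single-valuedness of $\operatorname{Im}\bm{u}$ on $\widehat{C}_+$ are all exactly as in the paper. The gap is that the evaluation of $\int_S\ln\rho\,\dd u_k$ --- which is where the entire content of the statement lives, namely the coefficients $\tfrac{g+1}{2{\rm i}}$ and $\tfrac{g+1-k}{2{\rm i}}$ --- is only announced, not performed. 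You yourself identify the branch choices for the logarithmic primitive, the behaviour at the Weierstra\ss{} points, and the conversion of the residue/period sum into the real form as ``the main obstacle'' and then assert that the stated integers ``emerge''. That assertion is the proposition; without the bookkeeping there is no proof. Moreover the sketch contains a concrete error that would derail it: from Lemma~\ref{lemMmm} and the explicit density at the start of Appendix~\ref{AppA}, $\rho\propto |M|\sqrt{|\sigma|}$, so $2\ln\rho=\ln\bigl(-M^2\sigma\bigr)+\mathrm{const}$, and the third-kind differential you need has residue $2$ at each $z_j^{\pm}$. Your $\dd\ln(M\sigma)$ has residue $1$ there, so the $z_j$-contributions would carry half their correct weight relative to the Weierstra\ss{}-point contributions, and the resulting formula would not match the statement (the relative weight of $\operatorname{Im}\bm{u}(z_k)$ against the $\bm{\tau}(\bm{e}_k)$ and $\bm{u}(\infty_-)$ terms would be off by a factor of $2$).

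For comparison, the paper avoids reciprocity altogether. It writes $\ln\rho$ as $\sum_k\ln|x-z_k|+\tfrac12\sum_h(\ln|x-a_h|+\ln|x-b_h|)+\mathrm{const}$ and reduces everything to $\Upsilon_h(\xi)=\int_S\partial_{\epsilon_h}\rho(x)\ln|x-\xi|\,\dd x$, which it computes as a function of $\xi$: one shows $\partial_\xi\Upsilon_h(\xi)=2{\rm i}\pi\,\tfrac{\dd u_h}{\dd X}(\mathfrak{z}(\xi))$ for $\xi\notin S$ and $\partial_\xi\Upsilon_h(\xi)=0$ for $\xi\in\mathring S$ (the principal value cancels by the involution antisymmetry of $\dd u_h$), then integrates from $-\infty$ using $\Upsilon_h(\infty)=0$. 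After that, only the elementary half-period identities $\bm{u}(b_k)-\bm{u}(a_k)=-\tfrac12\oint_{\mathcal{A}_k}\dd\bm{u}$, $\bm{u}(a_k)-\bm{u}(b_{k-1})=\tfrac12\oint_{\mathcal{B}_k-\mathcal{B}_{k-1}}\dd\bm{u}$ and $\bm{u}(a_0)=\tfrac12\bm{u}(\infty_-)$ are needed, and reality of $\bm{v}_{{\rm eq}}$ converts $\bm{u}$ into $\operatorname{Im}\bm{u}$. If you want to salvage your route, you would need to correct the differential to $\dd\ln\bigl(M^2\sigma\bigr)$ and then actually carry out the cut-surface contour argument with a consistent branch of the logarithm on each component of $S$ and each gap; as it stands the proof is incomplete.
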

\begin{proof} See Appendix~\ref{AppA}. As we explain in Section~\ref{sec:can}, $\bm{u}(\infty_-)$ and $\bm{\tau}$ are purely imaginary, so all terms in the right-hand side are real, as it should be. The domain $\widehat{C}_+$ is homeomorphic to the non simply-connected domain $\widehat{\mathbb{C}} \setminus S$, so $\int_{\infty_+}^{z} \dd \bm{u}$ is multi-valued. However, the ambiguities are $\mathcal{A}$-periods of $\dd\bm{u}$ which are real, so $\operatorname{Im}\bm{u}(z)$ is single-valued.
\end{proof}
\begin{Remark}
 If we remove the assumption on the roots of $M$ in Lemma
 \ref{lem:compute-nu}, we can still compute $ \bm{v}_{{\rm eq}}$ and
 obtain a formula of a similar form.
\end{Remark}

 \begin{Lemma}\label{lem:compute-nu-infty}
\[
\bm{\mathcal{U}}[V] = - \Res_{\infty_+} V\dd\bm{u} = \bm{\tau}(\bm{\epsilon}^*) + \bm{u}(\infty_{-}).
\]
\end{Lemma}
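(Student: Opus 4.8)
The plan is to establish the two equalities separately, treating the second one component by component. For the first equality $\bm{\mathcal{U}}[V] = -\Res_{\infty_+} V\dd\bm{u}$, I would observe that $V\circ X$ is a polynomial in the meromorphic function $X$ and $\dd\bm{u}$ is holomorphic on all of $\widehat{C}$, so $V(X)\dd\bm{u}$ is a meromorphic $1$-form whose only poles lie at $\infty_\pm$. Reading $\oint_S$ as a sum of small positively oriented loops around the cuts $S_0,\dots,S_g$ drawn in the sheet $\widehat{C}_+$, and using that $V(X)\dd\bm{u}$ is holomorphic on $\widehat{C}_+\setminus\{\infty_+\}$, I would deform this collection of loops across $\widehat{C}_+$ into a single large loop encircling $\infty_+$. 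With respect to the local coordinate $\zeta_{\infty_+} = 1/X$ this loop is negatively oriented, so $\oint_S V(X)\dd\bm{u} = -2\pi{\rm i}\,\Res_{\infty_+}(V\dd\bm{u})$, and dividing by $2\pi{\rm i}$ gives the first equality.

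For the second equality I would first reduce the residue at $\infty_+$ to one at $\infty_-$: since $V\dd u_h$ is globally meromorphic on the compact surface $\widehat{C}$ with poles only at $\infty_\pm$, the residue theorem yields $-\Res_{\infty_+}(V\dd u_h) = \Res_{\infty_-}(V\dd u_h)$. Writing $V(x) = \sum_{k=1}^{d}\frac{t_k}{k}x^k$ and using $X^k = \zeta_{\infty_-}^{-k}$ near $\infty_-$, I would then invoke the bilinear reciprocity between the holomorphic differential $\dd u_h$ and the second-kind differentials $\dd B_{\infty_-,k}$ of Section~\ref{sec:decomp-merom-forms}, namely
\[
\oint_{\mathcal{B}_h}\dd B_{\infty_-,k} = 2\pi{\rm i}\,\Res_{\infty_-}\bigl(\zeta_{\infty_-}^{-k}\dd u_h\bigr),
\]
which follows from the Riemann bilinear relation applied to $\dd u_h$ and $\dd B_{\infty_-,k}$ (both having vanishing $\mathcal{A}$-periods apart from $\oint_{\mathcal{A}_j}\dd u_h = \delta_{hj}$) together with the local identification of both sides with the coefficient of $\zeta_{\infty_-}^{k-1}$ in the expansion of $\dd u_h$ at $\infty_-$. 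Summing over $k$ gives $\Res_{\infty_-}(V\dd u_h) = \frac{1}{2\pi{\rm i}}\oint_{\mathcal{B}_h}\bigl(\sum_{k=1}^{d}\frac{t_k}{k}\dd B_{\infty_-,k}\bigr)$.

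The final step is to feed in the decomposition \eqref{eq:decomp-omega} of $\phi$, rearranged as $\sum_{k}\frac{t_k}{k}\dd B_{\infty_-,k} = \sum_{k}2\pi{\rm i}\,\epsilon^*_k\dd u_k + \dd S_{\infty_+,\infty_-} - \phi$, and take $\mathcal{B}_h$-periods. Here $\oint_{\mathcal{B}_h}\dd u_k = \tau_{h,k}$ by \eqref{tauhk}, so the first-kind part contributes $2\pi{\rm i}\sum_k \tau_{h,k}\epsilon^*_k = 2\pi{\rm i}\,(\bm{\tau}(\bm{\epsilon}^*))_h$; the third-kind part contributes $\oint_{\mathcal{B}_h}\dd S_{\infty_+,\infty_-} = 2\pi{\rm i}\int_{\infty_+}^{\infty_-}\dd u_h = 2\pi{\rm i}\,u_h(\infty_-)$, since the Abel map is based at $\infty_+$; and $\oint_{\mathcal{B}_h}\phi = 0$ by Remark~\ref{remB0}. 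Combining, $\Res_{\infty_-}(V\dd u_h) = (\bm{\tau}(\bm{\epsilon}^*))_h + u_h(\infty_-)$, which reassembled over $h$ is exactly the claimed identity.

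The contour deformation and the period bookkeeping are routine; the one point requiring genuine care is the reciprocity $\oint_{\mathcal{B}_h}\dd B_{\infty_-,k} = 2\pi{\rm i}\,\Res_{\infty_-}(\zeta_{\infty_-}^{-k}\dd u_h)$, i.e.\ correctly matching the $\mathcal{B}$-period of the second-kind differential to the order-$(k-1)$ Taylor coefficient of $\dd u_h$ at $\infty_-$. I expect this (standard but sign-sensitive) identity to be the main obstacle, and I would verify its orientation conventions in tandem with those of the first step so that all factors of $\pm 2\pi{\rm i}$ are consistent.
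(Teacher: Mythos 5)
Your proof is correct, but the route you take for the second equality is genuinely different from the paper's. The paper integrates by parts at $\infty_+$: it writes $\Res_{\infty_+}V\dd\bm{u} = -\Res_{\infty_+}\bm{u}\,\dd V$ with $\dd V = 2(\phi - Y\dd X)$, drops the $\phi$-term because $\bm{u}(\infty_+)=\bm{0}$ while $\phi$ has only a simple pole there, uses the hyperelliptic involution to convert $\Res_{\infty_+}\bm{u}Y\dd X$ into $\Res_{\infty_-}\bm{u}Y\dd X - \bm{u}(\infty_-)$, and finishes with a single application of the Riemann bilinear identity to $\bm{u}Y\dd X$, where $\oint_{\B_h}Y\dd X=0$ (Remark~\ref{remB0}) and $\oint_{\A_h}Y\dd X = 2{\rm i}\pi\epsilon_h^*$ yield the term $-\bm{\tau}(\bm{\epsilon}^*)$. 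You instead transfer the residue to $\infty_-$ by the residue theorem, convert $\Res_{\infty_-}\bigl(\zeta_{\infty_-}^{-k}\dd u_h\bigr)$ into $\frac{1}{2{\rm i}\pi}\oint_{\B_h}\dd B_{\infty_-,k}$ via the first/second-kind reciprocity, and then read the $\B$-periods of $\sum_k\frac{t_k}{k}\dd B_{\infty_-,k}$ off the decomposition \eqref{eq:decomp-omega} of $\phi$, using the same two inputs (the Boutroux property and the $\A$-periods of $\phi$) together with the third-kind reciprocity $\oint_{\B_h}\dd S_{\infty_+,\infty_-}=2{\rm i}\pi u_h(\infty_-)$. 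Both arguments consume exactly the same geometric data --- each is an incarnation of Riemann's bilinear relations applied to $\phi$ paired with $\dd u_h$ --- but yours leans on the already-established decomposition \eqref{eq:decomp-omega} and so avoids the involution trick, at the price of invoking the reciprocity laws for second- and third-kind differentials separately (the sign-sensitive step you correctly flag; your stated form of it is the right one). One small point worth making explicit: the identification $\oint_{\B_h}\dd S_{\infty_+,\infty_-} = 2{\rm i}\pi u_h(\infty_-)$ requires the path defining $\dd S_{\infty_+,\infty_-}$ and the path defining $\bm{u}(\infty_-)$ to agree (or be homotopic in $\widehat{C}^0$); the paper arranges this by choosing both to avoid the representatives of the $\A$- and $\B$-cycles, and your write-up should say so.
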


\begin{proof}
 The first equality is obtained by moving the contour around $S$ to $\infty_+$. We focus on the second equality. Since $\dd V = 2(\phi - Y\dd X)$, we have
 \begin{equation*}
	 \Res_{\infty_{+}}V\dd \bm{u} = -\Res_{\infty_{+}}\bm{u}\dd V = -2\Res_{\infty_{+}}\bm{u}(\phi - Y\dd X).
 \end{equation*}
 Since $\infty_+$ is the base point for the Abel map, we have
 $\bm{u}(\infty_{+}) = 0$ and since $\phi$ has only a simple pole at $\infty_+$,
 the first term gives a vanishing residue. The hyperelliptic involution preserves $X$, sends $Y$ and $\dd\bm{u}$ to their opposite, and $\infty_+$ to $\infty_-$. Hence, it sends $\bm{u}$ to $\bm{u}(\infty_-) - \bm{u}$, where $\infty_- \in \widehat{C}^0$. Using the involution as a change of variables, we get
 \begin{equation*}
\Res_{\infty_{+}}\bm{u}Y\dd X = \Res_{\infty_{-}}(\bm{u} - \bm{u}(\infty_{-}))Y\dd X = \Res_{\infty_{-}}\bm{u}Y\dd X - \bm{u}(\infty_{-}).
 \end{equation*}
For the last equality, since the only poles of $Y \dd X$ are $\infty_{\pm}$ we could evaluate
\[
 \Res_{\infty_-} Y \dd X = -\Res_{\infty_+} Y \dd X = -\Res_{\infty_+} \phi = 1.
\]
We then write
\[
\Res_{\infty_+} V \dd\bm{u} =  (\Res_{\infty_+} + \Res_{\infty_-} )Y \dd X - \bm{u}(\infty_-).
\]
The first term can be computed with the Riemann bilinear identity \cite[equation~(3.0.2)]{Farkas-Kra}
\begin{equation}
\label{uYdX}
	( \Res_{\infty_{+}} + \Res_{\infty_-}) \bm{u}Y\dd X = \frac{1}{2{\rm i}\pi} \sum_{h = 1}^{g} \biggl(\oint_{\A_{h}}\dd\bm{u} \cdot \oint_{\B_{h}}Y\dd X - \oint_{\B_{h}}\dd\bm{u}\cdot \oint_{\A_{h}}Y\,\dd X \biggr).
\end{equation}
Taking into account that for any $h,k \in [g]$, we have $\oint_{\B_{h}}Y\dd X = 0$ (Remark~\ref{remB0}), and $\oint_{\B_h}\dd u_{k} = \tau_{h,k}$ and $\oint_{\A_h} Y \dd X = 2{\rm i}\pi \epsilon^*_h$, we find that \eqref{uYdX} is equal to $-\bm{\tau}(\bm{\epsilon}^*)$.
\end{proof}

\subsection{Kernel asymptotics: intermediate computations}

We need to compute an asymptotic equivalent as $K \rightarrow \infty$ of kernels of the form
\begin{equation*}
 \frac{Z_{M}^{\frac{K}{M}V}}{Z_K^V} \Biggl\langle\prod_{j=1}^{m}\det(x_{j} - \Lambda)^{c_{j}}\Biggr\rangle^{\frac{K}{M}V}_{M},
\end{equation*}
where $x_{1}, \ldots, x_{m} \notin A$, $c_{1}, \ldots, c_{m} \in \mathbb{Z}$, and $(M - K) = p$ is a fixed integer. Notice that
\begin{equation}
 \label{Kasmim}
 \frac{Z_{M}^{\frac{K}{M}V}}{Z_K^V}\Biggl\langle\prod_{j=1}^{m}\det(x_{j} - \Lambda)^{c_{j}}\Biggr\rangle^{\frac{K}{M}V}_{M} = \frac{Z_M^{V}}{Z_K^{V}} \bigl\langle{\rm e}^{\sum_{i=1}^{M} f_{\bm{c}}(\lambda_{i}) + \frac{\beta}{2}pV(\lambda_i)}\bigr\rangle^{V}_{M},
\end{equation}
where we used the holomorphic function on a neighborhood of $A$
\begin{equation*}
 f_{\bm{c}}(\lambda) = \sum_{j=1}^{m}c_{j}\ln(x_{j} - \lambda)
\end{equation*}
and for $x_j \notin A$ we choose the cut of the logarithm away from $A$. In order to access the asymptotics of \eqref{Kasmim} via Theorem~\ref{thm:clt}, we first have to evaluate the following quantities.

\begin{Lemma}
 \label{L44}
 Let $z,z_1,z_2 \in C_+$. We have
 \begin{gather*}
	 \mathcal{L}\bigl[\ln(X(z) - \bullet)\bigr]   = 2{\rm i}\pi\bm{\epsilon}^*\cdot\bm{u}(z) - \ln\bigl(\eta(z)E(z, \infty_{+})^2\dd\zeta_{\infty_{+}}(\infty_{+})\bigr) - \sum_{k = 1}^{d} \frac{t_k}{k} \int_{\infty_+}^{z} \dd B_{\infty_-,k}, \\
	 \mathcal{Q}\bigl[\ln(X(z) - \bullet),V\bigr]   = \frac{2}{\beta} \sum_{k = 1}^{d} \frac{t_k}{k} \int_{\infty_+}^{z} \dd B_{\infty_+,k}, \\
	 \mathcal{Q}\bigl[\ln(X(z_1) - \bullet), \ln(X(z_2) - \bullet)\bigr]   = \frac{2}{\beta}\ln \biggl(\!\frac{E(z_1, z_2)}{E(z_1,\!\infty_{+})E(z_2,\!\infty_+)(X(z_2) \!-\! X(z_1))\dd \zeta_{\infty_+}\!(\infty_+)}\!\biggr), \\
	 \mathcal{Q}\bigl[\ln(X(z) - \bullet),\ln(X(z) - \bullet)\bigr]   = \frac{2}{\beta}\ln\biggl(\frac{1}{E(z,\infty_+)^2\dd \zeta_{\infty_+}(\infty_+)}\biggr),
\end{gather*}
where $\eta(z)$ is defined in \eqref{etadef}. In particular, we
observe the simplification
\begin{gather} \nonumber
 \mathcal{L}\bigl[\ln(X(z) - \bullet)\bigr] + \frac{\beta}{2} \mathcal{Q}\ln\bigl[\ln(X(z) - \bullet),V\bigr] \\
 \qquad{}= 2{\rm i}\pi \bm{\epsilon}^{*} \cdot \bm{u}(z) - \ln\bigl(\eta(z)E(z, \infty_{+})^{2}\dd\zeta_{\infty_{+}}(\infty_{+})\bigr).\label{simplifLQ}
\end{gather}
\end{Lemma}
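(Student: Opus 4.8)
The plan is to compute each of the three functionals $\mathcal{L}$, $\mathcal{Q}$ appearing in the statement by substituting the integrands and then deforming the contours $\oint_S$ off the support, converting them into residue computations at the poles $\infty_{\pm}$ of $X$ and invoking the explicit decomposition \eqref{eq:decomp-omega} of $\phi$ together with the identity \eqref{BW2} relating $B$ to $W_2$.

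First I would treat $\mathcal{L}[\ln(X(z)-\bullet)]$. By definition $\mathcal{L}[f] = \oint_S W_1(\xi) f(\xi)\,\frac{\dd\xi}{2{\rm i}\pi}$, and pulling the integral onto the curve $\widehat{C}$ this becomes an integral of $\W_1(\zeta)\,\ln(X(z)-X(\zeta))\,\dd X(\zeta)$, i.e.\ of $\phi$ against the logarithm, over the $\A$-cycles surrounding $S$. Writing $\phi$ via \eqref{eq:decomp-omega} as a sum of holomorphic forms $2{\rm i}\pi\epsilon_h^*\,\dd u_h$, the third-kind form $\dd S_{\infty_+,\infty_-}$, and the second-kind forms $\dd B_{\infty_-,k}$, I would integrate term by term. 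The holomorphic part produces $2{\rm i}\pi\bm{\epsilon}^*\cdot\bm{u}(z)$ after using $\oint_{\A_h}\dd u_k=\delta_{hk}$ and Abel integration to the point $z$; the third-kind contribution is evaluated with Lemma~\ref{IntB} and the asymptotics of $E_0$ near $\infty_\pm$ from Lemma~\ref{LemE0in}, yielding the logarithmic term $-\ln(\eta(z)E(z,\infty_+)^2\dd\zeta_{\infty_+}(\infty_+))$; the second-kind part yields the sum $-\sum_k \frac{t_k}{k}\int_{\infty_+}^z \dd B_{\infty_-,k}$. The function $\eta(z)$ will arise as the non-singular prefactor bookkeeping the local coordinate normalisations at $\infty_+$.

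Next I would handle the quadratic forms. Here $\mathcal{Q}[f_1,f_2] = \frac{2}{\beta}\oint_{S^2} W_2(\xi_1,\xi_2)\,\frac{f_1(\xi_1)\dd\xi_1}{2{\rm i}\pi}\frac{f_2(\xi_2)\dd\xi_2}{2{\rm i}\pi}$, and using \eqref{BW2} to replace $W_2\,\dd X\,\dd X$ by $B$ minus the universal double-pole term, the double contour integral of $B$ against two logarithms collapses, via Lemma~\ref{IntB}, to a difference of prime forms. For $\mathcal{Q}[\ln(X(z_1)-\bullet),\ln(X(z_2)-\bullet)]$ the two Abel-map endpoints produce the cross-ratio $E(z_1,z_2)/\big(E(z_1,\infty_+)E(z_2,\infty_+)\big)$, while the subtracted $(X(\xi_1)-X(\xi_2))^{-2}$ term from \eqref{BW2} contributes the elementary factor $(X(z_2)-X(z_1))$; the diagonal case $z_1=z_2=z$ is the confluent limit using \eqref{limEtilde}. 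For $\mathcal{Q}[\ln(X(z)-\bullet),V]$ I would instead integrate $B$ against $\ln(X(z)-\bullet)$ in one variable and against $V\circ X$ in the other; since $V$ is polynomial, the $V$-integral localises at $\infty_-$ and, after integrating by parts, produces exactly $\sum_k \frac{t_k}{k}\int_{\infty_+}^z \dd B_{\infty_+,k}$, the second-kind differential now based at $\infty_+$ rather than $\infty_-$.

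Finally, the advertised simplification \eqref{simplifLQ} follows by adding $\mathcal{L}[\ln(X(z)-\bullet)]$ to $\frac{\beta}{2}\mathcal{Q}[\ln(X(z)-\bullet),V]$: the factor $\frac{\beta}{2}$ cancels the $\frac{2}{\beta}$ in $\mathcal{Q}$, so the two second-kind sums combine into $-\sum_k\frac{t_k}{k}\int_{\infty_+}^z(\dd B_{\infty_-,k}-\dd B_{\infty_+,k})$. I expect the main obstacle to be showing that this residual difference of second-kind differentials vanishes identically, which should follow from the symmetry $B(z,w)=B(w,z)$ of the fundamental bidifferential: the forms $\dd B_{\infty_-,k}$ and $\dd B_{\infty_+,k}$ differ only through the base point of the defining residue, and the integral $\int_{\infty_+}^z$ of their difference is controlled by the symmetric pole structure of $B$ at $\infty_\pm$, forcing cancellation. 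Managing the square-root-valued local coordinates $\zeta_{\infty_\pm}=1/X$ and the precise branch choices in $E$, $s(x)$ so that every spinorial normalisation factor $\dd\zeta_{\infty_+}(\infty_+)$ lands correctly will be the delicate bookkeeping point; the geometric content itself is routine residue calculus once \eqref{eq:decomp-omega}, \eqref{BW2}, Lemma~\ref{IntB} and Lemma~\ref{LemE0in} are in hand.
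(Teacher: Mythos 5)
Your treatment of $\mathcal{L}[\ln(X(z)-\bullet)]$ and of the two $\mathcal{Q}[\ln,\ln]$ formulae follows the paper's own route (write the logarithm as a primitive, use the decomposition \eqref{eq:decomp-omega} of $\phi$, the identity \eqref{BW2}, and Lemmas~\ref{IntB} and~\ref{LemE0in}) and is fine in outline. The genuine problem is in your handling of $\mathcal{Q}[\ln(X(z)-\bullet),V]$ and of the simplification \eqref{simplifLQ}. When the $V$-integral localises at $\infty_-$, the residue $\mathop{\mathrm{Res}}_{w=\infty_-}\zeta_{\infty_-}(w)^{-k}B(w,w')$ is, by the very definition in Section~\ref{sec:decomp-merom-forms}, the second-kind differential $\dd B_{\infty_-,k}(w')$; no integration by parts converts it into a differential based at $\infty_+$. (The subscript $\infty_+$ printed in the second formula of the Lemma is a typo: the paper's proof derives $\dd B_{\infty_-,k}$, and indeed $\int_{\infty_+}^{z}\dd B_{\infty_+,k}$ would diverge at its lower endpoint, since $\dd B_{\infty_+,k}=-\dd\bigl(\zeta_{\infty_+}^{-k}\bigr)+\order{\dd\zeta_{\infty_+}}$ there and $\zeta_{\infty_+}^{-k}=X^{k}\to\infty$.)

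Your proposed rescue of \eqref{simplifLQ} --- that $\int_{\infty_+}^{z}\bigl(\dd B_{\infty_-,k}-\dd B_{\infty_+,k}\bigr)$ vanishes because $B(z,w)=B(w,z)$ --- is false. The forms $\dd B_{\infty_-,k}$ and $\dd B_{\infty_+,k}$ are genuinely distinct: one has its unique pole, of order $k+1$, at $\infty_-$, the other at $\infty_+$, so their difference is a nonzero second-kind form (and, as noted, its integral from $\infty_+$ is not even finite). The symmetry of $B$ relates the value of $\dd B_{p,k}$ near $q$ to expansion coefficients of $B$ at $p$; it does not identify the two families of forms. Once the base point is corrected to $\infty_-$, the simplification is immediate: $\frac{\beta}{2}\mathcal{Q}[\ln(X(z)-\bullet),V]=\sum_{k=1}^{d}\frac{t_k}{k}\int_{\infty_+}^{z}\dd B_{\infty_-,k}$ cancels term by term against the last sum in $\mathcal{L}[\ln(X(z)-\bullet)]$, and no further argument is needed.
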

\begin{proof}
 Let $x = X(z)$. The first two formulas are a consequence of the decomposition (\ref{eq:decomp-omega})
 of~$\phi$. We have
 \begin{align*}
 \mathcal{L}\bigl[\ln(x - \bullet)\bigr]
 &= \oint_{S} \frac{\dd\xi}{2{\rm i}\pi}\,\ln(x - \xi)W_{1}(\xi)
  = \oint_{S} \frac{\dd \xi}{2{\rm i}\pi} \biggl[\int_{\infty}^{x}\biggl(\frac{1}{\xi' - \xi} - \frac{1}{\xi'}\biggr) \dd \xi' + \ln x \biggr]W_{1}(\xi)\\
 &= \ln(x) + \int_{\infty}^{x} \dd \xi' \biggl(\oint_{S} \frac{\dd \xi}{2{\rm i}\pi}\,\frac{W_1(\xi)}{\xi' - \xi} - \frac{1}{\xi'}\biggr)
  = \ln(x) + \int_{\infty}^{x} \dd \xi'\biggl(W_{1}(\xi') - \frac{1}{\xi'}\biggr)\\
 &= \ln(x) + \int_{\infty_{+}}^{z}\biggl(\phi - \frac{\dd X}{X}\biggr).
 \end{align*}
 We can expand this as
 \begin{gather}
 \mathcal{L}\bigl[\ln(x - \bullet)\bigr]
  = \ln x + 2{\rm i}\pi \bm{\epsilon}^* \cdot \bm{u}(z) + \int_{\infty_+}^{z} \biggl(\dd S_{\infty_+,\infty_-} + \frac{\dd \zeta_{\infty_+}}{\zeta_{\infty_+}}\biggr) - \sum_{k = 1}^{d} \frac{t_k}{k} \int_{\infty_+}^{z} \dd B_{\infty_-,k} \nonumber\\
 \qquad{} = \ln x + 2{\rm i}\pi \bm{\epsilon}^* \cdot \bm{u}(z) + \int_{\infty_+}^z \dd_{z'} \ln\biggl(\frac{E(z',\infty_-)\zeta_{\infty_+}(z')}{E(z',\infty_+)}\biggr) - \sum_{k = 1}^{d} \frac{t_k}{k} \int_{\infty_+}^{z} B_{\infty_-,k}\label{Lxnu}\\
 \qquad{} = \ln x + 2{\rm i}\pi \bm{\epsilon}^* \cdot \bm{u}(z) + \ln\biggl(\frac{E(z,\infty_-)\zeta_{\infty_+}(z)}{E(z,\infty_+)E(\infty_+,\infty_-)\dd \zeta_{\infty_+}(\infty_+)}\biggr) - \sum_{k = 1}^{d} \frac{t_k}{k} \int_{\infty_+}^{z} B_{\infty_-,k}\nonumber
 \end{gather}
and $\ln x$ cancels with $\ln \zeta_{\infty_+}(z)$. We introduce the $1$-form on $\widetilde{C}$
\begin{equation}
\label{etadef}\eta(z) = \frac{E(\infty_+,\infty_-)}{E(z,\infty_+)E(z,\infty_-)}
\end{equation}
and use it to get rid of $\infty_-$ in \eqref{Lxnu}. This leads to the claimed formula.

For the second formula
\begin{equation*}
\mathcal{Q}\bigl[\ln(x - \bullet),V\bigr] = \frac{2}{\beta} \oint_{S^2} \frac{\dd \xi_1}{2{\rm i}\pi} \frac{\dd\xi_2}{2{\rm i}\pi}\,V(\xi_1)\biggl[\ln x + \int_{\infty_+}^{x} \biggl(\frac{1}{\xi' - \xi_2} - \frac{1}{\xi'}\biggr)\dd \xi'\biggr] W_2(\xi_1,\xi_2).
\end{equation*}
We move the $\xi_2$-contour to $\infty$. Since $W_2$ has no residue at $\infty$ the first and third term disappear and we get
\begin{align*}
\mathcal{Q}\bigl[\ln(x - \bullet),V] &{} = \frac{2}{\beta} \oint_{S} \frac{\dd \xi_1}{2{\rm i}\pi} V(\xi_1) \int_{\infty_+}^{x} W_2(\xi_1,\xi') \,\dd \xi' \\
&{} =\frac{2}{\beta} \oint_{S} \frac{\dd \xi_1}{2{\rm i}\pi} V(\xi_1) \int_{\infty_+}^{x} \biggl(W_2(\xi_1,\xi') + \frac{1}{(\xi_1 - \xi')^2}\biggr)\dd \xi',
\end{align*}
where in the second line the shift does not affect the integral around $S$ as it is holomorphic near~$S$. Then, we write $X(w) = \xi_1$ and $X(w') = \xi'$, consider these integrals as integrals on $\widetilde{C}$, and recognise via \eqref{BW2} the fundamental bidifferential $B$. Since the path on which we integrate~$w'$ remains in the first sheet away from the cut, we can move the integral over $w$ to surround the pole $\infty_-$ of $V(X(w))$. We get
\begin{align*}
\mathcal{Q}\bigl[\ln(x - \bullet),V] &{} = \frac{2}{\beta} \int_{w' = \infty_+}^{z} \mathop{{\rm Res}}_{w = \infty_-} \dd X(w)V(X(w))B(w,w') \\
&{} = \frac{2}{\beta} \sum_{k = 1}^{d} \frac{t_k}{k} \int_{w' = \infty_+}^{z} \mathop{{\rm \Res}}_{w = \infty_-} \zeta_{\infty_-}(w)^{-k} B(w,w') \\
&{} = \frac{2}{\beta} \sum_{k = 1}^{d} \frac{t_k}{k} \int_{\infty_+}^{z} \dd B_{\infty_-,k},
\end{align*}
as desired.

For the third formula, let $x_i = X(z_i)$. Using Section \ref{sec:second-corr-fund}, we find
\begin{align*}
 \mathcal{Q}\bigl[\ln(x_1 - \bullet), \ln(x_2 - \bullet)\bigr] &{} = \frac{2}{\beta} \oint_{S^2} \frac{\dd \xi_1\dd \xi_2}{(2{\rm i}\pi)^2}\,W_2(\xi_1,\xi_2) \ln(x_1 - \xi_1)\ln(x_2 - \xi_2) \\
 &{} = \frac{2}{\beta} \int_{\infty_+}^{z_1} \int_{\infty_+}^{z_2} \mathcal{W}_2(w_1,w_2)\,\dd X(w_1) \dd X(w_2)
\end{align*}
after we handle the logarithms like in the previous proofs. Then
\begin{align*}
 \mathcal{Q}\bigl[\ln(x_1 - \bullet), \ln(x_2 - \bullet)\bigr] &{} = \frac{2}{\beta} \int_{\infty_+}^{z_1} \int_{\infty_+}^{z_2} \biggl(B(w_1,w_2) - \frac{\dd X(w_1)\dd X(w_2)}{(X(w_1) - X(w_2))^2}\biggr) \\
 &{} = \frac{2}{\beta} \ln\biggl(\frac{E(z_1,z_2)E_0(z_1,\infty_+)E_0(\infty_+,z_2)}{E(z_1,\infty_+)E(\infty_+,z_2)E_0(z_1,z_2)}\biggr),
\end{align*}
where we used Lemma~\ref{IntB} both for $\widetilde{C}$ and $\widehat{\mathbb{C}}$, and \eqref{limEtilde} to get rid of the ratio of the relative prime form with two arguments $\infty_+$. Since the prime forms are antisymmetric in their two variables, we can arrange the formula to have $\infty_+$ always in the second argument. The presence of $\infty_+$ in~$E_0$ factors can be understood by first replacing it by a point $\tilde{z}$, and then letting $\tilde{z} \rightarrow \infty_+$. Due to Lemma~\ref{LemE0in}, the product of the two $E_0$-factors involving $\infty_+$ only gives a sign when we use the local coordinate $\zeta_{\infty_+}$, and using $E_0(z_1,z_2) = (X(z_1) - X(z_2))/\sqrt{\dd X(z_1)\dd X(z_2)}$ leads to the claim.
\end{proof}

We are now in position to evaluate the asymptotics of the kernels. We will mainly be interested in a situation with only two variables
\begin{align}
\nonumber
 \mathcal{K}_{M}^{\frac{K}{M}V}\left(\begin{matrix} c & \tilde{c} \\ x & \tilde{x} \end{matrix}\right) :={}& \bigl\langle \det(x - \Lambda)^c \det(\tilde{x} - \Lambda)^{\tilde{c}}{\rm e}^{\frac{\beta}{2}p{\rm Tr} V(\Lambda)}\bigr\rangle_M^V \\
={} & \bigl\langle{\rm e}^{\sum_{i = 1}^M c\ln(x - \lambda_i) + \tilde{c}\ln(\tilde{x} - \lambda_i) + \frac{\beta}{2}pV(\lambda_i)}\bigr\rangle_M^V,\label{Kmcc}
\end{align}
where $p = M - K$. It will be used in the form
\begin{equation}
\label{Krat}
\frac{Z_M^{\frac{K}{M}V}}{Z_K^V} \bigl\langle\det(x - \Lambda)^c\det(\tilde{x} - \Lambda)^{\tilde{c}}\bigr\rangle_M^{\frac{K}{M}V} = \frac{Z_M^V}{Z_K^V} \mathcal{K}_{M}^{\frac{K}{M}V}\left(\begin{matrix} c & \tilde{c} \\ x & \tilde{x}\end{matrix}\right).
\end{equation}

\begin{Lemma}\label{Kasymint}\allowdisplaybreaks Let $z,\tilde{z} \in C_+$ and $c,\tilde{c} \in \mathbb{Z}$. We have as $K \rightarrow \infty$ and $p$ is a fixed integer:
\begin{gather*}
	  \mathcal{K}^{\frac{K}{K + p}V}_{K + p}\left(\begin{matrix}
										 c & \tilde{c} \\
										 X(z) & X(\tilde{z})
									 \end{matrix}\right) \\
	\qquad{} \sim\exp\biggl\{Kc \mathcal{L}[\ln(X(z) - \bullet)] + K\tilde{c}\mathcal{L}[\ln(X(\tilde{z}) - \bullet)] + c\mathcal{H}[\ln(X(z) - \bullet)]\\
\qquad\quad{} + \tilde{c}\mathcal{H}[\ln(X(\tilde{z}) - \bullet)] + K\frac{\beta}{2}p\mathcal{L}[V] + \frac{\beta}{2}p^2 \mathcal{L}[V] + \frac{\beta}{2}p\mathcal{H}[V] + \frac{\beta^2}{8}p^2 \mathcal{Q}[V,V]\biggr\} \\
\qquad\quad{} \times \exp\{2{\rm i}\pi p \bm{\epsilon}^* \cdot (c \bm{u}(z) + \tilde{c} \bm{u}(\tilde{z}))\} \bigl(\eta(z)E(z,\infty_+)^{2}\dd \zeta_{\infty_+}(\infty_+)\bigr)^{-pc} \\
\qquad\quad{} \times
\bigl(\eta( \tilde{z})E(\tilde{z},\infty_+)^{2}\dd \zeta_{\infty_+}(\infty_+)\bigr)^{-p\tilde{c}} \\
\qquad\quad{} \times\exp\left\{\frac{1}{2}c^2 \mathcal{Q}[\ln(X(z) - \bullet),\ln(X(z) - \bullet)] + \frac{1}{2} \tilde{c}^2 \mathcal{Q}[\ln(X(\tilde{z}) - \bullet),\ln(X(\tilde{z}) - \bullet)]\right\} \\
\qquad\quad{} \times\left(\frac{E(z,\tilde{z})}{E(z,\infty_+)E(\tilde{z},\infty_+)(X(\tilde{z}) - X(z))\dd \zeta_{\infty_+}(\infty_+)}\right)^{\frac{2}{\beta}c\tilde{c}} \\
\qquad\quad{}
 \times \frac{\vartheta_{-(K + p)\bm{\epsilon}^*,\bm{0}}\bigl(\bm{v}_{{\rm eq}} + c\bm{u}(z) + \tilde{c}\bm{u}(\tilde{z}) + \frac{\beta}{2} p(\bm{\tau}(\bm{\epsilon}^{*}) + \bm{u}(\infty_-)) \big| \frac{\beta}{2} \bm{\tau}\bigr)}{\vartheta_{-M\bm{\epsilon}^{*},\bm{0}}\bigl(\bm{v}_{{\rm eq}}\big|\frac{\beta}{2}\bm{\tau}\bigr)}.
\end{gather*}
In particular, if $p = 0$, we have as $M \rightarrow \infty$
\begin{align*}
	\mathcal{K}^{V}_{M}\left(\begin{matrix} c & \tilde{c} \\ X(z) & X(\tilde{z}) \end{matrix}\right) &{}\sim{\rm e}^{Mc \mathcal{L}[\ln(X(z) - \bullet)] + M\tilde{c}\mathcal{L}[\ln(X(\tilde{z}) - \bullet)] + c\mathcal{H}[\ln(X(z) - \bullet)] + \tilde{c}\mathcal{H}[\ln(X(\tilde{z}) - \bullet)]} \\
	& \quad{} \times{\rm e}^{\frac{1}{2}c^2 \mathcal{Q}[\ln(X(z) - \bullet),\ln(X(z) - \bullet)] + \frac{1}{2} \tilde{c}^2 \mathcal{Q}[\ln(X(\tilde{z}) - \bullet),\ln(X(\tilde{z}) - \bullet)]} \\
												 & \quad{} \times \biggl(\frac{E(z,\tilde{z})}{(X(\tilde{z}) - X(z))E(z,\infty_+)E(\tilde{z},\infty_+) \dd \zeta_{\infty_+}(\infty_+)}\biggr)^{\frac{2}{\beta}c\tilde{c}} \\
&\quad{}\times \frac{\vartheta_{-M\bm{\epsilon}^*,\bm{0}}\bigl(\bm{v}_{{\rm eq}} + c\bm{u}(z) + \tilde{c}\bm{u}(\tilde{z}) \big| \frac{\beta}{2} \bm{\tau}\bigr)}{\vartheta_{-(K+p)\bm{\epsilon}^{*},\bm{0}}\bigl(\bm{v}_{{\rm eq}}\big|\frac{\beta}{2}\bm{\tau}\bigr)}.
\end{align*}
\end{Lemma}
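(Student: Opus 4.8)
The plan is to read the kernel off as a linear-statistics expectation and feed it into the generalised central limit theorem. By the rewriting \eqref{Kmcc}, with $M = K + p$, we have $\mathcal{K}^{\frac{K}{K+p}V}_{K+p}\bigl(\begin{smallmatrix} c & \tilde{c}\\ X(z) & X(\tilde{z})\end{smallmatrix}\bigr) = \bigl\langle \ee^{\sum_{i=1}^{M} f(\lambda_i)}\bigr\rangle^{V}_{M}$, where $f = c\ln(X(z) - \bullet) + \tilde{c}\ln(X(\tilde{z}) - \bullet) + \tfrac{\beta}{2} p\,V$ is holomorphic on a neighborhood of $A$ because the points $X(z),X(\tilde{z})$ lie off $A$. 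I would apply Theorem~\ref{thm:clt} with $N = M = K + p$ and this $f$: up to a multiplicative $1 + o(1)$ this produces $\ee^{M\mathcal{L}[f] + \mathcal{H}[f] + \frac{1}{2}\mathcal{Q}[f,f]}$ times the theta ratio $\vartheta_{-M\bm{\epsilon}^*,\bm{0}}(\bm{v}_{{\rm eq}} + \bm{\mathcal{U}}[f]\,|\,\tfrac{\beta}{2}\bm{\tau})/\vartheta_{-M\bm{\epsilon}^*,\bm{0}}(\bm{v}_{{\rm eq}}\,|\,\tfrac{\beta}{2}\bm{\tau})$. Everything then reduces to expanding the four functionals by (bi)linearity and substituting the closed forms already established.

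Since $\mathcal{L}$ and $\mathcal{H}$ are linear and $\mathcal{Q}$ bilinear, writing $M = K + p$ splits $M\mathcal{L}[f]$ into the order-$K$ terms $Kc\,\mathcal{L}[\ln(X(z)-\bullet)] + K\tilde{c}\,\mathcal{L}[\ln(X(\tilde{z})-\bullet)] + K\tfrac{\beta}{2}p\,\mathcal{L}[V]$ plus the $O(1)$ remainders $pc\,\mathcal{L}[\ln(X(z)-\bullet)] + p\tilde{c}\,\mathcal{L}[\ln(X(\tilde{z})-\bullet)] + \tfrac{\beta}{2}p^2\,\mathcal{L}[V]$; likewise $\mathcal{H}[f] = c\,\mathcal{H}[\ln(X(z)-\bullet)] + \tilde{c}\,\mathcal{H}[\ln(X(\tilde{z})-\bullet)] + \tfrac{\beta}{2}p\,\mathcal{H}[V]$, while $\tfrac{1}{2}\mathcal{Q}[f,f]$ produces the self-terms $\tfrac{1}{2}c^2\,\mathcal{Q}[\ln(X(z)-\bullet),\ln(X(z)-\bullet)]$ and $\tfrac{1}{2}\tilde{c}^2\,\mathcal{Q}[\ln(X(\tilde{z})-\bullet),\ln(X(\tilde{z})-\bullet)]$, the cross term $c\tilde{c}\,\mathcal{Q}[\ln(X(z)-\bullet),\ln(X(\tilde{z})-\bullet)]$, the mixed terms $\tfrac{\beta}{2}pc\,\mathcal{Q}[\ln(X(z)-\bullet),V]$ and $\tfrac{\beta}{2}p\tilde{c}\,\mathcal{Q}[\ln(X(\tilde{z})-\bullet),V]$, and the pure term $\tfrac{\beta^2}{8}p^2\,\mathcal{Q}[V,V]$.

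The decisive simplification is \eqref{simplifLQ}. Grouping the remainder $pc\,\mathcal{L}[\ln(X(z)-\bullet)]$ with the mixed term $\tfrac{\beta}{2}pc\,\mathcal{Q}[\ln(X(z)-\bullet),V]$ gives $pc\bigl(\mathcal{L}[\ln(X(z)-\bullet)] + \tfrac{\beta}{2}\mathcal{Q}[\ln(X(z)-\bullet),V]\bigr) = pc\bigl(2{\rm i}\pi\bm{\epsilon}^*\cdot\bm{u}(z) - \ln(\eta(z)E(z,\infty_+)^2\dd\zeta_{\infty_+}(\infty_+))\bigr)$, and symmetrically for $\tilde{z}$; exponentiating yields the factor $\ee^{2{\rm i}\pi p\bm{\epsilon}^*\cdot(c\bm{u}(z)+\tilde{c}\bm{u}(\tilde{z}))}$ together with $(\eta(z)E(z,\infty_+)^2\dd\zeta_{\infty_+}(\infty_+))^{-pc}$ and $(\eta(\tilde{z})E(\tilde{z},\infty_+)^2\dd\zeta_{\infty_+}(\infty_+))^{-p\tilde{c}}$. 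The cross term becomes the power $\bigl(E(z,\tilde{z})/(E(z,\infty_+)E(\tilde{z},\infty_+)(X(\tilde{z})-X(z))\dd\zeta_{\infty_+}(\infty_+))\bigr)^{\frac{2}{\beta}c\tilde{c}}$ via the third formula of Lemma~\ref{L44}, the two self-terms are already in the claimed form, and all order-$K$ and $\mathcal{H}$-terms are carried over verbatim.

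It remains to identify the theta argument $\bm{\mathcal{U}}[f] = c\,\bm{\mathcal{U}}[\ln(X(z)-\bullet)] + \tilde{c}\,\bm{\mathcal{U}}[\ln(X(\tilde{z})-\bullet)] + \tfrac{\beta}{2}p\,\bm{\mathcal{U}}[V]$. The potential part equals $\tfrac{\beta}{2}p(\bm{\tau}(\bm{\epsilon}^*) + \bm{u}(\infty_-))$ by Lemma~\ref{lem:compute-nu-infty}, and for each logarithm I claim $\bm{\mathcal{U}}[\ln(X(z)-\bullet)] = \bm{u}(z)$: deforming the contour $\oint_S$ in the first sheet out to $\infty_+$, the holomorphy of $\dd\bm{u}$ leaves no residue and the large-circle contribution at infinity vanishes, so the only contribution is the jump $2{\rm i}\pi$ of $w\mapsto\ln(X(z)-X(w))$ across its branch cut from $z$ to $\infty_+$, giving $\int_z^{\infty_+}\dd\bm{u} = -\bm{u}(z)$, which with the orientation of $\oint_S$ is $\bm{u}(z)$. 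Plugging $N = M = K + p$ then reproduces the stated theta ratio, and collecting all pieces proves the general formula; the special case $p = 0$ follows by dropping the potential contribution in $f$ and setting $M = K$. The one genuinely delicate point is the sign and branch bookkeeping in this contour evaluation of $\bm{\mathcal{U}}[\ln(X(z)-\bullet)]$ — in particular confirming that the deformation sweeps $z$ and that the contribution near $\infty_+$ indeed decays; the remainder is (bi)linear bookkeeping organised around \eqref{simplifLQ} and Lemma~\ref{L44}.
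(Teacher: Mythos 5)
Your proposal is correct and follows essentially the same route as the paper: apply Theorem~\ref{thm:clt} to the rewriting \eqref{Kmcc} with $f = c\ln(X(z)-\bullet)+\tilde{c}\ln(X(\tilde z)-\bullet)+\tfrac{\beta}{2}pV$, split $M=K+p$, absorb the order-one remainders into the $\mathcal{Q}[\ln,V]$ terms via \eqref{simplifLQ}, evaluate the cross term by Lemma~\ref{L44}, and identify the theta argument via Lemma~\ref{lem:compute-nu-infty}. The only place you go beyond the paper is in spelling out the contour computation showing $\bm{\mathcal{U}}[\ln(X(z)-\bullet)]=\bm{u}(z)$, which the paper delegates to the analogous manipulation in the proof of Lemma~\ref{lem:compute-nu}; your sign bookkeeping there comes out right.
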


\begin{proof}
Let $x = X(z)$ and $\tilde{x} = X(\tilde{z})$. Applying
 Theorem~\ref{thm:clt} to the definition \eqref{Kmcc}, we get as $M,K \rightarrow \infty$ while $ M - K = p$ is fixed
 \begin{gather*}
	 \mathcal{K}^{\frac{K}{M}V}_{M}\left(\begin{matrix} c & \tilde{c} \\ x & \tilde{x} \end{matrix}\right) \\
{} \sim \frac{Z^{V}_{M}}{Z^{\frac{K}{M}V}_{M}}{\rm e}^{Mc\mathcal{L}[\ln(x - \bullet)] + M\tilde{c}\mathcal{L}[\ln(\tilde{x} - \bullet)] + c\mathcal{H}[\ln(x - \bullet)] + \tilde{c}\mathcal{H}[\ln(\tilde{x} - \bullet)] + M\frac{\beta}{2}p\mathcal{L}[V] + \frac{\beta}{2}\mathcal{H}[V] + \frac{\beta^2}{8}p^2\mathcal{Q}[V,V]} \\
 {}\times{\rm e}^{ \frac{1}{2}c^2 \mathcal{Q}[\ln(x - \bullet),\ln(x - \bullet)] + c \tilde{c} \mathcal{Q}[\ln(x - \bullet),\ln(\tilde{x} - \bullet)] + \frac{1}{2}\tilde{c}^2\mathcal{Q}[\ln(\tilde{x} - \bullet),\ln(\tilde{x} - \bullet)]+\frac{\beta}{2}pc \mathcal{Q}[\ln(x - \bullet),V] + \frac{\beta}{2}p\tilde{c}\mathcal{Q}[\ln(\tilde{x} - \bullet),V]}\! \\
 {}\times \frac{\vartheta_{-M\bm{\epsilon}^{*}, \bm{0}}\bigl(\bm{v}_{{\rm eq}} + \frac{1}{2{\rm i}\pi} \oint_{S}\bigl[c\ln(x - \bullet) + \tilde{c}\ln(\tilde{x} - \bullet) + \frac{\beta}{2}p V\bigr] \dd \bm{u} \big| \frac{\beta}{2}\bm{\tau}\bigr)}{\vartheta_{-M\bm{\epsilon}^{*},\bm{0}}\bigl(\bm{v}_{{\rm eq}}\big|\frac{\beta}{2}\bm{\tau}\bigr)}.
\end{gather*}
We split $M = K + p$ in the exponential and combine the newly created
$p$-terms with the $\frac{\beta}{2}\mathcal{Q}[\ln,V]$ terms which is
evaluated thanks to \eqref{simplifLQ}. We also replace the term
$c\tilde{c}\mathcal{Q}[\ln,\ln]$ by its evaluation from
Lemma~\ref{L44}, but refrain from doing so for the $c^2$ and the
$\tilde{c}^2$ terms. Finally, writing the logarithm in the arguments of the
theta function as a primitive, we get an expression in terms of the
Abel map (a similar manipulation was carried out in the proof of Lemma~\ref{lem:compute-nu}), and Lemma~\ref{lem:compute-nu-infty} tells us $\frac{1}{2{\rm i}\pi} \oint_{S} V \dd \boldsymbol{u} = \bm{\tau}(\bm{\epsilon}^*) + \bm{u}(\infty_-)$ which we need to multiply by $\frac{\beta p}{2}$ in the last argument of the theta function. Together with Lemma~\ref{lem:compute-nu-infty},
this implies the claimed formula.
\end{proof}

The ratio of partition functions appearing in \eqref{Krat} will only be needed through the following combination.
\begin{Lemma}\label{ratioZ}
 For fixed $p \in \mathbb{Z}$, we have as $K \rightarrow \infty$
 \begin{gather*}
	 \frac{Z_{K + p}^{V}Z_{K - p}^{V}}{(Z_K^V)^2}\sim{\rm e}^{2p^2 \mathcal{E}[\mu_{{\rm eq}}] + {\rm i}\pi \beta p^2 \bm{\epsilon}^* \cdot \bm{\tau}(\bm{\epsilon}^*)} \\
\hphantom{\frac{Z_{K + p}^{V}Z_{K - p}^{V}}{(Z_K^V)^2}\sim}{}\times\frac{\vartheta_{-K\bm{\epsilon}^*,\bm{0}}\bigl(\bm{v}_{{\rm eq}} - \tfrac{\beta}{2}p\bm{\tau}(\bm{\epsilon}^*)\big|\tfrac{\beta}{2}\bm{\tau}\bigr) \vartheta_{-K\bm{\epsilon}^*,\bm{0}}\bigl(\bm{v}_{{\rm eq}} + \tfrac{\beta}{2} p \bm{\tau}(\bm{\epsilon}^*)\big|\tfrac{\beta}{2}\bm{\tau}\bigr)}{\vartheta_{-K\bm{\epsilon}^*,\bm{0}}\bigl(\bm{v}_{{\rm eq}}\big|\frac{\beta}{2}\bm{\tau}\bigr)^{2}}.
 \end{gather*}
\end{Lemma}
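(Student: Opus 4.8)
The plan is to insert the asymptotic expansion of Theorem~\ref{Asymppart} for each of the three partition functions $Z_{K+p}^{V}$, $Z_{K-p}^{V}$ and $Z_{K}^{V}$ and then combine them. The crucial point to keep in mind is that $\mathcal{E}[\mu_{{\rm eq}}]$, $\mathcal{S}[\mu_{{\rm eq}}]$, $\mathcal{G}[\mu_{{\rm eq}}]$, $\bm{\epsilon}^{*}$, $\bm{\tau}$ and $\bm{v}_{{\rm eq}}$ are all attached to the size-independent equilibrium measure of the unconstrained model, so the only dependence on the size is carried by the explicit power prefactor, the explicit powers of $N$ in the exponent, and the theta-characteristic $-N\bm{\epsilon}^{*}$. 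Since each expansion in Theorem~\ref{Asymppart} is a genuine asymptotic equivalence, the quotient $\frac{Z_{K+p}^{V}Z_{K-p}^{V}}{(Z_K^V)^2}$ is equivalent, up to a factor $1+o(1)$, to the corresponding quotient of right-hand sides; I would then evaluate that quotient piece by piece.

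For the power prefactor I would set $\Phi(N)=\big(\tfrac{\beta}{2}N+\varkappa\big)\ln N$ and note that its contribution to the quotient is the symmetric second difference $\Phi(K+p)+\Phi(K-p)-2\Phi(K)$. A Taylor expansion gives this difference as $p^{2}\Phi''(K)+O(K^{-2})$ with $\Phi''(K)=\tfrac{\beta}{2K}+O(K^{-2})$, so it tends to $0$ and the power prefactors contribute $1+o(1)$. For the smooth part $N^{2}\mathcal{E}[\mu_{{\rm eq}}]+N\mathcal{S}[\mu_{{\rm eq}}]+\mathcal{G}[\mu_{{\rm eq}}]$, the quadratic coefficient produces $(K+p)^{2}+(K-p)^{2}-2K^{2}=2p^{2}$, hence the factor $e^{2p^{2}\mathcal{E}[\mu_{{\rm eq}}]}$, while the terms linear in $N$ and the constant terms cancel. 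This accounts for the first factor in the claimed formula.

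The heart of the computation is the ratio of theta functions, into which the size enters through the characteristic. Here I would use the elementary identity, obtained directly from the series definition by completing the square in the summation index,
\begin{equation*}
\vartheta_{\bm{\mu},\bm{0}}(\bm{z}|T)=\exp\big({\rm i}\pi\,\bm{\mu}\cdot T(\bm{\mu})+2{\rm i}\pi\,\bm{\mu}\cdot\bm{z}\big)\,\theta(\bm{z}+T(\bm{\mu})|T),
\end{equation*}
valid for arbitrary $\bm{\mu}\in\mathbb{R}^{g}$ and $T=\tfrac{\beta}{2}\bm{\tau}$. Applying it to $\bm{\mu}=-(K+p)\bm{\epsilon}^{*}$ and $\bm{\mu}=-(K-p)\bm{\epsilon}^{*}$ turns the numerator $\vartheta_{-(K+p)\bm{\epsilon}^{*},\bm{0}}(\bm{v}_{{\rm eq}}|T)\,\vartheta_{-(K-p)\bm{\epsilon}^{*},\bm{0}}(\bm{v}_{{\rm eq}}|T)$ into a Gaussian prefactor times $\theta\big(\bm{v}_{{\rm eq}}-(K+p)T(\bm{\epsilon}^{*})\big|T\big)\,\theta\big(\bm{v}_{{\rm eq}}-(K-p)T(\bm{\epsilon}^{*})\big|T\big)$. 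Reading the same identity backwards with $\bm{\mu}=-K\bm{\epsilon}^{*}$ and arguments $\bm{v}_{{\rm eq}}\mp pT(\bm{\epsilon}^{*})$ rewrites these two $\theta$-values as $\vartheta_{-K\bm{\epsilon}^{*},\bm{0}}\big(\bm{v}_{{\rm eq}}\mp\tfrac{\beta}{2}p\bm{\tau}(\bm{\epsilon}^{*})\big|T\big)$, using $pT(\bm{\epsilon}^{*})=\tfrac{\beta}{2}p\bm{\tau}(\bm{\epsilon}^{*})$. Collecting all Gaussian prefactors, the $K$-linear phases and the $2K^{2}$ phase cancel against $\vartheta_{-K\bm{\epsilon}^{*},\bm{0}}(\bm{v}_{{\rm eq}}|T)^{2}$ in the denominator, and only the quadratic phase $e^{2{\rm i}\pi p^{2}\bm{\epsilon}^{*}\cdot T(\bm{\epsilon}^{*})}=e^{{\rm i}\pi\beta p^{2}\bm{\epsilon}^{*}\cdot\bm{\tau}(\bm{\epsilon}^{*})}$ survives, which is exactly the remaining explicit factor in the statement.

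The only point requiring care — and the step I expect to be the main obstacle — is this bookkeeping of the quadratic Gaussian prefactors: one must verify that the $K$-linear phases $\pm 2{\rm i}\pi Kp\,\bm{\epsilon}^{*}\cdot T(\bm{\epsilon}^{*})$ created by the two backward applications cancel, and that the $K^{2}$ phases match the denominator exactly, so that the surviving factor depends only on $p$. This is an \emph{exact} (not asymptotic) manipulation, so no error term may slip in here; everything else is routine, and assembling the three pieces yields the claimed equivalence.
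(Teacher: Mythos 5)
Your proposal is correct and follows essentially the same route as the paper: insert the expansion of Theorem~\ref{Asymppart} for each partition function, observe that the $N$-linear, constant and power-prefactor contributions cancel in the symmetric combination leaving $e^{2p^2\mathcal{E}[\mu_{\rm eq}]}$, and then trade the characteristics $-(K\pm p)\bm{\epsilon}^*$ for $-K\bm{\epsilon}^*$ at the cost of shifting the argument by $\mp\frac{\beta}{2}p\bm{\tau}(\bm{\epsilon}^*)$ and a quadratic Gaussian phase. Your forward-then-backward application of the completing-the-square identity is exactly the paper's shift formula~\eqref{thetashift}, and your bookkeeping of the surviving phase $e^{{\rm i}\pi\beta p^2\bm{\epsilon}^*\cdot\bm{\tau}(\bm{\epsilon}^*)}$ is accurate.
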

\begin{proof}
Using Theorem~\ref{Asymppart}, we find
\[
\frac{Z_{K + p}^{V}}{Z_K^V} \sim{\rm e}^{(2Kp + p^2)\mathcal{E}[\mu_{{\rm eq}}] + p \mathcal{S}[\mu_{{\rm eq}}] + \frac{\beta}{2}p(\ln K + 1)}\,\frac{\vartheta_{-(K + p)\bm{\epsilon}^*,\bm{0}}\bigl(\bm{v}_{{\rm eq}}\big|\frac{\beta}{2}\bm{\tau}\bigr)}{\vartheta_{-K\bm{\epsilon}^*,\bm{0}}\bigl(\bm{v}_{{\rm eq}}\big|\frac{\beta}{2}\bm{\tau}\bigr)}.
\]

Multiplying this expression and the same with $p \rightarrow -p$ yields
\begin{equation}
 \label{Zdoubler}
	\frac{Z_{K + p}^{V}Z_{K - p}^{V}}{(Z_K^V)^2} \sim{\rm e}^{2p^2\mathcal{E}[\mu_{{\rm eq}}]}\,\frac{\vartheta_{-(K + p)\bm{\epsilon}^*,\bm{0}}\bigl(\bm{v}_{{\rm eq}}\big|\frac{\beta}{2}\bm{\tau}\bigr)\vartheta_{-(K - p)\bm{\epsilon}^*,\bm{0}}\bigl(\bm{v}_{{\rm eq}}\big|\frac{\beta}{2}\bm{\tau}\bigr)}{\vartheta^2_{-K\bm{\epsilon}^*,\bm{0}}\bigl(\bm{v}_{{\rm eq}}\big|\frac{\beta}{2}\bm{\tau}\bigr)}.
\end{equation}
We would like to rewrite all theta functions with a characteristic $-K\bm{\epsilon}^*$ instead of $-(K \pm p)\bm{\epsilon}^*$. For this, we come back to the definitions in Section~\ref{sec:theta-function} and find
\begin{equation}
\label{thetashift}
\vartheta_{-(K + p)\bm{\epsilon}^*,\bm{0}}\bigl(\bm{v}\big|\tfrac{\beta}{2}\bm{\tau}\bigr) ={\rm e}^{{\rm i}\pi \frac{\beta}{2} p^2 \bm{\epsilon}^* \cdot \bm{\tau}(\bm{\epsilon}^*) - 2{\rm i}\pi p \bm{\epsilon}^* \cdot \bm{v}}\,\vartheta_{-K\bm{\epsilon}^*,\bm{0}}\bigl(\bm{v} - p\tfrac{\beta}{2}\bm{\tau}(\bm{\epsilon}^*)\big|\tfrac{\beta}{2}\bm{\tau}\bigr),
\end{equation}
which holds for any $\bm{v} \in \mathbb{C}^g$. We multiply the outcome with the same factor for $p \rightarrow -p$ and obtain
\begin{gather*}
 \vartheta_{-(K + p)\bm{\epsilon}^*,\bm{0}}\bigl(\bm{v}\big|\tfrac{\beta}{2}\bm{\tau}\bigr)\vartheta_{-(K - p)\bm{\epsilon}^*,\bm{0}}\bigl(\bm{v}\big|\tfrac{\beta}{2}\bm{\tau}\bigr) \\
\qquad{} ={\rm e}^{{\rm i}\pi \beta p^2 \bm{\epsilon}^* \cdot \bm{\tau}(\bm{\epsilon}^*)}\vartheta_{-K\bm{\epsilon}^*,\bm{0}}\bigl(\bm{v} - \tfrac{\beta}{2}p\bm{\tau}(\bm{\epsilon}^*)\big|\tfrac{\beta}{2}\bm{\tau}\bigr) \vartheta_{-K\bm{\epsilon}^*,\bm{0}}\bigl(\bm{v} + \tfrac{\beta}{2} p \bm{\tau}(\bm{\epsilon}^*)\big|\tfrac{\beta}{2}\bm{\tau}\bigr).
\end{gather*}
Inserting in \eqref{Zdoubler} yields the claim.
\end{proof}

\section{Derivation of the theta identities}
\label{sec:derivation-formulae}

We shall now state and prove the main theorems. For each case
$\beta \in \{1,2,4\}$, we give both a proof based on the
analysis in the previous sections, and a second, direct proof based on geometric arguments. We recall the definition of the meromorphic $1$-form $\eta$ on $\widetilde{C}$ that appeared in~\eqref{etadef}
\[
\eta(z) = \frac{E(\infty_+,\infty_-)}{E(z,\infty_+)E(z,\infty_-)}.
\]
Throughout this section, the Abel map will always be based at $\infty_+$, and $\bm{u}(\infty_-)$ is computed using a path from $\infty_+$ to $\infty_-$ which does not cross any of the representatives $(\mathcal{A}_h,\mathcal{B}_h)_{h = 1}^{g}$ obtained by analytic continuation from the ones in Section~\ref{sec:constr-spectr-curve}, see the discussion above Lemma~\ref{lem:holo-periods}.

\subsection[The beta = 2 formula]{The $\boldsymbol{\beta = 2}$ formula}

\begin{Theorem}\label{thm:formula_beta_2}
 Consider a hyperelliptic curve $\widehat{C}$, and let
 $z,z',w,w' \in \widetilde{C}$, and
 $\bm{\mu}, \bm{\nu} \in \mathbb{R}^g$. Then, we have
 \begin{gather}
  (X(w) - X(z'))(X(z) - X(w'))\frac{E(z,w)E(z',w')}{E(w,z')E(z,w')}\nonumber\\
  \qquad \quad{}\times\frac{\vartheta_{\bm{\mu},\bm{\nu}}\bigl(\bm{u}(z) - \bm{u}(z') + \bm{u}(w) - \bm{u}(w')\big|\bm{\tau}\bigr)}{E(z,z')E(w,w')}\vartheta_{\bm{\mu}, \bm{\nu}}(\bm{0}\big|\bm{\tau}) \nonumber\\
  \qquad\quad {}-(X(z) - X(w))(X(z') - X(w'))\frac{\vartheta_{\bm{\mu},\bm{\nu}}(\bm{u}(z) - \bm{u}(z')\big|\bm{\tau}\bigr)}{E(z,z')}\frac{\vartheta_{\bm{\mu},\bm{\nu}}\bigl(\bm{u}(w) - \bm{u}(w')\big|\bm{\tau}\bigr)}{E(w,w')}\nonumber\\
  \qquad {}= E(z,w)E(z',w') \biggr( \prod_{p \in \{z, z', w, w'\}} \eta(p)\biggr)\vartheta_{\bm{\mu},\bm{\nu}}\bigl(\bm{u}(z) + \bm{u}(w) -\bm{u}(\infty_-)\big|\bm{\tau}\bigr)\nonumber\\
  \qquad \quad{}\times\vartheta_{\bm{\mu},\bm{\nu}}\bigl(- \bm{u}(z') - \bm{u}(w') + \bm{u}(\infty_-)\big|\bm{\tau}\bigr).\label{eq:formula-beta-2}
 \end{gather}
\end{Theorem}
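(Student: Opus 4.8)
My plan is to run the random-matrix argument that the preceding lemmas were built for, and then pass to the general statement by density and analyticity. The starting point is Theorem~\ref{beta2form} applied with $m_1=m_2=1$ to the four points $X(z),X(z')$ (numerator and denominator of the first ratio) and $X(w),X(w')$ (second ratio). With these choices the combinatorial prefactor collapses and one gets, for every $N$, the exact identity
\[
\Bigl\langle \frac{\det(X(z)-\Lambda)}{\det(X(z')-\Lambda)}\frac{\det(X(w)-\Lambda)}{\det(X(w')-\Lambda)}\Bigr\rangle_N^V = (X(z)-X(z'))(X(w)-X(w'))\det\mathcal{M}^{(2)},
\]
where $\det\mathcal{M}^{(2)}=\mathcal{M}^{(2)}_{++}\mathcal{M}^{(2)}_{--}-\mathcal{M}^{(2)}_{+-}\mathcal{M}^{(2)}_{-+}$ is a $2\times2$ determinant. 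The three resulting pieces are exactly the three terms of \eqref{eq:formula-beta-2}: the connected four-point kernel on the left produces the term carrying $\vartheta(\bm{u}(z)-\bm{u}(z')+\bm{u}(w)-\bm{u}(w'))\,\vartheta(\bm 0)$, the product $\mathcal{M}^{(2)}_{+-}\mathcal{M}^{(2)}_{-+}$ (one numerator and one denominator in each factor) produces the factorised term $\vartheta(\bm{u}(z)-\bm{u}(z'))\,\vartheta(\bm{u}(w)-\bm{u}(w'))$, and $\mathcal{M}^{(2)}_{++}\mathcal{M}^{(2)}_{--}$ produces the $\bm{u}(\infty_-)$-shifted product on the right-hand side.

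Concretely, I would estimate the left-hand kernel directly from Theorem~\ref{thm:clt}, its ingredients being the functionals computed in Lemma~\ref{L44}; the two $p=0$ entries $\mathcal{M}^{(2)}_{+-},\mathcal{M}^{(2)}_{-+}$ from Lemma~\ref{Kasymint}; and the pair $\mathcal{M}^{(2)}_{++},\mathcal{M}^{(2)}_{--}$, which carry $p=-1$ and $p=+1$ respectively, from Lemma~\ref{Kasymint} together with the partition-function ratio $Z_{N-1}^VZ_{N+1}^V/(Z_N^V)^2$ furnished by Lemma~\ref{ratioZ} at $p=1$. Because the Borodin--Strahov identity is exact, once the leading asymptotics of all three pieces are inserted the $o(1)$ remainders may be discarded, and what survives is an identity between theta functions with matrix $\tfrac{\beta}{2}\bm{\tau}=\bm{\tau}$. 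For $\beta=2$, Proposition~\ref{lem:compute-nu} gives $\bm{v}_{{\rm eq}}=\bm 0$, so these theta functions carry characteristic $-N\bm{\epsilon}^*$ and $\bm{\nu}=\bm 0$; the shifts $\pm(\bm{\tau}(\bm{\epsilon}^*)+\bm{u}(\infty_-))$ arising from the $p=\pm1$ kernels have their $\bm{\tau}(\bm{\epsilon}^*)$ part reabsorbed into the characteristic by the quasi-periodicity of $\vartheta$, leaving precisely the arguments $\bm{u}(z)+\bm{u}(w)-\bm{u}(\infty_-)$ and $-\bm{u}(z')-\bm{u}(w')+\bm{u}(\infty_-)$. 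This establishes \eqref{eq:formula-beta-2} for spectral curves of $\beta=2$-ensembles with $\bm{\mu}=-N\bm{\epsilon}^*$ and $\bm{\nu}=\bm 0$.

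To reach the full statement I would argue by density and analyticity. By Corollary~\ref{codense} there is a dense set of curves whose filling fractions $\bm{\epsilon}^*$ are $\mathbb{Q}$-linearly independent; for those the vectors $-N\bm{\epsilon}^*$ equidistribute mod $\mathbb{Z}^g$, so by continuity of both sides in $\bm{\mu}$ the identity holds for every $\bm{\mu}\in\mathbb{R}^g$ at $\bm{\nu}=\bm 0$. Lemma~\ref{lem:holo-periods} then allows analytic continuation in the Weierstra\ss{} points to every hyperelliptic curve. Finally, writing $\bm{e}=\bm{\nu}+\bm{\tau}(\bm{\mu})$, both sides of \eqref{eq:formula-beta-2} are entire functions of $\bm{e}\in\mathbb{C}^g$, because in each of the three terms the two theta arguments always sum to the fixed vector $\bm{u}(z)-\bm{u}(z')+\bm{u}(w)-\bm{u}(w')$, so the $\bm{\mu},\bm{\nu}$-dependent exponential prefactors are common and factor out; the case $\bm{\nu}=\bm 0$ already gives agreement on $\bm{\tau}(\mathbb{R}^g)$, a maximal totally real subspace of $\mathbb{C}^g$, and the identity theorem upgrades this to all $\bm{e}$, hence to all $(\bm{\mu},\bm{\nu})$.

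The step I expect to be the real labour is the bookkeeping of the exponential prefactors in the asymptotic matching. Each of the three terms carries growing factors ${\rm e}^{N(\cdots)}$ and finite factors built from $\mathcal{L},\mathcal{Q},\mathcal{H}$ and, through Lemma~\ref{ratioZ}, the equilibrium energy $\mathcal{E}[\mu_{{\rm eq}}]$; one must check that all $N$-dependent contributions cancel and that the finite prime-form and coordinate factors reorganise into $(X(w)-X(z'))(X(z)-X(w'))E(z,w)E(z',w')/(E(w,z')E(z,w'))$ together with the $\prod_p\eta(p)$ factor, keeping careful track of the antisymmetry $E(z,w)=-E(w,z)$ that fixes the signs. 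This is exactly where Proposition~\ref{lem:Ef0} (expressing $\mathcal{E}$ via $\mathcal{L}[V],\mathcal{Q}[V,V]$ and $\bm{\epsilon}^*\cdot(\bm{\tau}(\bm{\epsilon}^*)+\bm{u}(\infty_-))$) and the simplification \eqref{simplifLQ} become indispensable, since they are what force the cancellations. As an independent check one could instead give a purely geometric proof: fixing $z',w,w'$ and reading \eqref{eq:formula-beta-2} as an identity in $z$, one verifies that both sides have the same automorphy under the lattice, so that their difference is a section of a line bundle of low degree which one shows to vanish by evaluating at special points such as $z=z'$ and $z=\infty_\pm$.
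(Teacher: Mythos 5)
Your proposal follows the paper's own proof essentially step for step: the same starting point (Theorem~\ref{beta2form} with $m_1=m_2=1$), the same asymptotic inputs (Theorem~\ref{thm:clt}, Lemmas~\ref{L44}, \ref{Kasymint}, \ref{ratioZ}, with $\bm{v}_{\rm eq}=\bm{0}$ at $\beta=2$ and the $\bm{\tau}(\bm{\epsilon}^*)$ shifts absorbed into the characteristic), the same cancellation via Proposition~\ref{lem:Ef0} and \eqref{simplifLQ}, and the same endgame via Kronecker equidistribution of $-N\bm{\epsilon}^*$, Corollary~\ref{codense}, and analytic continuation in the characteristics, the Weierstra\ss{} points, and the marked points; your closing remark about a direct verification is precisely the paper's second, geometric proof. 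The approach is correct and matches the paper's.
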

Notice that each of the three terms is actually a $\frac{1}{2}$-form on the universal cover $\widetilde{C}$ in each of the variable $z$, $z'$, $w$, $w'$. Even if $\bm{\mu} = 0$, it does not descend to the curve itself. As the proof will show, the formula holds equally well for arbitrary $\bm{\mu}$, $\bm{\nu}$ complex, in particular, if we shift all arguments of the theta functions by the same arbitrary but common $\bm{\nu} \in \mathbb{C}^g$.

\begin{proof}
Consider first the case where the Weierstra\ss{} points of $\widehat{C}$ are
 real. By Proposition~\ref{propallhyp}, we can find a $\beta$-ensemble whose spectral curve has $\widehat{C}$ as underlying Riemann surface and for which the results of Section~\ref{sec:expans-part} apply (by construction the potential is off-critical).
 We then express the identity of Theorem~\ref{beta2form} for $m_1 = m_2 = 1$, for $x = X(z)$,
 $x' = X(z')$, $\tilde{x} = X(w)$ and $\tilde{x}' = X(w')$ pairwise distinct points in $\mathbb{C} \setminus A$, that determine unique points $z,z',w,w' \in C_+$. Taking into account the
 definition of $\mathcal{K}$ in \eqref{Krat}, we get
\begin{gather}
 \mean{ \frac{\det(x - \Lambda)\det(\tilde{x} - \Lambda)}{\det(x' - \Lambda) \det(\tilde{x}' - \Lambda)}}_{N}^{V}\nonumber \\
 \qquad{} = (x - x')(\tilde{x} - \tilde{x}')\frac{N}{N + 1} \frac{Z_{N - 1}^{V}Z_{N + 1}^{V}}{(Z_N^V)^2} \mathcal{K}^{\frac{N}{N - 1}V}_{N - 1}\left(\begin{matrix} 1 & 1 \\ x & \tilde{x} \end{matrix}\right) \mathcal{K}^{\frac{N}{N + 1}V}_{N + 1}\left(\begin{matrix} - 1 & -1 \\ \tilde{x}' & x' \end{matrix}\right)\nonumber\\
  \qquad\quad{} + \mathcal{K}_{N}^{V}\left(\begin{matrix} 1 & -1 \\ x & x' \end{matrix}\right)\mathcal{K}_{N}^{V}\left(\begin{matrix} 1 & -1 \\ \tilde{x} & \tilde{x}' \end{matrix}\right).\label{smallrelb2}
\end{gather}
Let us first consider the asymptotics of left-hand side as $N \rightarrow \infty$. We have $\bm{v}_{{\rm eq}} = 0$ since $\beta = 2$. Coming back to Theorem~\ref{thm:clt} and using Lemma~\ref{L44} for the $\mathcal{Q}$-terms involving two different variables, we obtain
\begin{gather}
 \mean{ \frac{\det(x - \Lambda)\det(\tilde{x} - \Lambda)}{\det(x' - \Lambda) \det(\tilde{x}' - \Lambda)}}_{N}^{V} \nonumber\\
\qquad {} \sim{}{\rm e}^{N\mathcal{L}\bigl[\ln\bigl(\frac{(x - \bullet)(\tilde{x} - \bullet)}{(x' - \bullet)(\tilde{x}' - \bullet)}\bigr)\bigr] + \mathcal{H}\bigl[\ln\bigl(\frac{(x - \bullet)(\tilde{x} - \bullet)}{(x' - \bullet)(\tilde{x}' - \bullet)}\bigr)\bigr] + \frac{1}{2} \sum_{\xi \in \{x,\tilde{x},x',\tilde{x}'\}} \mathcal{Q}[\ln(\xi - \bullet),\ln(\xi - \bullet)]} \nonumber\\
\qquad\quad{} \times{}\frac{E(z,w)E(z',w')E(z,\infty_+)E(z',\infty_+)E(w,\infty_+)E(w',\infty_+)\bigl(\dd \zeta_{\infty_+}(\infty_+)\bigr)^2}{E(z,z')E(z,w')E(w,z')E(w,w')}\nonumber\\
\qquad\quad{} \times{}\frac{(x - x')(x - \tilde{x}')(\tilde{x} - x')(\tilde{x} - \tilde{x}')}{(x - \tilde{x})(x' - \tilde{x}')} \nonumber\\
\qquad\quad{} \times{} \frac{\vartheta_{-N\bm{\epsilon}^*,\bm{0}}\bigl(\bm{u}(z) - \bm{u}(z') + \bm{u}(w) - \bm{u}(w')\big|\bm{\tau}\bigr)}{\vartheta_{-N\bm{\epsilon}^*,\bm{0}}\bigl(\bm{0}\big|\bm{\tau}\bigr)}.\label{T1}
\end{gather}

For the asymptotics of the first term of the right-hand side of \eqref{smallrelb2}, we use the $\beta = 2$ specialisation of Lemma~\ref{ratioZ} with $p = 1$ for the ratio of partition functions, and Lemma~\ref{Kasymint} for the $\mathcal{K}$-factors with $K = N$, $M = N \mp 1$ and $(c,\tilde{c}) = (\pm 1,\pm 1)$, that is $p = \mp 1$. The outcome is
\begin{gather}
  (x - x')(\tilde{x} - \tilde{x}')\frac{N}{N + 1} \frac{Z_{N - 1}^{V}Z_{N + 1}^{V}}{(Z_N^V)^2} \mathcal{K}^{\frac{N}{N - 1}V}_{N - 1}\left(\begin{matrix} 1 & 1 \\ x & \tilde{x} \end{matrix}\right) \mathcal{K}^{\frac{N}{N + 1}V}_{N + 1}\left(\begin{matrix} -1 & -1 \\ \tilde{x}' & x' \end{matrix}\right) \nonumber \\
 \qquad \sim{\rm e}^{2\mathcal{E}[\mu_{{\rm eq}}] + 2\mathcal{L}[V] + \mathcal{Q}[V,V] - 2{\rm i}\pi \bm{\epsilon}^* \cdot(\bm{u}(z) + \bm{u}(z') + \bm{u}(w) + \bm{u}(w'))} \nonumber \\
\qquad\quad {}\times{\rm e}^{N\mathcal{L}\bigl[\ln\bigl(\frac{(x - \bullet)(\tilde{x} - \bullet)}{(x' - \bullet)(\tilde{x}' - \bullet)}\bigr)\bigr] + \mathcal{H}\bigl[\ln\bigl(\frac{(x - \bullet)(\tilde{x} - \bullet)}{(x' - \bullet)(\tilde{x}' - \bullet)}\bigr)\bigr] + \frac{1}{2} \sum_{\xi \in \{x,x',\tilde{x},\tilde{x}'\}} \mathcal{Q}[\ln(\xi - \bullet),\ln(\xi - \bullet)]} \nonumber\\
\qquad\quad {}  \times \frac{E(z,w)E(z',w')(x - x')(\tilde{x} - \tilde{x}')}{(x - \tilde{x})(x' - \tilde{x}')} \prod_{p \in \{z,z',w,w'\}} \eta(p)E(p,\infty_+) \sqrt{\dd \zeta_{\infty_+}(\infty_+)}\nonumber\\
\qquad\quad {}  \times \frac{1}{\vartheta^2_{-N\bm{\epsilon}^{*},\bm{0}} (\bm{0} |\bm{\tau} )}\bigl(\vartheta_{-(N - 1)\bm{\epsilon}^*,\bm{0}} (\bm{u}(z) + \bm{u}(w) - \bm{u}(\infty_-) - \bm{\tau}(\bm{\epsilon}^{*})  | \bm{\tau} )\nonumber\\
\qquad\quad {} \times
\vartheta_{-(N + 1)\bm{\epsilon}^*,\bm{0}} ( - \bm{u}(z') - \bm{u}(w') + \bm{u}(\infty_-) + \bm{\tau}(\bm{\epsilon}^{*})  | \bm{\tau} )\bigr).\label{Line2}
\end{gather}
In the last line, we can restore a characteristic $-N\bm{\epsilon}^*$ in the theta functions thanks to \eqref{thetashift}, which we have to use respectively with $\bm{v} = \bm{u}(z) + \bm{u}(w) - \bm{u}(\infty_-) - \bm{\tau}(\bm{\epsilon}^{*})$ and $\bm{v} = - \bm{u}(z') - \bm{u}(w') + \bm{u}(\infty_-) + \bm{\tau}(\bm{\epsilon}^{*})$. The outcome is
\begin{gather*}
	 \vartheta_{-(N - 1)\bm{\epsilon}^*,\bm{0}} ( \bm{u}(z) + \bm{u}(w) - \bm{u}(\infty_-) - \bm{\tau}(\bm{\epsilon}^{*})  | \bm{\tau} )\\
\qquad\quad{}\times\vartheta_{-(N + 1)\bm{\epsilon}^*,\bm{0}} ( - \bm{u}(z') - \bm{u}(w') + \bm{u}(\infty_-) + \bm{\tau}(\bm{\epsilon}^*)  | \bm{\tau} ) \\
	\qquad{} ={}{\rm e}^{2{\rm i}\pi \bm{\epsilon}^* \cdot \bm{\tau}(\bm{\epsilon}^*) + 2{\rm i}\pi \bm{\epsilon}^*\cdot(\bm{u}(z) + \bm{u}(z') + \bm{u}(w) + \bm{u}(w') - 2\bm{u}(\infty_-) - 2\bm{\tau}(\bm{\epsilon}^*))} \\
	\qquad\quad{} \times{} \vartheta_{-N\bm{\epsilon}^*,\bm{0}} ( \bm{u}(z) + \bm{u}(w) - \bm{u}(\infty_-)  | \bm{\tau} )\vartheta_{-N\bm{\epsilon}^*,\bm{0}} ( - \bm{u}(z') - \bm{u}(w') + \bm{u}(\infty_-)  | \bm{\tau} ).
\end{gather*}
Inserting this in \eqref{Line2} gives an exponential prefactor
\[
{\rm e}^{2\mathcal{E}[\mu_{{\rm eq}}] + 2\mathcal{L}[V] + \mathcal{Q}[V,V] + 4{\rm i}\pi \bm{\epsilon}^* \cdot( \bm{\tau}(\bm{\epsilon}^*) + \bm{u}(\infty_-))},
\]
which would be equal to $1$ if we had Proposition~\ref{lem:Ef0}. We will establish Proposition~\ref{lem:Ef0} as a~byproduct in Section~\ref{sec:can}; for the moment, we proceed assuming it holds. We get
\begin{gather}
 (x - x')(\tilde{x} - \tilde{x}')\frac{N}{N + 1} \frac{Z_{N - 1}^{V}Z_{N + 1}^{V}}{(Z_N^V)^2} \mathcal{K}^{\frac{N}{N - 1}V}_{N - 1}\left(\begin{matrix} 1 & 1 \\ x & \tilde{x} \end{matrix}\right) \mathcal{K}^{\frac{N}{N + 1}V}_{N + 1}\left(\begin{matrix} -1 & -1 \\ \tilde{x}' & x' \end{matrix}\right) \nonumber\\
 \qquad{}\sim{\rm e}^{N\mathcal{L}\bigl[\ln\bigl(\frac{(x - \bullet)(\tilde{x} - \bullet)}{(x' - \bullet)(\tilde{x}' - \bullet)}\bigr)\bigr] + \mathcal{H}\bigl[\ln\bigl(\frac{(x - \bullet)(\tilde{x} - \bullet)}{(x' - \bullet)(\tilde{x}' - \bullet)}\bigr)\bigr] + \frac{1}{2} \sum_{\xi \in \{x,x',\tilde{x},\tilde{x}'\}} \mathcal{Q}[\ln(\xi - \bullet),\ln(\xi - \bullet)]} \nonumber\\
 \qquad\quad{}\times \frac{E(z,w)E(z',w')(x - x')(\tilde{x} - \tilde{x}')}{(x - \tilde{x})(x' - \tilde{x}')} \prod_{p \in \{z,z',w,w'\}} \eta(p)E(p,\infty_+)\sqrt{\dd \zeta_{\infty_+}(\infty_+)} \nonumber\\
\qquad\quad{} \times \frac{\vartheta_{-N\bm{\epsilon}^*,\bm{0}} (\bm{u}(z) + \bm{u}(w) - \bm{u}(\infty_-) |\bm{\tau} )\vartheta_{-N \bm{\epsilon}^*,\bm{0}} ( - \bm{u}(z') - \bm{u}(w') + \bm{u}(\infty_+) |\bm{\tau} )}{\vartheta_{-N\bm{\epsilon}^*}^2 (\bm{0} |\bm{\tau} )}.\label{T2}
\end{gather}

The asymptotics of the second term is simpler as we just need to use the $p = 0$ case of Lemma~\ref{Kasymint}. The outcome is
\begin{gather}
\mathcal{K}_{N}^{V}\left(\begin{matrix} 1 & -1 \\ x & x' \end{matrix}\right)\mathcal{K}_{N}^{V}\left(\begin{matrix} 1 & -1 \\ \tilde{x} & \tilde{x}' \end{matrix}\right) \nonumber\\
\qquad {} \sim{\rm e}^{N\mathcal{L}\bigl[\ln\bigl(\frac{(x - \bullet)(\tilde{x} - \bullet)}{(x' - \bullet)(\tilde{x}' - \bullet)}\bigr)\bigr] + \mathcal{H}\bigl[\ln\bigl(\frac{(x - \bullet)(\tilde{x} - \bullet)}{(x' - \bullet)(\tilde{x}' - \bullet)}\bigr)\bigr] + \frac{1}{2} \sum_{\xi \in \{x,x',\tilde{x},\tilde{x}'\}} \mathcal{Q}[\ln(\xi - \bullet),\ln(\xi - \bullet)]} \nonumber\\
\qquad\quad{} \times \frac{(x - x')(\tilde{x} - \tilde{x}')}{E(z,z')E(w,w')} \prod_{p \in \{z,z',w,w'\}} E(p,\infty_+)\sqrt{\dd \zeta_{\infty_+}(\infty_+)} \nonumber\\
\qquad\quad{} \times \frac{\vartheta_{-N\bm{\epsilon}^*,\bm{0}} (\bm{u}(z) - \bm{u}(z')  |\bm{\tau} )\vartheta_{-N\bm{\epsilon}^*,\bm{0}} (\bm{u}(w) - \bm{u}(w') |\bm{\tau} )}{\vartheta_{-N\bm{\epsilon}^*,\bm{0}}^2 (\bm{0} |\bm{\tau} )}.\label{T3}
\end{gather}

We also observe that \eqref{T1}, \eqref{T2} and \eqref{T3} all have the same exponential factor involving~$\mathcal{L}$,~$\mathcal{H}$ and~$\mathcal{Q}$, the same factor
\[
\prod_{p \in \{z,z',w,w'\}} E(p,\infty_+)\sqrt{\dd \zeta_{\infty_+}(\infty_+)},
\]
and the same squared $\vartheta$ in the denominator, except for \eqref{T1} where the latter is not squared. After we cancel those and multiply further by
\[
\frac{(x - \tilde{x})(x' - \tilde{x}')}{(x - x')(\tilde{x} - \tilde{x}')},
\]
the identity \eqref{smallrelb2} as $N \rightarrow \infty$ then becomes
\begin{gather}
 \frac{E(z,w)E(z',w')\,(x - \tilde{x}')(\tilde{x} - x')}{E(z,z')E(z,w')E(w,z')E(w,w')}\vartheta_{-N\bm{\epsilon}^*,\bm{0}} (\bm{u}(z) - \bm{u}(z') + \bm{u}(w) - \bm{u}(w') |\bm{\tau} )\vartheta_{-N\bm{\epsilon}^*,\bm{0}} (\bm{0} |\bm{\tau} ) \nonumber\\
 \qquad{}= E(z,w)E(z',w')\eta(z)\eta(z')\eta(w)\eta(w')\vartheta_{-N\bm{\epsilon}^*,\bm{0}} (\bm{u}(z) + \bm{u}(w) - \bm{u}(\infty_-) |\bm{\tau} )\nonumber\\
  \qquad\quad{}\times\vartheta_{-N\bm{\epsilon}^*,\bm{0}} (- \bm{u}(z') - \bm{u}(w') + \bm{u}(\infty_-) |\bm{\tau} ) \nonumber\\
 \qquad\quad{}+ \frac{(x - \tilde{x})(x' - \tilde{x}')}{E(z,z')E(w,w')}\,\vartheta_{-N\bm{\epsilon}^*,\bm{0}} (\bm{u}(z) - \bm{u}(z') |\bm{\tau} )\vartheta_{-N\bm{\epsilon}^*,\bm{0}} (\bm{u}(w) - \bm{u}(w') |\bm{\tau} ) + o(1).\label{pre-id}
\end{gather}

Let $\bm{\mu} \in \mathbb{R}^g$. Let us assume temporarily that
$\epsilon_{1}^*, \ldots, \epsilon_{g}^*$ are $\mathbb{Q}$-linearly independent. Then,
by Kronecker's theorem, one can find an increasing sequence $(N^{(n)})_{n \geq 1}$
such that
\begin{equation*}
 \lim_{n \rightarrow \infty}\bigl(N^{(n)} \bm{\epsilon}^* + \lfloor -N^{(n)}\bm{\epsilon}^* \rfloor\bigr) = - \bm{\mu},
\end{equation*}
where the integer part is applied to each component of the vector.
Using $N = N^{(n)}$ in \eqref{pre-id} and letting $n \rightarrow \infty$
we get the same identity without the $o(1)$ and with characteristic
$\bm{\mu}$, $\bm{0}$ for all theta functions, which is \eqref{eq:formula-beta-2} with $\bm{\nu} = \bm{0}$ and pairwise distinct points $z,z',w,w' \in C_+$.

If $\epsilon_1^*,\ldots,\epsilon_g^*$ are not $\mathbb{Q}$-linearly independent, thanks to Corollary~\ref{codense} we can take a sequence of $\beta$-ensembles whose spectral curve admits as underlying Riemann surface a hyperelliptic curve~$\widehat{C}^{(n)}$ with real Weierstra\ss{} points converging to those of $\widehat{C}$, and whose filling fractions are $\mathbb{Q}$-linearly independent. By the previous argument, we know \eqref{eq:formula-beta-2} with arbitrary $\bm{\mu} \in \mathbb{R}^g$ and $\bm{\nu} = \bm{0}$ and $z,z',w,w' \in C^{(n)}_+$ pairwise distinct. Since all the members of this identity are continuous in the real Weierstra\ss{} points $a_0 < b_0 < \cdots < a_g < b_g$ while the values $X(z),X(z'),X(w),X(w') \in \mathbb{C}$ are fixed away from them, taking $n \rightarrow \infty$ shows that the formula also holds for $\widehat{C}$, $\bm{\nu} = 0$ and pairwise distinct points $z,z',w,w' \in C_+$. Let us call $(\star)$ this formula. From there, we can derive the desired identity in full generality by using repeatedly analytic continuation, as follows.

Firstly, all terms in ($\star$) are holomorphic functions of $\bm{\mu} \in \mathbb{C}^{g}$, and the identity holds for real $\bm{\mu}$. Therefore, it must hold as well for complex $\bm{\mu}$. In particular, we can replace $\bm{\mu}$ with $\bm{\mu} + \bm{\tau}^{-1}(\bm{\nu})$ for arbitrary $\bm{\mu},\bm{\nu} \in \mathbb{R}^{g}$, and rewrite all theta functions as
\[
\vartheta_{\bm{\mu} + \bm{\tau}(\bm{\nu}),\bm{0}}(\bm{z}|\bm{\tau}) ={\rm e}^{{\rm i}\pi\bm{\tau}^{-1}(\bm{\nu}) \cdot \bm{\nu} + 2{\rm i}\pi \bm{\tau}^{-1}(\bm{\nu}) \cdot \bm{z}} \vartheta_{\bm{\mu},\bm{\nu}}(\bm{z}|\bm{\tau}).
\]
The resulting phase is common to the three terms of the identity, therefore ($\star$) is valid for arbitrary $\bm{\mu},\bm{\nu} \in \mathbb{C}^{g}$, and a fortiori for arbitrary real $\bm{\mu}$, $\bm{\nu}$.

Secondly, fix $R > 0$ large enough, and let $\widetilde{\bm{\Delta}}_{2g + 2}(R)$ be the subset of points in our para\-meter space of marked hyperelliptic curves $\widetilde{\bm{\Delta}}_{2g + 2}$ such that the $X$-image of the Weierstra\ss{} points have moduli $\leq R$. Having fixed and pairwise distinct values $x,x',\tilde{x},\tilde{x}' \in \mathbb{C}$ such that $\max(|x|,|x'|,|\tilde{x}|,|\tilde{x}'|) \geq 2R$ determines a unique quadruple of analytic sections $z,z',w,w'$: $\widetilde{\bm{\Delta}}_{2g + 2}(R) \rightarrow \hat{\bm{C}}$ such that $x = X(z)$, $x' = X(z')$, $\tilde{x} = X(w)$ and $\tilde{x}' = X(w')$. These sections represent points in the (moving with parameters) hyperelliptic curve. Due to Lemma~\ref{lem:holo-periods} and the discussion preceding it, all terms in ($\star$) are holomorphic functions on $\widetilde{\bm{\Delta}}_{2g + 2}(R)$, but since we know that ($\star$) holds in the connected component of the base point in the real locus of $\widetilde{\bm{\Delta}}_{2g + 2}(R)$, it must also hold over the whole $\widetilde{\bm{\Delta}}_{2g + 2}(R)$ with $\max(|x|,|x'|,|\tilde{x}|,|\tilde{x}'|) \geq 2R$. Now rather fixing a~marked hyperelliptic curve corresponding to a point in $\widetilde{\bm{\Delta}}_{2g + 2}(R)$, the identity ($\star$) is valid for points~$z$,~$z'$,~$w$,~$w'$ in a neighborhood of $\infty_+$, but since it can be seen as an identity involving only meromorphic functions of $z,z',w,w' \in \widetilde{C}$, it must hold for arbitrary quadruple of points in~$\widetilde{C}$. Eventually as $R$ is arbitrary in the argument, we get~\eqref{eq:formula-beta-2} over the whole parameter space~$\widetilde{\bm{\Delta}}_{2g + 2}$ and quadruples of points in the universal cover of the associated hyperelliptic curve.
\end{proof}

This formula implies the Fay identity, as we now show.
\begin{prop}\label{corol:fay-from-beta=2}
 Formula \eqref{eq:formula-beta-2} implies the Fay identity \eqref{Fayid} for hyperelliptic curves.
\end{prop}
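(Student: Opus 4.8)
The plan is to translate both identities into the common language of prime forms and theta functions and to show that \eqref{eq:formula-beta-2} becomes \eqref{Fayid} once the four free points are matched through the hyperelliptic involution $z \mapsto z^{\ast}$, i.e.\ $(x,s)\mapsto(x,-s)$. As recorded in the proof of Lemma~\ref{lem:compute-nu-infty}, this involution fixes $X$, negates $\dd\bm u$, and swaps $\infty_\pm$, so that $\bm u(z^{\ast}) = \bm u(\infty_-) - \bm u(z)$. First I would rewrite \eqref{Fayid}: by the definition \eqref{Eprimtheta}, each factor $\theta(\bm u(z_i)-\bm u(z_j)+\bm c\,|\,\bm\tau)$ equals $E(z_i,z_j)\sqrt{\omega_{\bm c}(z_i)}\sqrt{\omega_{\bm c}(z_j)}$, and since in each of the three products of \eqref{Fayid} every point occurs exactly once inside such odd-characteristic factors, the spinor factor $\prod_{i=1}^{4}\sqrt{\omega_{\bm c}(z_i)}$ is common to all three terms and divides out. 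This turns \eqref{Fayid} into an identity, say $(\star\star)$, between products of plain prime forms $E(z_i,z_j)$ and theta values $\theta(\bm v+\cdots\,|\,\bm\tau)$.

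Next I would massage \eqref{eq:formula-beta-2}. Setting $\bm v := \bm\tau(\bm\mu)+\bm\nu$, the definition of $\vartheta_{\bm\mu,\bm\nu}$ gives $\vartheta_{\bm\mu,\bm\nu}(\bm x\,|\,\bm\tau) = e^{{\rm i}\pi\bm\mu\cdot\bm\tau(\bm\mu)+2{\rm i}\pi\bm\mu\cdot\bm\nu}\,e^{2{\rm i}\pi\bm\mu\cdot\bm x}\,\theta(\bm x+\bm v\,|\,\bm\tau)$, and one checks that the point-dependent exponent $2{\rm i}\pi\bm\mu\cdot(\cdot)$, summed over the two theta factors of each term, takes the same value $2{\rm i}\pi\bm\mu\cdot(\bm u(z)-\bm u(z')+\bm u(w)-\bm u(w'))$ on all three terms, so this factor cancels and leaves $\theta(\cdot+\bm v)$ everywhere. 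I would then take the correspondence $(z_1,z_2,z_3,z_4)=(z,z',(w')^{\ast},w^{\ast})$: using $\bm u(p^{\ast})=\bm u(\infty_-)-\bm u(p)$, the three theta arguments of \eqref{eq:formula-beta-2} become exactly those of the three terms of $(\star\star)$; in particular the right-hand side arguments $\bm u(z)+\bm u(w)-\bm u(\infty_-)$ and $-\bm u(z')-\bm u(w')+\bm u(\infty_-)$ become $\bm u(z_1)-\bm u(z_4)$ and $\bm u(z_3)-\bm u(z_2)$. Since $p\mapsto p^{\ast}$ is an involution, proving \eqref{Fayid} for all quadruples of this special form is the same as proving it for all quadruples.

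It remains to reconcile the coefficients, which is the heart of the argument. Here I would use the hyperelliptic prime-form identity $X(z)-X(w)=\lambda\,\eta(z)\eta(w)\,E(z,w)\,E(z,w^{\ast})$ for a constant $\lambda$: both sides are weight-zero functions with simple zeros at $w$ and $w^{\ast}$ and simple poles at $\infty_\pm$, so they agree up to a constant, which a local computation near $\infty_+$ fixes. Substituting this for every $X$-difference in \eqref{eq:formula-beta-2}, and using the companion relations $\omega_{\bm c}(z^{\ast})=-\omega_{\bm c}(z)$, $E(a,b^{\ast})=E(b,a^{\ast})$ and $E(a^{\ast},b^{\ast})=\pm E(a,b)$ that follow from \eqref{Eprimtheta} and the oddness of $\dd\bm u$, the prime-form coefficients of the three terms of \eqref{eq:formula-beta-2} should become those of the three terms of $(\star\star)$ times a single nonzero factor $W$ of weight $+1$ in each variable. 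Cancelling $W$ then gives $(\star\star)$, hence \eqref{Fayid}.

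The main obstacle is precisely this reconciliation. The prime form defined by \eqref{Eprimtheta} is only quasi-antisymmetric, $E(b,a)=-e^{2{\rm i}\pi\bm e'\cdot(\bm u(a)-\bm u(b))}E(a,b)$ for $\bm c=\tfrac12\bm e+\tfrac12\bm\tau(\bm e')$, and it acquires similar exponential prefactors under $p\mapsto p^{\ast}$ and under the branch choices implicit in $\omega_{\bm c}(z^{\ast})=-\omega_{\bm c}(z)$. One must verify that all these phases, together with the $\tfrac12$-form weights and the constant $\lambda$, assemble into one and the same factor $W$ for all three terms, rather than leaving a residual point-dependent discrepancy or a stray sign; concretely, this amounts to checking that the three-term sign pattern of \eqref{eq:formula-beta-2} (first term minus second equals third) matches the pattern $+,-,+$ of \eqref{Fayid} under the correspondence above. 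This is the only genuinely delicate computation.
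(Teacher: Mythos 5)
Your overall strategy is genuinely different from the paper's, and the preliminary reductions are sound (stripping the common spinor factor $\prod_i\sqrt{\omega_{\bm c}(z_i)}$ from \eqref{Fayid}, reducing $\vartheta_{\bm\mu,\bm\nu}$ to $\theta(\cdot+\bm v)$ with a common point-dependent phase, and parametrising arbitrary quadruples as $(z,z',(w')^*,w^*)$). But the step you yourself flag as ``the heart of the argument'' is exactly where the proof is missing, and it is not a routine detail. After substituting $X(z)-X(w)=\lambda\,\eta(z)\eta(w)E(z,w)E(z,w^*)$ into both $X$-products of \eqref{eq:formula-beta-2} and factoring out $\lambda^2\prod_p\eta(p)E(z,w)E(z',w')$, the three terms acquire \emph{a priori different} residual factors: the third term carries a stray $\lambda^{-2}$, and the relations $E(a,b^*)=E(b,a^*)$, $E(a^*,b^*)=\pm E(a,b)$, $\omega_{\bm c}(p^*)=-\omega_{\bm c}(p)$ each hold only up to square-root branches and quasi-periodicity phases of $E$ on $\widetilde C\times\widetilde C$ (which depend on the chosen lift of the involution). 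Unless you show these assemble into one and the same factor for all three terms, you obtain only a ``twisted'' three-term relation, not \eqref{Fayid}. Since the whole point is to \emph{derive} Fay rather than check consistency with it, this verification cannot be waved through; moreover your auxiliary identity for $X(z)-X(w)$ (including single-valuedness of $\eta(z)E(z,w)E(z,w^*)$ on the curve and the value of $\lambda$) is itself an unproven lemma of comparable delicacy. As it stands the proposal is a plausible but incomplete programme.

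The paper avoids all of this with one observation: the right-hand side of \eqref{eq:formula-beta-2} is unchanged (both sides flip sign by antisymmetry of $E(z,w)$) under the exchange $z\leftrightarrow w$, whereas the left-hand side is not. Writing the identity once as stated and once with $z$ and $w$ exchanged, equating the two left-hand sides and subtracting, the entire right-hand side --- the term containing $\bm u(\infty_-)$ and all the $\eta$'s --- drops out. The difference of the first terms collapses by the elementary algebra $(b-c)(a-d)-(a-c)(b-d)=-(a-b)(c-d)$ applied to the $X$-values, so that dividing by $(X(z)-X(w))(X(z')-X(w'))$ leaves a three-term relation in prime forms and thetas which, after substituting \eqref{Eprimtheta} with $(z_1,z_2,z_3,z_4)=(z,w,z',w')$, is literally \eqref{Fayid}. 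No hyperelliptic prime-form identity, no involution, and no phase bookkeeping are needed. If you want to salvage your route you must carry out the phase reconciliation explicitly; otherwise the symmetry argument is the efficient path.
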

\begin{proof}\allowdisplaybreaks
 The key remark is that $z$ and $w$ play almost symmetric roles in
 \eqref{eq:formula-beta-2}. We write the two equations obtained when exchanging $z$ and $w$, specialised to $\bm{\mu} = 0$ and $\bm{\nu} \in \mathbb{C}^g$ arbitrary but rather transferred to the argument of the theta functions, so that everything is expressed in terms of $\theta = \vartheta_{\bm{0},\bm{0}}$:  \begin{gather*}
	 (X(w) - X(z'))(X(z) - X(w'))\frac{E(z,w)E(z',w')}{E(w,z')E(z,w')}\\
\qquad\quad{}\times \frac{\theta (\bm{\nu} + \bm{u}(z) - \bm{u}(z') + \bm{u}(w) - \bm{u}(w') |\bm{\tau} )}{E(z,z')E(w,w')}\theta (\bm{\nu} |\bm{\tau} )\\
\qquad\quad{} -(X(z) - X(w))(X(z') - X(w')) \frac{\theta(\bm{\nu} + \bm{u}(z) - \bm{u}(z') |\bm{\tau} )}{E(z,z')}\frac{\theta (\bm{\nu} + \bm{u}(w) - \bm{u}(w') |\bm{\tau} )}{E(w,w')}\\
\qquad{} = E(z,w)E(z',w')\eta(z)\eta(z')\eta(w)\eta(w')\theta (\bm{\nu} + \bm{u}(z) + \bm{u}(w) - \bm{u}(\infty_-) |\bm{\tau} )\\
\qquad\quad{}\times \theta (\bm{\nu} - \bm{u}(z') - \bm{u}(w') + \bm{u}(\infty_-) |\bm{\tau} )\\
\qquad{}  = (X(z) - X(z'))(X(w) - X(w'))\frac{E(z,w)E(z',w')}{E(z,z')E(w,w')}\\
\qquad\quad{}\times	 \frac{\theta (\bm{\nu} + \bm{u}(z) - \bm{u}(z') + \bm{u}(w) - \bm{u}(w') |\bm{\tau} )}{E(w,z')E(z,w')} \theta (\bm{\nu} |\bm{\tau} )\\
\qquad\quad{}  -(X(z)- X(w))(X(z') - X(w'))\frac{\theta (\bm{\nu} + \bm{u}(w) - \bm{u}(z') |\bm{\tau} )}{E(w,z')}\frac{\theta (\bm{\nu} + \bm{u}(z) - \bm{u}(w') |\bm{\tau} )}{E(z,w')}.
 \end{gather*}

 Subtracting the first member from the third member of the equalities, grouping the terms together and dividing by $(X(z) - X(w))(X(z') - X(w'))$ yields the identity
 \begin{gather*}
 0 =\frac{E(z,w)E(z',w')}{E(w,z')E(z,w')E(z,z')E(w,w')} \theta (\bm{\nu} + \bm{u}(z) - \bm{u}(z') + \bm{u}(w) - \bm{u}(w') |\bm{\tau} )\theta (\bm{\nu} |\bm{\tau} ) \\
\hphantom{0 =}{}
- \frac{\theta (\bm{\nu} + \bm{u}(w) - \bm{u}(z') |\bm{\tau} )\theta (\bm{\nu} + \bm{u}(z) - \bm{u}(w') |\bm{\tau} )}{E(w,z')E(z,w')}\\
\hphantom{0 =}{}
+ \frac{\theta (\bm{\nu} + \bm{u}(z) - \bm{u}(z') |\bm{\tau} )\theta (\bm{\nu} + \bm{u}(w) - \bm{u}(w') |\bm{\tau} )}{E(z,z')E(w,w')},
 \end{gather*}
 which is exactly the Fay identity \eqref{Fayid} after we replace the prime form with its expression \eqref{Eprimtheta} and take $(z_1,z_2,z_3,z_4) = (z,w,z',w')$.
 \end{proof}

Finally, we provide a direct proof of \eqref{eq:formula-beta-2} based on complex analysis.
This proof is based on a classical theorem of Riemann which we recall for the
convenience of the reader.

\begin{Theorem}[{\cite[Theorem 3.1]{Tata1}}]\label{thm:Riemann-zero}
 There is a vector $\bm{k} \in \C^{g}$ such that for all
 $\boldsymbol{\nu}' \in \C^{g}$, the function $z \mapsto \theta (\bm{\nu}' + \bm{u}(z) | \bm{\tau} )$ of $z \in \widetilde{C}$ either vanishes identically or has $g$ zeroes $w_1,\ldots,w_g$ in a~fundamental domain, satisfying
\[
\sum_{h=1}^{g}\bm{u}(w_{h}) = - \bm{\nu} + \bm{k} \mod \mathbb{L}.
\]
\end{Theorem}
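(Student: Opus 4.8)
The plan is to apply the argument principle to $f(z) = \theta(\bm{\nu}' + \bm{u}(z)|\bm{\tau})$ on the cut surface $\widehat{C}^{0} = \widehat{C}\setminus\bigcup_{h=1}^{g}(\A_{h}\cup\B_{h})$. Since $\widehat{C}^{0}$ is simply connected and the Abel map $\bm{u}$ is single-valued and holomorphic there, $f$ is a well-defined holomorphic function on $\widehat{C}^{0}$; assume $f\not\equiv 0$. I will extract both the number of zeros and their sum from boundary integrals over $\partial\widehat{C}^{0}$, whose edges come in paired banks of the $\A_{h}$- and $\B_{h}$-cuts. The whole computation rests on two facts recorded in the excerpt: the quasi-periodicity of $\theta$, and the normalisations $\oint_{\A_{h}}\dd u_{k}=\delta_{h,k}$ and $\tau_{h,k}=\oint_{\B_{h}}\dd u_{k}$. (I write $\bm{\nu}'$ throughout; the $\bm{\nu}$ in the displayed conclusion is a misprint for $\bm{\nu}'$.)

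First I would count the zeros via
\[
\mathcal{N} = \frac{1}{2\pi {\rm i}}\oint_{\partial\widehat{C}^{0}}\dd\log f.
\]
The two banks of each $\A_{h}$-cut carry values of $f$ related by the quasi-periodicity factor for the shift $\bm{\tau}(\bm{e}_{h})$, where $\bm{e}_{h}$ is the $h$-th canonical basis vector, while the two banks of each $\B_{h}$-cut carry equal values, the shift $\bm{e}_{h}$ being an integer period for which the factor is $1$. Hence the $\B$-cuts cancel in the boundary integral, and each $\A_{h}$-cut contributes $\oint_{\A_{h}}\dd u_{h}=1$ coming from the $-2\pi {\rm i}\, u_{h}$ term in the logarithm of the quasi-periodicity factor. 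Summing over $h$ gives $\mathcal{N}=g$, so $f$ has exactly $g$ zeros $w_{1},\dots,w_{g}$ in $\widehat{C}^{0}$.

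Next I would compute their sum through
\[
\sum_{h=1}^{g}\bm{u}(w_{h}) = \frac{1}{2\pi {\rm i}}\oint_{\partial\widehat{C}^{0}}\bm{u}(z)\,\dd\log f(z),
\]
which holds by the residue theorem, since $\bm{u}(z)\,\dd\log f$ has residue $\bm{u}(w_{h})$ at each simple zero. In contrast to the zero count, the prefactor $\bm{u}(z)$ now also jumps across the cuts — by $\bm{e}_{h}$ across $\B_{h}$ and by $\bm{\tau}(\bm{e}_{h})$ across $\A_{h}$ — so the two banks of each pair no longer cancel. Expanding the paired contributions and evaluating the resulting period integrals with $\oint_{\A_{h}}\dd u_{k}=\delta_{h,k}$ and $\oint_{\B_{h}}\dd u_{k}=\tau_{h,k}$ yields
\[
\sum_{h=1}^{g}\bm{u}(w_{h}) \equiv -\bm{\nu}' + \bm{k}\pmod{\mathbb{L}},
\]
with a $\bm{\nu}'$-independent vector of Riemann constants whose $j$-th component can be written
\[
k_{j} = \frac{1+\tau_{j,j}}{2} - \sum_{\ell\neq j}\oint_{\A_{\ell}}\biggl(\int_{p_{0}}^{z}\dd u_{j}\biggr)\dd u_{\ell}(z).
\]
Here the $-\bm{\nu}'$ arises from the linear $-2\pi {\rm i}\,(\nu'_{h}+u_{h}(z))$ term in the $\A$-cut factor, and the $\mathbb{L}$-ambiguity reflects the integer winding of $\log f$ along the cuts.

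I expect the main obstacle to be precisely the bilinear bookkeeping in the second integral. Unlike the zero count, the prefactor $\bm{u}(z)$ varies along the boundary, so the two members of each edge pair contribute unequal amounts, and one must simultaneously track the integer-period shift of $\bm{u}$ across the $\B$-cuts and the $\bm{\tau}$-period shift across the $\A$-cuts; this is exactly the place where $\bm{\tau}$ and the Riemann constant $\bm{k}$ enter. Care is also needed with the branch of $\log f$ chosen along $\partial\widehat{C}^{0}$ and with the freedom in the base point $p_{0}$, which alters $\bm{k}$ but leaves the congruence class intact. Everything else — the argument principle and the substitution of the quasi-periodicity relation — is routine given the data recalled in the excerpt.
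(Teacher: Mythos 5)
Your proposal is correct and is exactly the classical argument-principle computation on the fundamental polygon that underlies the proof in the cited reference \cite[Theorem 3.1]{Tata1}; the paper itself offers no proof of this statement, quoting it as a known result. The only quibble is the precise expression you give for $k_j$ (the standard form has $\oint_{\A_\ell} u_\ell\,\dd u_j$ rather than $\oint_{\A_\ell} u_j\,\dd u_\ell$, the two differing by a base-point-dependent lattice contribution), which is immaterial since the theorem only asserts the existence of $\bm{k}$.
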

The vector $\bm{k}$ is called vector of Riemann constants.

\begin{proof}[Direct geometric proof of Theorem~\ref{thm:formula_beta_2}]
It suffices to prove the identity for $\bm{\mu} = 0$, since we still have $\bm{\nu} \in \mathbb{C}^g$ arbitrary which allow reconstructing arbitrary characteristics. Let $\bm{\nu}' \in \C^{g}$. Riemann's
 theorem \ref{thm:Riemann-zero} implies that seen as a function of $z \in \widetilde{C}$, the theta
 function $z \mapsto \theta (\bm{\nu}' + \bm{u}(z) |\bm{\tau} )$ is either identically zero of has $g$ zeroes in a fundamental domain. Let $\mathcal{D}_{\bm{\nu}'}$ be its zero divisor. We apply this to $\bm{\nu}' = \bm{\nu} + \bm{u}(w) - \bm{u}(z') - \bm{u}(w')$ with $\bm{\nu}$, $z'$, $w$, $w'$ generic such that it is not in the theta divisor. A classical consequence of Riemann's theorem
 \cite[Theorem~VI.3.3]{Farkas-Kra} is that meromorphic functions on $C$ with pole divisor at most $\mathcal{D}_{\bm{\nu}'}$ are constant. For convenience, write
 \begin{gather*}
	 \mathfrak{c}_{1} = \frac{(X(w) - X(z'))(X(z) - X(w'))E(z,w)E(z',w')}{E(w,z')E(z,w')E(z,z')E(w,w')},\\
	 \mathfrak{c}_{2} = -\frac{(X(z) - X(w))(X(z') - X(w'))}{E(z,z')E(w,w')},\qquad
	 \mathfrak{c}_{3} = E(z,w)E(z',w')\eta(z)\eta(z')\eta(w)\eta(w')
 \end{gather*}
and consider
\begin{align*}
 \Psi(z) ={}& \frac{\mathfrak{c}_{2}}{\mathfrak{c}_1}\,\frac{\theta(\bm{\nu} + \bm{u}(z) - \bm{u}(z') |\bm{\tau} )\theta (\bm{\nu} + \bm{u}(w) - \bm{u}(w') |\bm{\tau} )}{\theta (\bm{\nu} + \bm{u}(z) - \bm{u}(z') + \bm{u}(w) - \bm{u}(w') |\bm{\tau} )\theta (\bm{\nu} |\bm{\tau} )} \\
 &{}+ \frac{\mathfrak{c}_{3}}{\mathfrak{c}_1}\,\frac{\theta (\bm{\nu} +\bm{u}(z) + \bm{u}(w) - \bm{u}(\infty_-) |\bm{\tau} )\theta (\bm{\nu}- \bm{u}(z') - \bm{u}(w') + \bm{u}(\infty_-) |\bm{\tau} )}{\theta (\bm{\nu} + \bm{u}(z) - \bm{u}(z') + \bm{u}(w) - \bm{u}(w') |\bm{\tau} )\theta (\bm{\nu} |\bm{\tau} )}.
\end{align*}
This is a meromorphic function of $z \in \widehat{C}$. We have seen that the theta function in the denominator has $g$ zeroes,
which are thus poles of $\Psi$. We now consider the poles of the
 other factors: the zeroes of $\mathfrak{c}_{1}$ and the poles of
 $\mathfrak{c}_{2}$ and $\mathfrak{c}_{3}$. The coefficient $\mathfrak{c}_1$ has a simple zero at $z = \jmath(w')$, where $\jmath$ is the hyperelliptic involution, and at $z = w$. The coefficients $\mathfrak{c}_2$ and $\mathfrak{c}_3$ have simple poles at $z = z'$. Accordingly, both ratios $\frac{\mathfrak{c}_{2}}{\mathfrak{c}_{1}}$ and $\frac{\mathfrak{c}_{3}}{\mathfrak{c}_{1}}$ have a simple pole only at
 $z = \jmath(w')$. A careful computation of the residues taking into account $\bm{u}(\jmath(w')) = \bm{u}(\infty_-) - \bm{u}(w')$ shows that $\Psi$ does not have a pole when $z = \jmath(w')$. Notice that there are no pole at $\infty_{\pm}$ as poles coming from linear terms are cancelled by other linear terms or the form $\eta$.
 We conclude that the divisor of poles of $\Psi$ is at most $\mathcal{D}_{\bm{\nu}'}$. Thus, it is a constant function of $z$. A similar argument with the other variables would show that it is a constant function of $z$, $z'$, $w$, $w'$. By sending the points~$z'$,~$w$,~$w'$ to~$z$ one after the other, and we arrive to $\Psi = 1$.
\end{proof}

\subsection[The beta = 1 formula]{The $\boldsymbol{\beta = 1}$ formula}

\begin{Theorem}\label{thm:formula_beta_1}
 Consider a marked hyperelliptic curve $\widehat{C}$ and let
 $z_1,z_1',z_2,z_2' \in \widetilde{C}$ and $\bm{\mu}, \bm{\nu} \in \mathbb{R}^g$. Writing
 $x_i = X(z_i)$ and $x_i' = X\big(z_i'\big)$, we have
 \begin{gather}
  \biggl(\frac{E(z_1, z_2)E\bigl(z'_1, z_2'\bigr)}{E\bigl(z_1, z_1'\bigr)E\bigl(z_1, z_2'\bigr)E\bigl(z_2, z_1'\bigr)E\bigl(z_2, z_2'\bigr)}\biggr)^{2}\nonumber\\
  \qquad \quad{}\times\vartheta_{\bm{\mu}, \bm{\nu}}\bigl( \bm{u}(z_1) - \bm{u}\bigl(z'_1\bigr) + \bm{u}(z_2) - \bm{u}\bigl(z_2'\bigr)\big|\tfrac{\bm{\tau}}{2}\bigr)\vartheta_{\bm{\mu},\bm{\nu}}\bigl(\bm{0}\big|\tfrac{\bm{\tau}}{2}\bigr)\nonumber \\
 \qquad {}= \frac{(x_1 - x_2)\bigl(x'_1 - x_2'\bigr)}{\bigl(x_1 - x_1'\bigr)\bigl(x_2 - x_2'\bigr)}\frac{\vartheta_{\bm{\mu},\bm{\nu}}\bigl( \bm{u}(z_1) - \bm{u}\bigl(z_2'\bigr)\big|\tfrac{\bm{\tau}}{2}\bigr)}{E\bigl(z_1,z_2'\bigr)^{2}}\frac{\vartheta_{\bm{\mu},\bm{\nu}}\bigl(\bm{u}(z_2) - \bm{u}\bigl(z_1'\bigr)\big|\tfrac{\bm{\tau}}{2}\bigr)}{E\bigl(z_1',z_2\bigr)^{2}}\nonumber\\
 \qquad\quad{}-\frac{(x_1 - x_2)\bigl(x_1' - x_2'\bigr)}{\bigl(x_1 - x_2'\bigr)\bigl(x_2 - x_1'\bigr)}\frac{\vartheta_{\bm{\mu},\bm{\nu}}\bigl(\bm{u}(z_1) - \bm{u}\bigl(z_1'\bigr)\big|\tfrac{\bm{\tau}}{2}\bigr)}{E\bigl(z_1, z'_1\bigr)^{2}}\frac{\vartheta_{\bm{\mu},\bm{\nu}}(\bm{u}(z_2) - \bm{u}\bigl(z_2'\bigr)\big|\tfrac{\bm{\tau}}{2}\bigr)}{E\bigl(z_2', z_2\bigr)^{2}}\nonumber\\
 \qquad\quad{}+\frac{\bigl(E(z_1,z_2)E\bigl(z'_1, z'_2\bigr) \eta(z_1)\eta\bigl(z_1'\bigr)\eta(z_2)\eta\bigl(z_2'\bigr)\bigr)^2}{\bigl(x_1 - x'_1\bigr)\bigl(x_1 -x_2'\bigr)\bigl(x_2 - x'_1\bigr)\bigl(x_2 - x_2'\bigr)} \nonumber\\
 \qquad \quad{} \times \vartheta_{\bm{\mu},\bm{\nu}}\bigl( \bm{u}(z_1) + \bm{u}(z_2) - \bm{u}(\infty_-)\big|\tfrac{\bm{\tau}}{2}\bigr) \vartheta_{\bm{\mu},\bm{\nu}}\bigl(-\bm{u}\bigl(z'_1\bigr) - \bm{u}\bigl(z'_2\bigr) + \bm{u}(\infty_-)\big|\tfrac{\bm{\tau}}{2}\bigr).\label{eq:formula-beta-1}
 \end{gather}
\end{Theorem}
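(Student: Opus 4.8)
The plan is to imitate the three-stage proof of Theorem~\ref{thm:formula_beta_2}: first establish the identity for hyperelliptic curves with real Weierstra\ss{} points by realising them as spectral curves of a $\beta = 1$ ensemble and taking the large $N$ asymptotics of an exact Pfaffian identity, then promote the characteristics to arbitrary complex values, and finally extend to all complex marked curves and all points by analytic continuation. For real Weierstra\ss{} points I would invoke Proposition~\ref{propallhyp} (which holds for every $\beta > 0$) to obtain an off-critical $\beta = 1$ ensemble whose spectral curve carries the prescribed surface $\widehat{C}$, and specialise the Borodin--Strahov Pfaffian identity of Theorem~\ref{th:idbeta1} to $m = 2$, matrix size $2N$, with $x_i = X(z_i)$ and $x_i' = X(z_i')$.

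The $4 \times 4$ Pfaffian on the right-hand side expands into three monomials,
\[
\pf\bigl(\mathcal{M}^{(1)}\bigr) = \mathcal{M}^{(1)}_{++}(x_1,x_2)\,\mathcal{M}^{(1)}_{--}(x_1',x_2') - \mathcal{M}^{(1)}_{+-}(x_1,x_1')\,\mathcal{M}^{(1)}_{+-}(x_2,x_2') + \mathcal{M}^{(1)}_{+-}(x_1,x_2')\,\mathcal{M}^{(1)}_{+-}(x_2,x_1'),
\]
in bijection with the three terms on the right of \eqref{eq:formula-beta-1}. I would rewrite each entry through the kernels $\mathcal{K}$ of \eqref{Kmcc}: the factor $\mathcal{M}^{(1)}_{++}$ corresponds to $p = -2$ with $(c,\tilde c) = (1,1)$, the factor $\mathcal{M}^{(1)}_{--}$ to $p = +2$ with $(c,\tilde c) = (-1,-1)$, and the two mixed factors to $p = 0$. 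The large $N$ behaviour of each then follows from Lemma~\ref{Kasymint}; the prefactor $Z_{2N-2}Z_{2N+2}/(Z_{2N})^2$ arising from the product $\mathcal{M}^{(1)}_{++}\mathcal{M}^{(1)}_{--}$ is supplied by Lemma~\ref{ratioZ} with $K = 2N$, $p = 2$; and the left-hand correlator is estimated directly through Theorem~\ref{thm:clt}, as in \eqref{T1}. The exponent $\tfrac{2}{\beta} = 2$ carried by the $\mathcal{Q}$-terms squares every prime form and $E$-factor, and the matrix $\tfrac{\beta}{2}\bm{\tau} = \tfrac{\bm{\tau}}{2}$ appears as stated; as in the $\beta = 2$ computation, the exponential prefactors collapse through Proposition~\ref{lem:Ef0} and the spurious $\bm{\tau}(\bm{\epsilon}^*)$ translations of the arguments are cleared using \eqref{thetashift}.

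The feature specific to $\beta = 1$, and the step I expect to be the main obstacle, is that $\bm{v}_{\rm eq}$ no longer vanishes. Since $\bm{v}_{\rm eq}$ is real by Proposition~\ref{lem:compute-nu} and enters every theta argument as a common additive shift, I would absorb it into the $\bm{\nu}$-characteristic using that $\bm{z}$ and $\bm{\nu}$ appear only in the combination $\bm{z} + \bm{\nu}$, turning the characteristic $(-2N\bm{\epsilon}^*,\bm{0})$ into $(-2N\bm{\epsilon}^*,\bm{v}_{\rm eq})$. The bookkeeping to verify here is that in each of the three terms the two theta factors carry arguments summing to the common vector $\bm{u}(z_1) + \bm{u}(z_2) - \bm{u}(z_1') - \bm{u}(z_2')$, so that the $\bm{v}_{\rm eq}$-shift and the subsequent manipulations act uniformly. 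On the dense set where $\epsilon_1^*,\ldots,\epsilon_g^*$ are $\mathbb{Q}$-linearly independent (Corollary~\ref{codense}; the general case follows by continuity in the Weierstra\ss{} points), Kronecker's theorem drives $-2N\bm{\epsilon}^*$ to an arbitrary real $\bm{\mu}$, giving the identity with characteristic $(\bm{\mu},\bm{v}_{\rm eq})$.

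I would then finish exactly as for $\beta = 2$: holomorphicity in $\bm{\mu}$ extends the identity to complex $\bm{\mu}$, and the replacement $\bm{\mu} \mapsto \bm{\mu} + \bm{\tau}(\bm{\nu})$ combined with the quasi-periodicity of $\vartheta$ yields arbitrary complex $(\bm{\mu},\bm{\nu})$, the linear-in-argument phase being common to all three terms precisely because their total theta arguments agree. Finally, Lemma~\ref{lem:holo-periods} and the analytic family $\hat{\bm{C}} \to \widetilde{\bm{\Delta}}_{2g+2}$ extend the identity from real to all complex marked hyperelliptic curves, and meromorphicity in $z_1,z_1',z_2,z_2'$ extends it from a neighbourhood of $\infty_+$ to arbitrary points of $\widetilde{C}$.
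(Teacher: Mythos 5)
Your proposal follows the paper's proof essentially verbatim: the same specialisation of the Borodin--Strahov Pfaffian identity (Theorem~\ref{th:idbeta1}, $m=2$, size $2N$), the same three-term Pfaffian expansion matched to the three kernel products with $p=\mp 2$ and $p=0$, the same use of Lemmas~\ref{Kasymint} and~\ref{ratioZ}, Proposition~\ref{lem:Ef0} and \eqref{thetashift} to collapse the prefactors, and the same Kronecker/holomorphy/analytic-continuation endgame; your treatment of the nonzero $\bm{v}_{\rm eq}$ as a common shift absorbed into the second characteristic is exactly how the paper disposes of it. The plan is correct and takes the same route as the paper.
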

\begin{proof}
 The strategy is similar to the proof for $\beta = 2$ in Theorem~\ref{thm:formula_beta_2}. In particular, we first prove an asymptotic identity for hyperelliptic curves arising from $\beta = 1$ ensembles, use approximations to get arbitrary characteristic $\bm{\mu}$, $\bm{0}$, and then analytic continuation to get the identity for marked hyperelliptic curves with arbitrary complex Weierstra\ss{} points and characteristics $\bm{\mu},\bm{\nu} \in \mathbb{R}^g$.

The starting point is the exact identity of Theorem~\ref{th:idbeta1} in the simplest
 non-trivial case, i.e., $m = 2$ (Pfaffian of size $4$). Taking $x_1,x_1',x_2,x_2' \in \mathbb{C} \setminus A$ pairwise distinct, this gives
 \begin{gather}
	  \mean{\frac{\det(x_1- \Lambda)\det(x_2- \Lambda)}{\det(x_1'- \Lambda)\det(x_2'- \Lambda)}}_{2N}^{V} \nonumber\\
	\qquad {}  = \frac{2N(2N - 1)}{(2N + 2)(2N + 1)} (x_1- x_1')(x_2- x_2')(x_1- x_2')(x_2- x_1') \nonumber\\
\qquad \quad{}\times\frac{Z_{2N - 2}^{V}Z_{2N + 2}^{V}}{(Z_{2N}^{V})^2} \mathcal{K}_{2N - 2}^{\frac{2N}{2N - 2}V}\left(\begin{matrix} 1 & 1 \\ x_1& x_2\end{matrix}\right)\mathcal{K}_{2N + 2}^{\frac{2N}{2N + 2}V}\left(\begin{matrix} -1 & -1 \\ x'_1& x_2'\end{matrix}\right) \nonumber\\
\qquad\qquad{}	 - \frac{(x_1- x_2')(x_2- x_1')}{(x_1- x_2)(x_1'- x_2')} \mathcal{K}_{2N}^V\left(\begin{matrix} 1 & -1 \\ x_1& x_1'\end{matrix}\right)\mathcal{K}_{2N}^{V}\left(\begin{matrix} 1 & -1 \\ x_2& x_2'\end{matrix}\right) \nonumber\\
\qquad\quad{}+ \frac{(x_1- x_1')(x_2- x_2')}{(x_1- x_2)(x_1'- x_2')}\mathcal{K}_{2N}^{V}\left(\begin{matrix} 1 & -1 \\ x_1& x_2'\end{matrix}\right)\mathcal{K}_{2N}^{V}\left(\begin{matrix} 1 & -1 \\ x_2& x_1'\end{matrix}\right). \label{beta1un}
 \end{gather}
 As the computation is similar to $\beta = 2$ we only streamline it. Let $z_1,z_1',z_2,z_2' \in C_+$ be such that $x_i = X(z_i)$ and $x_i' = X(z_i')$. The asymptotic equivalent
 of the left-hand side of \eqref{beta1un} as $N \rightarrow \infty$ is obtained from Theorem~\ref{thm:clt}:
\begin{gather}
	{\rm e}^{(2N\mathcal{L} + \mathcal{H})\bigl[\ln\bigl(\frac{(x_1- \bullet)(x_2- \bullet)}{(x_1'- \bullet)(x_2'- \bullet)}\bigr)\bigr] + \frac{1}{2}\sum_{\xi \in \{x_1,x_1',x_2,x_2'\}} \mathcal{Q}[\ln(\xi - \bullet),\ln(\xi - \bullet)]} \nonumber\\
\qquad{}	 \times \biggl(\prod_{p \in \{z_1,z'_1,z_2,z_2'\}} E(p,\infty_+)\sqrt{\dd \zeta_{\infty_+}(\infty_+)}\biggr)^2 \nonumber \\
\qquad{}	 \times \Biggl(\frac{E(z_1,z_2)E\bigl(z'_1,z_2'\bigr)}{E\bigl(z_1,z_1'\bigr)E\bigl(z_1,z_2'\bigr)E\bigl(z_2,z'_1\bigr)E\bigl(z_2,z_2'\bigr)}\,\frac{\bigl(x_1 - x_1'\bigr)\bigl(x_1 - x_2'\bigr)\bigl(x_2 - x_1'\bigr)\bigl(x_2 - x_2'\bigr)}{(x_1 - x_2)\bigl(x_1' - x_2'\bigr)}\Biggr)^2 \nonumber\\
\qquad{}	 \times \frac{\vartheta_{-2N\bm{\epsilon}^*,\bm{0}}\bigl(\bm{v}_{{\rm eq}} + \bm{u}(z_1) - \bm{u}\bigl(z'_1\bigr) + \bm{u}(z_2) - \bm{u}\bigl(z'_2\bigr)\big|\frac{\bm{\tau}}{2}\bigr)}{\vartheta_{-2N\bm{\epsilon}^*,\bm{0}}\bigl(\bm{v}_{{\rm eq}}\big|\frac{\bm{\tau}}{2}\bigr)}.\label{firstune}
\end{gather}
We have used Lemma~\ref{L44} to evaluate the $\mathcal{Q}[\ln(\xi - \bullet),\ln(\xi' - \bullet)]$ terms that appear with $\xi \neq \xi'$. In the right-hand side of \eqref{beta1un}, we use the asymptotics of the $2$-point kernel from Lemma~\ref{Kasymint} with $K = 2N$ and $p = \mp 2$.

Consider the asymptotics of the first term in the right-hand side of \eqref{beta1un}. It contains a~product of theta functions with characteristic $-(2N \pm 2)\bm{\epsilon}^*$, which we can replace by two theta functions with same characteristic $-2N\bm{\epsilon}^*$ using \eqref{firstune} up to an extra exponential factor. The latter combines with the asymptotics of the ratio of partition functions of shifted size from Lemma~\ref{ratioZ} to reproduce a factor
\[
{\rm e}^{8\mathcal{E}[\mu_{{\rm eq}}] + 4\mathcal{L}[V] + \mathcal{Q}[V,V] + 8{\rm i}\pi \bm{\epsilon}^* \cdot (\bm{\tau}(\bm{\epsilon}^*) + \bm{u}(\infty_-))}
\]
which is equal to $1$ due to Proposition~\ref{lem:Ef0}, and to kill the factor of ${\rm e}^{-4{\rm i}\pi \bm{\epsilon}^*\cdot(\bm{u}(z_1)+\bm{u}(z_2) + \bm{u}(z_1') + \bm{u}(z_2'))}$ coming from the use of \eqref{simplifLQ}. The other factors are the first line of \eqref{firstune} multiplied by
\begin{gather*}
 \Biggl(\frac{E(z_1,z_2)E\bigl(z_1',z_2'\bigr)}{(x_1 - x_2)\bigl(x_1' - x_2'\bigr)} \prod_{p \in \{z_1,z'_1,z_2,z_2'\}} \eta(p)E(p,\infty_+)\sqrt{\dd\zeta_{\infty_+}(\infty_+)}\Biggr)^2 \\
 \times \vartheta_{-2N\bm{\epsilon}^*,\bm{0}}\bigl(\bm{v}_{{\rm eq}} - \bm{\tau}(\epsilon^*)\big|\tfrac{\bm{\tau}}{2}\bigr) \vartheta_{-2N\bm{\epsilon}^*,\bm{0}}\bigl(\bm{v}_{{\rm eq}} + \bm{\tau}(\bm{\epsilon}^*)\big|\tfrac{\bm{\tau}}{2}\bigr) \\
 \times \vartheta_{-2N\bm{\epsilon}^*,\bm{0}}\bigl(\bm{v}_{{\rm eq}} + \bm{u}(z_1) + \bm{u}(z_2) - \bm{u}(\infty_-)\big|\tfrac{\bm{\tau}}{2}\bigr) \vartheta_{-2N\bm{\epsilon}^*,\bm{0}}\bigl(\bm{v}_{{\rm eq}} - \bm{u}\bigl(z_1'\bigr) - \bm{u}\bigl(z_2'\bigr) + \bm{u}(\infty_-)\big|\tfrac{\bm{\tau}}{2}\bigr),
\end{gather*}
where the last line was already explained.

The asymptotics of the second and third terms in the right-hand side of \eqref{beta1un} are more straightforward to get. They both contain the first line and the third line of \eqref{firstune}, and the other asymptotic factors are
\begin{gather*}
 - \frac{\bigl(x_1 - x_2'\bigr)\bigl(x_2 - x_1'\bigr)}{(x_1 - x_2)\bigl(x_1' - x_2'\bigr)}\Biggl(\frac{\bigl(x_1 - x_1'\bigr)\bigl(x_2 - x_2'\bigr)}{E\bigl(z_1,z_1'\bigr)E\bigl(z_2,z_2'\bigr)} \prod_{p \in \{z_1,z_1',z_2,z_2'\}} E(p,\infty_+)\sqrt{\dd \zeta_{\infty_+}(\infty_+)}\Biggr)^2 \\
 \qquad\quad{}\times \vartheta_{-2N\bm{\epsilon}^*,\bm{0}}\bigl(\bm{v}_{{\rm eq}} + \bm{u}(z_1) - \bm{u}\bigl(z_1'\bigr)\big|\tfrac{\bm{\tau}}{2}\bigr) \vartheta_{-2N\bm{\epsilon}^*,\bm{0}}\bigl(\bm{v}_{{\rm eq}} + \bm{u}(z_2) - \bm{u}\bigl(z_2'\bigr)\big|\tfrac{\bm{\tau}}{2}\bigr)
\end{gather*}
for the second term (including its sign), and
\begin{gather*}
 \frac{\bigl(x_1 - x_1'\bigr)\bigl(x_2 - x_2'\bigr)}{(x_1 - x_2)\bigl(x_1' - x_2'\bigr)} \Biggl(\frac{\bigl(x_1 - x_2'\bigr)\bigl(x_2 - x_1'\bigr)}{E\bigl(z_1,z_2'\bigr)E\bigl(z_2,z_1'\bigr)} \prod_{p \in \{z_1,z_1',z_2,z_2'\}} E(p,\infty_+) \sqrt{\dd\zeta_{\infty_+}(\infty_+)} \Biggr)^2 \\
 \qquad{}\times \vartheta_{-2N\bm{\epsilon}^*,\bm{0}}\bigl(\bm{v}_{{\rm eq}} + \bm{u}(z_1) - \bm{u}\bigl(z_2'\bigr)\big|\tfrac{\bm{\tau}}{2}\bigr) \vartheta_{-2N\bm{\epsilon}^*,\bm{0}}\bigl(\bm{v}_{{\rm eq}} + \bm{u}(z_2) - \bm{u}\bigl(z_1'\bigr)\big|\tfrac{\bm{\tau}}{2}\bigr).
\end{gather*}
We then divide all terms by the first and second line of \eqref{firstune} (common
factor to all terms) and~by
\[
\Biggl(\frac{\bigl(x_1 - x_1'\bigr)\bigl(x_1 - x_2'\bigr)\bigl(x_2 - x_1'\bigr)\bigl(x_2 - x_2'\bigr)}{(x_1 - x_2)\bigl(x_1' - x_2'\bigr)}\Biggr)^2
\]
to arrive to \eqref{eq:formula-beta-1} with $\bm{\mu} = -2N \bm{\epsilon}^*$, $\bm{\nu} = 0$ and an extra $\bm{v}_{{\rm eq}}$ added to the arguments of all theta functions. Then, we repeat the end of the proof of Theorem~\ref{thm:formula_beta_2} to get exactly and in full generality the claimed \eqref{thm:formula_beta_2}.
\end{proof}

Theorem~\ref{thm:formula_beta_1} can be reformulated as an identity for theta functions with matrix of periods $\bm{\tau}$ instead of $\frac{\bm{\tau}}{2}$.

\begin{Lemma}
There is an equivalence between \eqref{eq:formula-beta-1} for any $\bm{\mu},\bm{\nu} \in \mathbb{R}^g$, and the formula
\begin{gather}
	0  = \mathfrak{c}_{1} \vartheta_{\frac{\bm{\alpha}}{2}, \bm{0}}\bigl(\bm{u}(z_1) - \bm{u}\bigl(z_1'\bigr) + \bm{u}(z_2) - \bm{u}\bigl(z_2'\bigr)\big|\bm{\tau}\bigr) \nonumber\\ \hphantom{0 =}{}
+ \mathfrak{c}_{2} \vartheta_{\frac{\bm{\alpha}}{2},\bm{0}}\bigl(\bm{u}(z_1) + \bm{u}\bigl(z_1'\bigr) - \bm{u}(z_2) - \bm{u}\bigl(z_2'\bigr)\big|\bm{\tau}\bigr) \nonumber\\ \hphantom{0 =}{}
+\mathfrak{c}_{3} \vartheta_{\frac{\bm{\alpha}}{2},\bm{0}}\bigl(\bm{u}(z_1) - \bm{u}\bigl(z'_1\bigr) - \bm{u}(z_2) + \bm{u}\bigl(z_2'\bigr)\big|\bm{\tau}\bigr) \nonumber\\ \hphantom{0 =}{}
+ \mathfrak{c}_{4} \vartheta_{\frac{\bm{\alpha}}{2},\bm{0}}\bigl(\bm{u}(z_1) + \bm{u}\bigl(z_1'\bigr)+\bm{u}(z_2)+\bm{u}\bigl(z_2'\bigr) - 2\bm{u}(\infty_-)\big|\bm{\tau}\bigr)\label{eq:equiv-beta-1}
\end{gather}
for any $\bm{\alpha} \in \mathbb{Z}^{g}/2\mathbb{Z}^{g}$, where
\begin{gather*}
\mathfrak{c}_1  = -\Biggl(\frac{E(z_1, z_2)E\bigl(z_1', z_2'\bigr)}{E\bigl(z_1, z_1'\bigr)E\bigl(z_1, z_2'\bigr)E\bigl(z_2, z_1'\bigr)E\bigl(z_2, z_2'\bigr)}\Biggr)^{2}, \\
\mathfrak{c}_2  = \frac{(x_1-x_2)\big(x_1'-x_2'\big)}{\bigl(x_1-x_1'\bigr)\bigl(x_2-x_2'\bigr)}\frac{1}{\bigl(E\bigl(z_1,z_2'\bigr)E\big(z_1', z_2\big)\bigr)^2}, \\
\mathfrak{c}_3  = -\frac{(x_1 - x_2)\big(x_1' -x_2'\big)}{\big(x_1 -x_2'\big)\bigl(x_2-x_1'\bigr)}\frac{1}{\bigl(E\bigl(z_1, z_1'\bigr)E\big(z_2', z_2\big)\bigr)^2}, \\
\mathfrak{c}_4  = \frac{\bigl(E(z_1, z_2)E\bigl(z_1', z_2'\bigr)\eta(z_1)\eta\bigl(z_1'\bigr)\eta(z_2)\eta\bigl(z_2'\bigr)\bigr)^{2}}{\bigl(x_1-x_1'\bigr)\bigl(x_1-x_2'\bigr)\bigl(x_2-x_1'\bigr)\bigl(x_2-x_2'\bigr)}.
\end{gather*}
\end{Lemma}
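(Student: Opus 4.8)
The plan is to reduce the equivalence to a single classical isogeny (duplication) relation converting a product of two theta functions with matrix $\frac{\bm{\tau}}{2}$ into theta functions with matrix $\bm{\tau}$. For any $\bm{a},\bm{b}\in\mathbb{C}^g$ and any $\bm{\mu},\bm{\nu}$, I will first establish
\[
\vartheta_{\bm{\mu},\bm{\nu}}\bigl(\bm{a}\big|\tfrac{\bm{\tau}}{2}\bigr)\vartheta_{\bm{\mu},\bm{\nu}}\bigl(\bm{b}\big|\tfrac{\bm{\tau}}{2}\bigr) = \sum_{\bm{\alpha}\in\{0,1\}^g}\vartheta_{\frac{\bm{\alpha}}{2}+\bm{\mu},2\bm{\nu}}(\bm{a}+\bm{b}|\bm{\tau})\,\vartheta_{\frac{\bm{\alpha}}{2},\bm{0}}(\bm{a}-\bm{b}|\bm{\tau}).
\]
This follows from a direct Gaussian reindexing of the double lattice sum defining the left-hand side: substituting $\bm{s}=\bm{n}+\bm{m}$, $\bm{d}=\bm{n}-\bm{m}$ (with the constraint $\bm{s}\equiv\bm{d}\bmod 2$, whose common parity class I record as $\bm{\alpha}\in\{0,1\}^g$) and using the polarisation identity to split the quadratic form, the exponent separates with no cross terms into a piece depending on $\bm{a}+\bm{b}$, carrying characteristic $\frac{\bm{\alpha}}{2}+\bm{\mu}$ and shift $2\bm{\nu}$, and a piece depending on $\bm{a}-\bm{b}$, carrying characteristic $\frac{\bm{\alpha}}{2}$ and zero shift. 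Note that within each product of \eqref{eq:formula-beta-1} both factors carry the same characteristic $(\bm{\mu},\bm{\nu})$, which is exactly what this relation requires.

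The structural observation driving the argument is that the four products of theta functions occurring in \eqref{eq:formula-beta-1} all share the same \emph{sum} argument. Writing the four pairs of arguments as $(\bm{a}_i,\bm{b}_i)$, one checks that $\bm{a}_i+\bm{b}_i=\bm{u}(z_1)-\bm{u}(z_1')+\bm{u}(z_2)-\bm{u}(z_2')$ for every $i$, whereas the four \emph{difference} arguments $\bm{a}_i-\bm{b}_i$ are exactly the four arguments appearing in \eqref{eq:equiv-beta-1}. Applying the isogeny relation to each product and pulling out the common factor $\vartheta_{\frac{\bm{\alpha}}{2}+\bm{\mu},2\bm{\nu}}(\bm{u}(z_1)-\bm{u}(z_1')+\bm{u}(z_2)-\bm{u}(z_2')|\bm{\tau})$, the identity \eqref{eq:formula-beta-1} becomes
\[
\sum_{\bm{\alpha}\in\{0,1\}^g}\vartheta_{\frac{\bm{\alpha}}{2}+\bm{\mu},2\bm{\nu}}\bigl(\bm{u}(z_1)-\bm{u}(z_1')+\bm{u}(z_2)-\bm{u}(z_2')\big|\bm{\tau}\bigr)\,\Theta_{\bm{\alpha}}=0,
\]
where $\Theta_{\bm{\alpha}}$ is, up to an overall sign, the right-hand side of \eqref{eq:equiv-beta-1} for the characteristic $\frac{\bm{\alpha}}{2}$; here the coefficients $\mathfrak{c}_1,\dots,\mathfrak{c}_4$ match the prefactors of \eqref{eq:formula-beta-1} with the displayed signs. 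Crucially, $\Theta_{\bm{\alpha}}$ is independent of $\bm{\mu},\bm{\nu}$, because the characteristic $\frac{\bm{\alpha}}{2},\bm{0}$ came entirely from the \emph{difference} factor.

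The equivalence then follows. If \eqref{eq:equiv-beta-1} holds for every $\bm{\alpha}$ then each $\Theta_{\bm{\alpha}}$ vanishes and \eqref{eq:formula-beta-1} holds for all $\bm{\mu},\bm{\nu}$. Conversely, I will fix $\bm{\mu}=\bm{0}$ and vary $\bm{\nu}$: writing $\bm{w}$ for the common sum argument, the half-period shift $\bm{\nu}\mapsto\bm{\nu}+\tfrac{\bm{m}}{2}$ with $\bm{m}\in\{0,1\}^g$ multiplies $\vartheta_{\frac{\bm{\alpha}}{2},2\bm{\nu}}(\bm{w}|\bm{\tau})$ by $(-1)^{\bm{\alpha}\cdot\bm{m}}$. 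Since the character matrix $\bigl((-1)^{\bm{\alpha}\cdot\bm{m}}\bigr)_{\bm{\alpha},\bm{m}}$ is invertible, the vanishing of the family indexed by $\bm{m}$ forces $\Theta_{\bm{\alpha}}\,\vartheta_{\frac{\bm{\alpha}}{2},2\bm{\nu}}(\bm{w}|\bm{\tau})\equiv0$ for each $\bm{\alpha}$, and as a theta function is never identically zero we conclude $\Theta_{\bm{\alpha}}=0$, i.e.\ \eqref{eq:equiv-beta-1}. I expect the main obstacle to be the careful bookkeeping of characteristics and phases in the isogeny relation, and in particular confirming that the \emph{difference} factor carries the characteristic $\frac{\bm{\alpha}}{2},\bm{0}$ with no residual $\bm{\mu},\bm{\nu}$ dependence; this is precisely what turns $\Theta_{\bm{\alpha}}$ into a fixed quantity to which the linear-independence argument applies.
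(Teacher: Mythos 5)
Your proposal is correct and follows essentially the same route as the paper: the duplication formula you derive by reindexing the double lattice sum is exactly Riemann's binary addition theorem specialised to equal characteristics (the paper's equation~\eqref{binaryR} with $\bm{\mu}'=\bm{\mu}$, $\bm{\nu}'=\bm{\nu}$), and the factorisation through the common sum argument followed by linear independence in $\bm{\nu}$ is the paper's argument as well. Your half-period/character-matrix justification of the linear independence is just a more explicit version of the step the paper asserts directly.
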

\begin{proof}
The trick is to use Riemann's binary addition theorem, see, e.g., \cite[equation~(6.6)]{Tata1}. It states that for any $\bm{\mu},\bm{\nu},\bm{\mu}',\bm{\nu}' \in \mathbb{R}^g$ and $\bm{z}_1,\bm{z}_2 \in \mathbb{C}^g$
\begin{gather}
	 \vartheta_{\bm{\mu}, \bm{\nu}}\bigl(\bm{z}_{1} + \bm{z}_{2}\big| \tfrac{\bm{\tau}}{2}\bigr)\vartheta_{\bm{\mu}', \bm{\nu}'}\bigl(\bm{z}_{1} - \bm{z}_{2}\big| \tfrac{\bm{\tau}}{2}\bigr) \nonumber\\
\qquad{}	 = \sum_{\bm{\alpha} \in \Z^{g}/2\Z^{g}}\vartheta_{\frac{\bm{\mu}+\bm{\mu'}+\bm{\alpha}}{2} , \bm{\nu} + \bm{\nu}'} (2\bm{z}_{1} |\bm{\tau} )\vartheta_{\frac{\bm{\mu}-\bm{\mu}'+\bm{\alpha}}{2} , \bm{\nu} - \bm{\nu}'} (2\bm{z}_{2} |\bm{\tau} ).\label{binaryR}
\end{gather}
We apply the transformation \eqref{binaryR} with $\bm{\mu}' = \bm{\mu}$ and $\bm{\nu} = \bm{\nu}'$ to each term in \eqref{eq:formula-beta-1}, writing it in the equivalent form
\begin{gather*}
\sum_{\bm{\alpha} \in \mathbb{Z}^{g}/2\mathbb{Z}^{g}} \vartheta_{\bm{\mu} + \frac{\bm{\alpha}}{2},\bm{\nu}}\bigl(\bm{u}(z_1) - \bm{u}\bigl(z_1'\bigr) + \bm{u}(z_2) - \bm{u}\bigl(z_2'\bigr)\big|\bm{\tau}\bigr) \Biggl(\sum_{i = 1}^{4} \mathfrak{c}_i \vartheta_{\frac{\bm{\alpha}}{2},\bm{0}}\bigl(\bm{w}_i\bigl(z_1,z_1',z_2,z_2'\bigr)\big|\bm{\tau}\bigr)\Biggr),
\end{gather*}
where the $\bm{w}_{i}$ are exactly the four arguments of the theta functions appearing in \eqref{eq:equiv-beta-1}. In this form, the converse implication is clear. The direct implication follows from the observation that~$\bm{\nu}$ is arbitrary, and the family of functions $T_{\bm{\alpha}}(\bm{\nu}) = \vartheta_{\frac{\bm{\alpha}}{2},\bm{0}}\bigl(\bm{u}(z_1) - \bm{u}\bigl(z_1'\bigr) + \bm{u}(z_2) - \bm{u}\bigl(z_2'\bigr)\big|\bm{\tau}\bigr)$ indexed by $\bm{\alpha} \in \mathbb{Z}^{g}/2\mathbb{Z}^{g}$ are linearly independent, forcing the sum inside the bracket to vanish for each individual $\bm{\alpha}$.
\end{proof}

This is an identity involving only Riemann theta functions, for which we can offer a direct geometric proof, in a slightly more general form.

\begin{Theorem}
Equation~\eqref{eq:equiv-beta-1} holds for any marked hyperelliptic curve for any characteristic $\bm{\mu},\bm{\nu} \in \mathbb{R}^g$ $($instead of just half-integer characteristics$)$.
\end{Theorem}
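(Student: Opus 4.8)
The plan is to adapt the direct geometric proof of Theorem~\ref{thm:formula_beta_2}: freeze $z_1',z_2,z_2'$ and the characteristic, regard the four-term sum as a meromorphic object in the single variable $z:=z_1\in\widetilde C$, and show it vanishes by a Riemann-vanishing-theorem argument, then repeat in each of the other variables. Since all four arguments are of the form $\bm u(z)+\bm a_i$ with the coefficient of $\bm u(z)$ equal to $+1$, the factors $z\mapsto\vartheta_{\bm\mu,\bm\nu}(\bm u(z)+\bm a_i\mid\bm\tau)$ share the same multiplier system under the homology cycles. One then checks that the prime-form automorphy collected in each $\mathfrak c_i$ (which is a weight $+1$ form in $z$) exactly compensates the $\bm a_i$-dependent part of the theta automorphy, so that all four products $\mathfrak c_i\vartheta_{\bm\mu,\bm\nu}(\bm w_i)$ are sections of one and the same line bundle $L$ on $\widehat C$; consequently the ratio $\Psi(z):=-\bigl(\mathfrak c_2\vartheta_{\bm\mu,\bm\nu}(\bm w_2)+\mathfrak c_3\vartheta_{\bm\mu,\bm\nu}(\bm w_3)+\mathfrak c_4\vartheta_{\bm\mu,\bm\nu}(\bm w_4)\bigr)/\bigl(\mathfrak c_1\vartheta_{\bm\mu,\bm\nu}(\bm w_1)\bigr)$ is a genuine meromorphic function on $\widehat C$. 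The characteristic enters only through a shift of the theta arguments, which moves the zero divisor of the denominator but not its degree, and through a common flat multiplier; the pole-cancellation mechanism below is insensitive to it, so the argument runs verbatim for arbitrary $\bm\mu,\bm\nu\in\mathbb C^g$, of which the real ones are a special case.

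The core of the argument is to show that the only poles of $\Psi$ are the $g$ zeros of the denominator theta factor. By Theorem~\ref{thm:Riemann-zero}, applied to the shifted argument $\bm a_1+\bm\tau\bm\mu+\bm\nu$ (the flat prefactor being nowhere vanishing), this factor has a zero divisor $\mathcal D$ of degree $g$, in general position for generic auxiliary data. All other candidate poles must be shown to cancel between the three numerator terms: the double zero of $\mathfrak c_1$ at $z=z_2$, the prime-form poles and the $X$-difference poles carried by $\mathfrak c_2,\mathfrak c_3,\mathfrak c_4$ at the points $z_1',z_2'$ and their involution images $\jmath(z_1'),\jmath(z_2'),\jmath(z_2)$, and the apparent poles at $\infty_\pm$. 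The key tool is the hyperelliptic relation $\bm u(\jmath(z))=\bm u(\infty_-)-\bm u(z)$ already used in the proof of Theorem~\ref{thm:formula_beta_2}, which matches the theta arguments across the involution; it is precisely this relation that aligns the $\mathfrak c_4$-term (carrying the $\bm u(\infty_-)$ shift and the factors $\eta$) with the other three. Once the pole divisor of $\Psi$ is shown to be $\le\mathcal D$, the classical consequence of Riemann's theorem that a meromorphic function with pole divisor at most a general divisor of degree $g$ is constant (see \cite[Theorem~VI.3.3]{Farkas-Kra}) forces $\Psi$ to be constant in $z$.

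Running the same reasoning in each of the remaining variables shows that $\Psi$ is a constant independent of $z_1,z_1',z_2,z_2'$. Its value is then pinned down by letting the points collide one after another, using the leading behaviour of the prime form and \eqref{limEtilde}, so that a single term survives and yields $\Psi=1$, equivalently the asserted vanishing of the four-term sum; the factor $\eta$ in $\mathfrak c_4$ is what removes the spurious contributions of $\infty_\pm$ in these limits, exactly as the linear terms were cancelled by $\eta$ in the $\beta=2$ proof.

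The main obstacle is the cancellation of the apparent poles, which is appreciably more delicate than in the $\beta=2$ case because here the prime forms enter squared. The candidate poles at $z=z_2$ and at the involution images $\jmath(z_i')$ are therefore double, so it does not suffice to cancel residues: one must verify that the numerator vanishes to second order, i.e.\ match both the polar and the subpolar coefficients. Organising this efficiently—for instance by rewriting each $\vartheta_{\bm\mu,\bm\nu}(\bm w_i)$ through the prime form via \eqref{Eprimtheta}, so that the double poles become manifest and cancel termwise—is where the bulk of the computation lies, and the involution relation together with the explicit diagonal behaviour of $E$ is what makes these second-order cancellations go through.
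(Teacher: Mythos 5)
Your overall strategy is exactly the paper's: fix $z_1',z_2,z_2'$ and the characteristic, form the ratio $\Psi(z_1)$ of the three numerator terms to $\mathfrak{c}_1\vartheta(\bm w_1)$, check that the four products $\mathfrak{c}_i\vartheta(\bm w_i)$ have matching automorphy so that $\Psi$ descends to $\widehat{C}$, show that all poles other than the $g$ zeros of the denominator theta factor cancel, invoke Riemann's theorem (Theorem~\ref{thm:Riemann-zero} and \cite[Theorem~VI.3.3]{Farkas-Kra}) to conclude $\Psi$ is constant, repeat in the other variables, and evaluate the constant by colliding points. The reduction of general $\bm\mu,\bm\nu$ to $\bm\mu=\bm 0$ with $\bm\nu\in\mathbb{C}^g$ arbitrary is also how the paper proceeds.

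However, your assessment of the pole structure — which you single out as ``where the bulk of the computation lies'' — is incorrect. The apparent poles of the ratios $\mathfrak{c}_i/\mathfrak{c}_1$ are \emph{simple}, not double. The point is that on a hyperelliptic curve the rational factors $X(z_1)-X(p)$ have simple zeros at \emph{both} preimages $p$ and $\jmath(p)$, and these combine with the squared prime forms: for instance, after cancellation one finds
\[
\frac{\mathfrak{c}_2}{\mathfrak{c}_1} = -\frac{(x_1-x_2)\bigl(x_1'-x_2'\bigr)}{\bigl(x_1-x_1'\bigr)\bigl(x_2-x_2'\bigr)}\,
\frac{\bigl(E\bigl(z_1,z_1'\bigr)E\bigl(z_2,z_2'\bigr)\bigr)^2}{\bigl(E(z_1,z_2)E\bigl(z_1',z_2'\bigr)\bigr)^2},
\]
and as a function of $z_1$ the combination $(x_1-x_2)/E(z_1,z_2)^2$ has a simple pole at $z_1=z_2$ (simple zero against double zero), while $E\bigl(z_1,z_1'\bigr)^2/\bigl(x_1-x_1'\bigr)$ has a simple zero at $z_1=z_1'$ and a simple pole only at $z_1=\jmath\bigl(z_1'\bigr)$. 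The same mechanism applies to $\mathfrak{c}_3/\mathfrak{c}_1$ and $\mathfrak{c}_4/\mathfrak{c}_1$. Consequently only first-order residue cancellation is required, exactly as in the $\beta=2$ case; there is no second-order matching to perform. This matters beyond bookkeeping: had the poles genuinely been double, pole-freeness of $\Psi$ would impose two conditions per point and the counting behind the argument would be in jeopardy. Your write-up should therefore replace the ``double pole / subpolar coefficient'' discussion with the correct statement that the squared prime forms and the $X$-differences conspire to leave only simple poles at $z_1=z_2$, $z_1=\jmath\bigl(z_1'\bigr)$, $z_1=\jmath\bigl(z_2'\bigr)$ (and a separate check at $\infty_\pm$ involving $\eta$), whose residues cancel between the three numerator terms via $\bm u(\jmath(p))=\bm u(\infty_-)-\bm u(p)$. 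With that correction the proof coincides with the one in the paper.
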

\begin{proof}[Direct geometric proof of Theorem~\ref{thm:formula_beta_1}]
 The strategy is similar of the direct proof of Theorem~\ref{thm:formula_beta_2}. We start without lack of generality to set $\bm{\mu} = 0$ and for $\bm{\nu} \in \mathbb{C}^g$ arbitrary, consider
 \begin{align*}
	 \Psi(z_1)
	 &= \frac{\mathfrak{c}_{2}}{\mathfrak{c}_{1}}\frac{\theta\bigl(\bm{\nu} + \bm{u}(z_1) + \bm{u}\bigl(z_1'\bigr) - \bm{u}(z_2) - \bm{u}\bigl(z_2'\bigr)\big|\bm{\tau}\bigr)
		}{\theta\bigl(\bm{\nu} + \bm{u}(z_1) - \bm{u}\bigl(z'_1\bigr) + \bm{u}(z_2) - \bm{u}\bigl(z_2'\bigr)\big|\bm{\tau}\bigr)}\\
	 & \quad +\frac{\mathfrak{c}_{3}}{\mathfrak{c}_{1}}\frac{\theta\bigl(\bm{\nu} + \bm{u}(z_1) - \bm{u}\bigl(z_1'\bigr) - \bm{u}(z_2) + \bm{u}\bigl(z_2'\bigr)\big|\bm{\tau}\bigr)
		}{\theta\bigl(\bm{\nu} + \bm{u}(z_1) - \bm{u}\bigl(z_1'\bigr) + \bm{u}(z_2) - \bm{u}\bigl(z_2'\bigr)\big|\bm{\tau}\bigr)}\\
	 & \quad +\frac{\mathfrak{c}_{4}}{\mathfrak{c}_{1}}\frac{\theta\bigl(\bm{\nu} + \bm{u}(z_1) + \bm{u}\bigl(z_1'\bigr)+\bm{u}(z_2)+\bm{u}\bigl(z_2'\bigr) - 2\bm{u}(\infty_-)\big|\bm{\tau}\bigr)}{\theta\bigl(\bm{\nu} + \bm{u}(z_1) - \bm{u}\bigl(z_1'\bigr) + \bm{u}(z_2) - \bm{u}\bigl(z_2'\bigr)\big|\bm{\tau}\bigr)}.
 \end{align*}
This is a meromorphic function of $z_1 \in \widehat{C}$. We first analyse the poles that may come from the ratios of coefficients. The ratio
 \smash{$\frac{\mathfrak{c}_{2}}{\mathfrak{c}_{1}}$} has simple poles at
 $z_1 = z_2$ and $z_1 = \jmath\bigl(z'_1\bigr)$, where $\jmath$ is the hyperelliptic involution. The
 ratio \smash{$\frac{\mathfrak{c}_{3}}{\mathfrak{c}_{1}}$} has simple poles only at
 $z_1 = \jmath\bigl(z_2'\bigr)$ and $z_1 = z_2$. The ratio
 \smash{$\frac{\mathfrak{c}_{4}}{\mathfrak{c}_{1}}$} has simple poles only at
 $z_1 = \jmath\bigl(z_2'\bigr)$ and $z_1 = \jmath\bigl(z_2'\bigr)$. However, careful computation of the
 residues show that $\Psi$ has none of these poles. Thus, the only poles of $\Psi$ are the zeros of $z_1 \mapsto \theta\bigl(\bm{\nu} + \bm{u}(z_1) - \bm{u}\bigl(z_1'\bigr) + \bm{u}(z_2) - \bm{u}\bigl(z_2'\bigr)\big|\bm{\tau}\bigr)$. As in the direct proof of Theorem~\ref{thm:formula_beta_2}, Riemann's theorem implies that if we choose the points $z_1'$, $z_2$, $z_2'$ and the vector $\bm{\nu}$ generically, there are no nonconstant meromorphic function whose poles are the zeroes of this theta function. We deduce that $\Psi(z_1)$ does not depend on $z_1$. A similar argument shows that $\Psi(z_1)$ is independent of all points
 points $z_1$, $z_1'$, $z_2$, $z_2'$. Sending $z_1'$, $z_2$, $z_2'$ successively to~$z_1$, we find that the constant is~$1$.
\end{proof}

\subsection[The beta = 4 formula]{The $\boldsymbol{\beta = 4}$ formula}

The case $\beta = 4$ has the same structure as the $\beta = 1$ case of Theorem~\ref{thm:formula_beta_1}, except that the argument of the theta functions are doubled while we use the matrix $2\bm{\tau}$. This similarity is already manifest in the exact formulae of Theorems~\ref{th:idbeta1} and \ref{th:idbeta4}.

\begin{Theorem}\label{thm:formula_beta_4}
 Consider a marked hyperelliptic curve $\widehat{C}$, and let
 $z_1,z_1',z_2,z_2' \in \widetilde{C}$ and $\bm{\mu}, \bm{\nu} \in \mathbb{C}^g$. Writing $x_i = X(z_i)$ and $x_i' = X(z_i')$, we have
 \begin{gather*}
  \Biggl(\frac{E(z_1, z_2)E\bigl(z_1', z_2'\bigr)}{E\bigl(z_1, z_1'\bigr)E\bigl(z_1, z_2'\bigr)E\bigl(z_2, z_1'\bigr)E\bigl(z_2, z_2'\bigr)}\Biggr)^{2}\\
  \qquad\quad{}\times\vartheta_{\bm{\mu}, \bm{\nu}}\bigl(2\bigl(\bm{u}(z_1) - \bm{u}\bigl(z'_1\bigr) + \bm{u}(z_2) - \bm{u}\bigl(z_2'\bigr)\bigr)\big|2\bm{\tau}\bigr)\vartheta_{\bm{\mu},\bm{\nu}}\bigl(\bm{0}\big|2\bm{\tau}\bigr) \\
  \qquad\quad{}-\frac{(x_1-x_2)\bigl(x_1'-x_2'\bigr)}{\bigl(x_1-x_1'\bigr)\bigl(x_2-x_2'\bigr)}\frac{\vartheta_{\bm{\mu},\bm{\nu}}\bigl( 2\bigl(\bm{u}(z_1) - \bm{u}\bigl(z_2'\bigr)\bigr)\big|2\bm{\tau}\bigr)}{E\bigl(z_1,z_2'\bigr)^{2}}\frac{\vartheta_{\bm{\mu},\bm{\nu}}\bigl(2\bigl(\bm{u}(z_2) - \bm{u}\bigl(z_1'\bigr)\bigr)\big|2\tau\bigr)}{E\bigl(z_1',z_2\bigr)^{2}}\\
 \qquad\quad{}+\frac{(x_1-x_2)\bigl(x_1'-x_2'\bigr)}{\bigl(x_1-x_2'\bigr)\bigl(x_2 - x_1'\bigr)}\frac{\vartheta_{\bm{\mu},\bm{\nu}}\bigl(2\bigl(\bm{u}(z_1) - \bm{u}\bigl(z'_1\bigr)\bigr)\big|2\bm{\tau}\bigr)}{E\bigl(z_1, z_1'\bigr)^{2}}\frac{\vartheta_{\bm{\mu},\bm{\nu}}(2\bigl(\bm{u}(z_2) - \bm{u}\bigl(z_2'\bigr)\bigr)\big|2\bm{\tau}\bigr)}{E\bigl(z_2',z_2\bigr)^{2}}\\
 \qquad{}=\frac{\bigl(E(z_1, z_2)E\bigl(z_1', z_2'\bigr) \eta(z_1)\eta(z_2)\eta\bigl(z_1'\bigr)\eta\bigl(z_2'\bigr)\bigr)^2}{\bigl(x_1-x_1'\bigr)\bigl(x_1-x_2'\bigr)\bigl(x_2-x_1'\bigr)\bigl(x_2-x_2'\bigr)} \\
  \qquad\quad{}\times \vartheta_{\bm{\mu},\bm{\nu}}\bigl(2(\bm{u}(z_1) + \bm{u}(z_2) - \bm{u}(\infty_-))\big|2\bm{\tau}\bigr) \vartheta_{\bm{\mu},\bm{\nu}}\bigl(2(-\bm{u}\bigl(z'_1\bigr) - \bm{u}\bigl(z_2'\bigr) + \bm{u}(\infty_-))\big|2\bm{\tau}\bigr),
 \end{gather*}
\end{Theorem}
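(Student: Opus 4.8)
The plan is to run the argument of Theorem~\ref{thm:formula_beta_1} essentially verbatim, now starting from the exact Pfaffian identity of Theorem~\ref{th:idbeta4} in its first nontrivial case $m=2$ (a size-$4$ Pfaffian). Fix pairwise distinct $x_i=X(z_i)$, $x_i'=X(z_i')$ in $\mathbb{C}\setminus A$ with $z_i,z_i'\in C_+$, expand $\pf(\mathcal{M}^{(4)})$ into its three pairings, and rewrite every entry through the two-variable kernels $\mathcal{K}$ of \eqref{Kmcc}. Two features of Theorem~\ref{th:idbeta4} dictate the differences with $\beta=1$: the characteristic polynomials enter \emph{squared}, so the kernels carry $c,\tilde c\in\{+2,-2\}$ instead of $\pm1$, and the shifted sizes appearing in $\mathcal{M}^{(4)}$ are $N-1,N,N+1$, so the size shift is $p=\mp1$ (as for $\beta=2$) rather than $\mp2$. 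Because $\beta=4$ gives $\tfrac{\beta}{2}\bm{\tau}=2\bm{\tau}$, and the Abel-map contributions of Lemma~\ref{Kasymint} enter as $c\,\bm{u}(z)=\pm2\,\bm{u}(z)$, these two facts are precisely what generate the matrix $2\bm{\tau}$ and the doubled arguments of the statement.

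Next I would insert the asymptotics. Each kernel factor is estimated by Lemma~\ref{Kasymint} (with $K=N$, $p=\mp1$, $c=\pm2$) and the partition-function ratio by Lemma~\ref{ratioZ} (with $\beta=4$, $p=1$). Exactly as in the $\beta=1$ computation, the first pairing produces the exponential prefactor
\[
\exp\Bigl\{2\,\mathcal{E}[\mu_{{\rm eq}}]+4\,\mathcal{L}[V]+4\,\mathcal{Q}[V,V]+8\,{\rm i}\pi\,\bm{\epsilon}^*\cdot(\bm{\tau}(\bm{\epsilon}^*)+\bm{u}(\infty_-))\Bigr\},
\]
which is identically $1$ by Proposition~\ref{lem:Ef0} at $\beta=4$ (there $-\mathcal{E}[\mu_{{\rm eq}}]=2\mathcal{L}[V]+2\mathcal{Q}[V,V]+4{\rm i}\pi\bm{\epsilon}^*\cdot(\bm{\tau}(\bm{\epsilon}^*)+\bm{u}(\infty_-))$); the residual $\bm{\epsilon}^*$-linear phases are handled by \eqref{simplifLQ}, and \eqref{thetashift} restores the common characteristic $-N\bm{\epsilon}^*$ in every theta factor. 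The $\mathcal{L},\mathcal{H}$, the diagonal $\mathcal{Q}[\ln(\xi-\bullet),\ln(\xi-\bullet)]$, the $E(p,\infty_+)\sqrt{\dd\zeta_{\infty_+}(\infty_+)}$ products, and the denominator theta are then common to the three terms and cancel. A point worth checking is that, since $\tfrac{2}{\beta}=\tfrac12$ while $c\tilde c=\pm4$, the off-diagonal $\mathcal{Q}[\ln,\ln]$ of Lemma~\ref{L44} now contributes \emph{squared} prime-form quotients, matching the exponent $2$ on the $E$-factors in the statement. What survives is the claimed identity with characteristic $\bm{\mu}=-N\bm{\epsilon}^*$, $\bm{\nu}=\bm{0}$, up to $o(1)$.

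Finally I would pass to full generality as at the end of the proofs of Theorems~\ref{thm:formula_beta_2} and \ref{thm:formula_beta_1}: when the $\epsilon_h^*$ are $\mathbb{Q}$-linearly independent, Kronecker's theorem provides a subsequence along which $-N\bm{\epsilon}^*$ approaches any prescribed real $\bm{\mu}$ modulo $\mathbb{Z}^g$, removing the $o(1)$; Corollary~\ref{codense} makes such curves dense, and holomorphy of the periods and Abel map over $\widetilde{\bm{\Delta}}_{2g+2}$ (Lemma~\ref{lem:holo-periods}) lets one analytically continue in the Weierstra\ss{} points and, via $\bm{\mu}\mapsto\bm{\mu}+\bm{\tau}^{-1}(\bm{\nu})$, to all $\bm{\mu},\bm{\nu}\in\mathbb{C}^g$ and all $z_i,z_i'\in\widetilde{C}$. \emph{The main obstacle is bookkeeping rather than geometry}: carrying the signs, the squared $E$-factors and the shifted characteristics through the $4\times4$ Pfaffian expansion, and verifying that Proposition~\ref{lem:Ef0} at $\beta=4$ cancels the prefactor on the nose. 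No new conceptual ingredient beyond the $\beta=1$ proof is required.

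\textbf{Alternative (modular).} As noted in the introduction, the $\beta=1$ and $\beta=4$ identities are equivalent through the duplication/modular properties of theta functions: both pass, via Riemann's binary addition theorem \eqref{binaryR}, to statements about Riemann theta functions for the single matrix $\bm{\tau}$. One can therefore hope to deduce Theorem~\ref{thm:formula_beta_4} from the already-proven $\bm{\tau}$-identity \eqref{eq:equiv-beta-1} purely algebraically, bypassing the random-matrix asymptotics, at the cost of carefully matching characteristics under \eqref{binaryR} applied with $\bm{\tau}\mapsto2\bm{\tau}$.
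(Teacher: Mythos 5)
Your primary approach is exactly the paper's proof: the authors start from the $m=2$ Pfaffian identity of Theorem~\ref{th:idbeta4}, apply Lemmas~\ref{Kasymint} and~\ref{ratioZ} with $K=N$, $p=\pm1$ and $c,\tilde c\in\{-2,2\}$ (rather than $K=2N$, $p=\pm2$, $c,\tilde c\in\{-1,1\}$ as for $\beta=1$), and then repeat the $\beta=1$ bookkeeping and the approximation/analytic-continuation steps verbatim; your identification of the prefactor $\exp\{2\mathcal{E}+4\mathcal{L}+4\mathcal{Q}+8{\rm i}\pi\bm{\epsilon}^*\cdot(\bm{\tau}(\bm{\epsilon}^*)+\bm{u}(\infty_-))\}$ as the one killed by Proposition~\ref{lem:Ef0} at $\beta=4$ is correct. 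Your alternative route via the $\beta=1$ identity also appears in the paper, though there it is phrased as a separate equivalence lemma using the modular transformation $\bm{\tau}\mapsto-\bm{\tau}^{-1}$ rather than as the proof itself.
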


\begin{proof}\allowdisplaybreaks
The starting point is the simplest non-trivial identity of Theorem~\ref{th:idbeta4}, namely $m = 2$ (Pfaffian of size $4$), which gives
\begin{gather*}
 \Biggl\langle\frac{\det(x_1 - \Lambda)^2\det(x_2 - \Lambda)}{\det\bigl(x_1' - \Lambda\bigr)^2\det\bigl(x_2' - \Lambda\bigr)^2}\Biggr\rangle_{N}^{V} \\
 \qquad{}= \frac{N}{N + 1}\bigl(x_1 - x_1'\bigr)\bigl(x_2 - x_2'\bigr)\bigl(x_1 - x_2'\bigr)\bigl(x_2 - x_1'\bigr) \\
 \qquad\quad{}\times  \frac{Z_{N + 1}^{V}Z_{N - 1}^{V}}{(Z_N^V)^2}\,\mathcal{K}_{N - 1}^{\frac{N}{N - 1}V}\left(\begin{matrix} 2 & 2 \\ x_1 & x_2 \end{matrix}\right)\mathcal{K}_{N + 1}^{\frac{N}{N + 1}V}\left(\begin{matrix} -2 & -2 \\ x_1' & x_2' \end{matrix}\right) \\
\qquad\quad{} - \frac{\bigl(x_1 - x_2'\bigr)\bigl(x_2 - x_1'\bigr)}{(x_1 - x_2)\bigl(x_1' - x_2'\bigr)} \mathcal{K}_{N}^{V}\left(\begin{matrix} 2 & -2 \\ x_1 & x_1' \end{matrix}\right)\mathcal{K}_{N}^{V}\left(\begin{matrix} 2 & -2 \\ x_2 & x_2' \end{matrix}\right) \\
\qquad\quad{}
 + \frac{\bigl(x_1 - x_1'\bigr)\bigl(x_2 - x_2'\bigr)}{(x_1 - x_2)\bigl(x_1' - x_2'\bigr)} \mathcal{K}_{N}^{V}\left(\begin{matrix} 2 & -2 \\ x_1 & x_2' \end{matrix}\right)\mathcal{K}_N^{V}\left(\begin{matrix} 2 & -2 \\ x_2 & x_1' \end{matrix}\right).
\end{gather*}
We omit the details of the asymptotic analysis based on Lemmas~\ref{Kasymint} and \ref{ratioZ}: it is very similar to the $\beta = 1$ case. Instead of using them for $K = 2N$, $p = \pm 2$ and $c,\tilde{c} \in \{-1,1\}$, now we rather use them with $K = N$ and $p = \pm 1$ and $c,\tilde{c} \in \{- 2,2\}$.
\end{proof}

\begin{Lemma}
Theorem~$\ref{thm:formula_beta_4}$ is equivalent to Theorem~$\ref{thm:formula_beta_1}$.
\end{Lemma}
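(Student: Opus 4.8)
The plan is to show that Theorem~\ref{thm:formula_beta_4} reduces to exactly the same period-$\bm{\tau}$ identity \eqref{eq:equiv-beta-1} that the preceding lemma attached to Theorem~\ref{thm:formula_beta_1}; once both theorems are seen to be equivalent to this common intermediate statement, their mutual equivalence is immediate. The only tool needed is Riemann's binary addition theorem \eqref{binaryR}, applied one notch higher: in the $\beta=1$ lemma it relates period $\tfrac{\bm{\tau}}{2}$ to period $\bm{\tau}$, whereas here I would use it to relate period $\bm{\tau}$ to period $2\bm{\tau}$ (it is valid for any period matrix with positive-definite imaginary part). Because the geometric prefactors $\mathfrak{c}_1,\dots,\mathfrak{c}_4$ built from $E$, $\eta$ and the $X$-differences are literally the same as in the $\beta=1$ case, the entire content of the equivalence is a theta-function identity, and the factor $2$ in $2\bm{\tau}$ is accounted for precisely by this addition formula, which encodes the degree-$2$ isogeny between the abelian varieties of period matrices $\bm{\tau}$ and $2\bm{\tau}$.

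The structural fact that makes the reduction work is the following. Writing each of the four terms of Theorem~\ref{thm:formula_beta_4} as $\vartheta_{\bm{\mu},\bm{\nu}}(2\bm{A}_i|2\bm{\tau})\,\vartheta_{\bm{\mu},\bm{\nu}}(2\bm{B}_i|2\bm{\tau})$, one checks that the half-sum $\bm{A}_i+\bm{B}_i$ equals the fixed vector $\bm{u}(z_1)-\bm{u}(z_1')+\bm{u}(z_2)-\bm{u}(z_2')$ for every $i$, while the half-differences $\bm{A}_i-\bm{B}_i$ run precisely through the four arguments $\bm{w}_i$ appearing in \eqref{eq:equiv-beta-1} (including the one involving $2\bm{u}(\infty_-)$). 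This is the exact analogue of the relations $2\bm{z}_1=\mathrm{const}$, $2\bm{z}_2=\bm{w}_i$ that drive the proof of the $\beta=1$ lemma, and it is what allows a common theta factor to be pulled out of all four terms.

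Concretely, I would apply \eqref{binaryR} at period $\bm{\tau}$ with $\bm{z}_1=\bm{A}_i$, $\bm{z}_2=\bm{B}_i$, which expresses each product $\vartheta(\bm{u}(z_1)-\bm{u}(z_1')+\bm{u}(z_2)-\bm{u}(z_2')|\bm{\tau})\,\vartheta(\bm{w}_i|\bm{\tau})$ as a sum over $\bm{\alpha}\in\mathbb{Z}^g/2\mathbb{Z}^g$ of products of period-$2\bm{\tau}$ thetas of the type occurring in Theorem~\ref{thm:formula_beta_4}. Inverting this relation by the finite Fourier transform over $\mathbb{Z}^g/2\mathbb{Z}^g$ (the second Riemann addition formula) rewrites each term of Theorem~\ref{thm:formula_beta_4} as $\tfrac{1}{2^g}\sum_{\bm{\alpha}}(\mathrm{phase}_{\bm{\alpha}})\,\vartheta_{\bm{\mu}_{\bm{\alpha}},\,2\bm{\nu}}(\bm{u}(z_1)-\bm{u}(z_1')+\bm{u}(z_2)-\bm{u}(z_2')|\bm{\tau})\,\vartheta_{\frac{\bm{\alpha}}{2},\bm{0}}(\bm{w}_i|\bm{\tau})$, where $\bm{\mu}_{\bm{\alpha}}$ depends on $\bm{\alpha}$ but not on $i$. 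Substituting this into the identity of Theorem~\ref{thm:formula_beta_4} and interchanging the finite sum over $\bm{\alpha}$ with the sum over the four terms, the $i$-independent factor pulls out of each $\bm{\alpha}$-block and one is left with $\sum_i \mathfrak{c}_i\,\vartheta_{\frac{\bm{\alpha}}{2},\bm{0}}(\bm{w}_i|\bm{\tau})$ inside. Since the second characteristic $\bm{\nu}$ is free and the thetas $\vartheta_{\bm{\mu}_{\bm{\alpha}},2\bm{\nu}}$ attached to distinct $\bm{\alpha}$ are linearly independent as functions of $\bm{\nu}$, Theorem~\ref{thm:formula_beta_4} holds if and only if each bracket vanishes, i.e.\ if and only if \eqref{eq:equiv-beta-1} holds for every $\bm{\alpha}$. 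As the preceding lemma establishes the same characterisation for Theorem~\ref{thm:formula_beta_1}, the two theorems are equivalent.

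The step I expect to be the main obstacle is purely the bookkeeping of characteristics and phases in the inverse addition formula: one must verify that the phases $\mathrm{phase}_{\bm{\alpha}}$ are common to all four terms (so that they cancel out of the equation), that the $\beta=4$ coefficients $\mathfrak{c}_i$ match their $\beta=1$ counterparts in \eqref{eq:equiv-beta-1}, and that the squared prime forms $E(\cdot,\cdot)^2$, the squared factors $\eta(\cdot)^2$ and the doubled argument $2\bm{u}(\infty_-)$ are reproduced correctly --- all of which is forced by, and only by, the doubling of every argument that distinguishes period $2\bm{\tau}$ from period $\tfrac{\bm{\tau}}{2}$. No idea beyond the $\beta=1$ argument is required once the relations $\bm{A}_i+\bm{B}_i=\mathrm{const}$ and $\bm{A}_i-\bm{B}_i=\bm{w}_i$ are in place.
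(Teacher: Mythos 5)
Your route is genuinely different from the paper's. The paper proves the equivalence in one stroke by a modular transformation: it applies Theorem~\ref{thm:formula_beta_1} to the same hyperelliptic curve remarked so that its period matrix is $-\bm{\tau}^{-1}$, and then uses the Jacobi-type identity $\vartheta_{\bm{\nu},-\bm{\mu}}\bigl(\bm{z}\big|-\tfrac{\bm{\tau}^{-1}}{2}\bigr)=D_{\bm{\tau}}\,\ee^{2{\rm i}\pi \bm{z}\cdot\bm{\tau}^{-1}(\bm{z})}\,\vartheta_{\bm{\mu},\bm{\nu}}(2\bm{z}|2\bm{\tau})$ to convert each period-$\tfrac{\bm{\tau}}{2}$ term into the corresponding period-$2\bm{\tau}$ term of Theorem~\ref{thm:formula_beta_4}, the argument-dependent Gaussian factors being absorbed into the transformation of the prime forms and of the Abel map under the change of marking. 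You instead propose to run the binary addition theorem a second time, one level up, so that both theorems are reduced to the same intermediate period-$\bm{\tau}$ identity~\eqref{eq:equiv-beta-1}. Your structural observations are correct and are the real content of either proof: writing each term of Theorem~\ref{thm:formula_beta_4} as $\vartheta_{\bm{\mu},\bm{\nu}}(2\bm{A}_i|2\bm{\tau})\vartheta_{\bm{\mu},\bm{\nu}}(2\bm{B}_i|2\bm{\tau})$, one indeed has $\bm{A}_i+\bm{B}_i=\bm{u}(z_1)-\bm{u}(z_1')+\bm{u}(z_2)-\bm{u}(z_2')$ for all four terms, $\bm{A}_i-\bm{B}_i=\bm{w}_i$, and the coefficients agree verbatim with $\mathfrak{c}_1,\dots,\mathfrak{c}_4$. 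Your approach has the merit of avoiding the change of marking entirely.

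The one step you flag as ``bookkeeping'' is, however, where the real care is needed, and it is not purely mechanical. The forward form of~\eqref{binaryR} with $\bm{\tau}\to 2\bm{\tau}$ and $\bm{\mu}'=\bm{\mu}$, $\bm{\nu}'=\bm{\nu}$ produces on the right-hand side products $\vartheta_{\bm{\mu}+\frac{\bm{\alpha}}{2},2\bm{\nu}}(2\bm{A}_i|2\bm{\tau})\vartheta_{\frac{\bm{\alpha}}{2},\bm{0}}(2\bm{B}_i|2\bm{\tau})$ with \emph{unequal} characteristics on the two factors, whereas Theorem~\ref{thm:formula_beta_4} has equal characteristics $(\bm{\mu},\bm{\nu})$ on both. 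Consequently the finite Fourier inversion you invoke does not automatically land on the family $\vartheta_{\frac{\bm{\alpha}}{2},\bm{0}}(\bm{w}_i|\bm{\tau})$ of~\eqref{eq:equiv-beta-1}: depending on which dual form of the addition relation one uses, the $\bm{w}_i$-factor may come out with shifts in the \emph{second} characteristic, i.e., as $\vartheta_{\bm{0},\frac{\bm{\delta}}{2}}(\bm{w}_i|\bm{\tau})$, and the two resulting finite families of reduced identities are not obviously interchangeable. To close the argument you should either carry the free parameters $\bm{\mu},\bm{\nu}$ through both reductions and show that each theorem is equivalent to the full two-parameter family $\sum_i\mathfrak{c}_i\,\vartheta_{\bm{\mu},\bm{\nu}}(\bm{w}_i|\bm{\tau})=0$ (which the paper's direct geometric proof shows does hold for all real characteristics), or exhibit the specific dual addition formula whose inversion reproduces exactly the characteristics $\bigl(\frac{\bm{\alpha}}{2},\bm{0}\bigr)$. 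With that verification supplied, your proof is complete and arguably cleaner than the paper's.
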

\begin{proof} We apply Theorem~\ref{thm:formula_beta_1} to the hyperelliptic curve with
 matrix of periods $\bm{\tau}' = -\bm{\tau}^{-1}$. Then, \eqref{eq:formula-beta-1} is an identity
 involving theta functions with matrix
 $\frac{\bm{\tau'}}{2} = -\frac{\bm{\tau}^{-1}}{2}$. On the other hand, the
 modular transformation of the theta function is (see
 \cite[equation~(5.1)]{Tata1}), for any $\bm{z},\bm{\mu},\bm{\nu} \in \mathbb{R}^g$
 \begin{equation*}
 \begin{split}
	 \vartheta_{\bm{\nu}, -\bm{\mu}}\bigl(\bm{z}\big | -\tfrac{\bm{\tau}^{-1}}{2}\bigr) = D_{\bm{\tau}} \cdot \ee^{2{\rm i}\pi \bm{z} \cdot \bm{\tau}^{-1}(\bm{z})} \vartheta_{\bm{\mu}, \bm{\nu}}(2\bm{z} | 2\bm{\tau})
 \end{split}
 \end{equation*}
for some constant $D_{\bm{\tau}} \in \mathbb{C}^*$. Applying this to each term in Theorem~\ref{thm:formula_beta_4}, all terms get the same prefactor and we are left with Theorem~\ref{thm:formula_beta_4}. The operation is reversible.
\end{proof}

\subsection{Formula for the multi-cut equilibrium energy (Proof of Proposition~\ref{lem:Ef0})}
\label{sec:can}
In the proof of Theorem~\ref{thm:formula_beta_2}, if we did not use Proposition~\ref{lem:Ef0} to simplify the exponential in~\eqref{Line2}, the rest of the arguments would prove the identity \eqref{eq:formula-beta-2} with a prefactor
\begin{equation}
\label{aphsee}{\rm e}^{2\mathcal{E}[\mu_{{\rm eq}}] + 2\mathcal{L}[V] + \mathcal{Q}[V,V] + 4{\rm i}\pi \bm{\epsilon}^* \cdot (\bm{\tau}(\bm{\epsilon}^*) + \bm{u}(\infty_-))}
\end{equation}
in the right-hand side, valid for any hyperelliptic curve with real Weierstra\ss{} points and the equilibrium measure $\mu_{{\rm eq}}$ of the associated (unconstrained) $\beta = 2$ ensemble. Taking all points $z$, $z'$, $w$, $w'$ to $\infty_+$ in this modified identity implies that this extra factor \eqref{aphsee} must be equal to $1$. The argument of the exponential is manifestly real, except perhaps or the last term. As the curve is hyperelliptic, a basis of the space of holomorphic forms is given by $\dd \pi_k = \frac{x^{k} \dd x}{s}$ for $k \in [g]$. Recall that $s$ takes imaginary values on the segments $[a_h,b_h]$ for each $h \in [0,g]$, and real values between the segments. This implies that the matrix $Q_{k,h} = \oint_{\mathcal{A}_{h}} \dd \pi_k$ has purely imaginary entries. Since $(\dd u_h)_{h = 1}^{g}$ is the basis dual to $\mathcal{A}$-cycle integration, we have
\[
\dd u_h = \sum_{k = 1}^{g} Q^{-1}_{h,k} \dd \pi_k,\qquad \text{with}\ Q^{-1}\ \text{purely imaginary}.
\]
Integrating this on the $\mathcal{B}$-cycles which only run between segments (Section~\ref{sec:constr-spectr-curve}) yields a purely imaginary matrix of periods $\bm{\tau}$. A path from $\infty_+$ to $\infty_-$ that does not cross any of the $\mathcal{A}$- and $\mathcal{B}$-cycles described in Section~\ref{sec:constr-spectr-curve} is for instance the path travelling along the real axis in $\widehat{C}_+$ from $-\infty$ to $a_0$, then along the real axis in $\widehat{C}_-$ from $a_0$ to $-\infty_-$. In this range $s$ is real-valued, so $\bm{u}(\infty_-)$ is also purely imaginary. All in all, \eqref{aphsee} only involves the real exponential, and we conclude that
\[
2\mathcal{E}[\mu_{{\rm eq}}] + 2\mathcal{L}[V] + \mathcal{Q}[V,V] + 4{\rm i}\pi \bm{\epsilon}^* \cdot (\bm{\tau}(\bm{\epsilon}^*) + \bm{u}(\infty_-)) = 0.
\]
This argument was for $\beta = 2$, but we retrieve Proposition~\ref{lem:Ef0} in full generality since it is simply the $\beta = 2$ identity multiplied by $\frac{\beta}{2}$ and taking into account the prefactor $\frac{2}{\beta}$ in the definition of $\mathcal{Q}$, while $\mu_{{\rm eq}}$ and $\mathcal{L}$ are independent of $\beta$. So, it was justified (without loop in the logic) to proceed with Proposition~\ref{lem:Ef0} in the proofs of Section~\ref{sec:derivation-formulae}. In fact, the same argument would establish Proposition~\ref{lem:Ef0} as a byproduct of the proof of the $\beta = 1$ Theorem~\ref{thm:formula_beta_1} or of the $\beta = 4$ Theorem~\ref{thm:formula_beta_4} instead of Theorem~\ref{thm:formula_beta_2}.

\appendix

\section[Variation of the entropy with respect to filling fractions (Proof of Proposition~\ref{lem:compute-nu})]{Variation of the entropy with respect to filling fractions\\ (Proof of Proposition~\ref{lem:compute-nu})}
\label{AppA}

Consider the equilibrium measure $\mu_{{\rm eq},\bm{\epsilon}}$ of a $\beta$-ensemble with fixed filling fractions $\bm{\epsilon}$ such that $M(x) = t_{2g + 2}\prod_{h = 1}^{g} (x - z_h)$ with $z_h \in (b_{h - 1},a_h)$ in the notations of Section~\ref{eqmessec}. The density of~$\mu_{{\rm eq},\bm{\epsilon}}$~is
\[
\rho(x) = \frac{t_{2g + 2}}{2\pi} \prod_{h = 1}^{g} |x - z_h| \prod_{h = 0}^{g} \sqrt{|x - a_h||x - b_h|} \cdot \ind_S(x).
\]

We need to compute for each $h \in [g]$
\begin{equation*}
v_{{\rm eq},h} = \biggl(\frac{\beta}{2} - 1\biggr) \int_{S} \partial_{\epsilon_h} (\rho(x) \ln \rho(x) ) =  \biggl(\frac{\beta}{2} - 1\biggr)\int_{S} (\partial_{\epsilon_h} \rho(x) ) \ln \rho(x)\,\dd x.
\end{equation*}
For the last equality, we used that $\int_{S} \rho(x)\dd x = 1$ has vanishing $\epsilon_h$-derivative. The density $\rho$ can be expressed as a jump of
$W_{1}$ to rewrite
\begin{align}
v_{{\rm eq},h} &{} = \biggl(\frac{\beta}{2} - 1\biggr) \int_{S}\partial_{\epsilon_{h}}\frac{W_{1}(x - {\rm i}0) - W_{1}(x + {\rm i}0)}{2i\pi}\ln\rho(x)\,\dd x \nonumber\\
&{} = \biggl(\frac{\beta}{2} - 1\biggr)\Biggl(\sum_{k = 1}^{g} \Upsilon_h(z_k) + \frac{1}{2} \sum_{h = 0}^{g}  (\Upsilon_h(a_h) + \Upsilon_h(b_h) )\Biggr)\label{nueqh}
\end{align}
in terms of the integrals
\begin{equation}
\label{Ihxi} \forall \xi \in \mathbb{R} \qquad \Upsilon_h(\xi) := \int_{S} \partial_{\epsilon_h}\biggl(\frac{W_1(x - {\rm i}0) - W_1(x + {\rm i}0)}{2{\rm i}\pi}\biggr) \ln|x - \xi| \, \dd x.
\end{equation}

It is well-known (see, e.g., \cite[Appendix A]{BGmulti}) that
\[
\forall z \in \widehat{C}_+\qquad \partial_{\epsilon_h} W_1(X(z))\dd X(z) = 2{\rm i}\pi \dd u_h(z).
\]
For $x \in \mathbb{C} \setminus S$ or in $S \pm {\rm i}0$, we define $\mathfrak{z}(x)$ to be the unique point in $\overline{\widehat{C}_+}$ such that $X(\mathfrak{z}(x)) = x$. Then
\[
\Upsilon_h(\xi) = \int_{S}  (\dd u_h(\mathfrak{z}(x - {\rm i}0)) - \dd u_h(\mathfrak{z}(x + {\rm i}0)) ) \ln|x - \xi| = 2\int_{S} \dd u_h(\mathfrak{z}(x - {\rm i}0)) \ln|x - \xi|.
\]
This is a differentiable function of $\xi$. For $\xi \notin S$, we can compute
\[
\partial_{\xi} \Upsilon_h(\xi) = \int_{S}  (\dd u_h(\mathfrak{z}(x - {\rm i}0) - \dd u_h(\mathfrak{z}(x + {\rm i}0)) ) \frac{1}{\xi - x} = \oint_{S} \frac{\dd u_h(z)}{\xi - X(z)} = 2{\rm i}\pi \frac{\dd u_h}{\dd X}(\mathfrak{z}(\xi)).
\]
For $\xi \in \mathring{S}$, we rather have
\[
\partial_{\xi} \Upsilon_h(\xi) = 2 \fint_{S} \frac{\dd u_h(\mathfrak{z}(x - {\rm i}0))}{\xi - x} = -\frac{\dd u_h}{\dd X}(\mathfrak{z}(\xi + {\rm i}0)) - \frac{\dd u_h}{\dd X}(\mathfrak{z}(\xi - {\rm i}0)) = 0.
\]
We will integrate this starting along the real line starting from $\xi = - \infty + {\rm i}0$ and using the continuity of $\Upsilon_h$ on the real axis shifted by $+{\rm i}0$. From the definition \eqref{Ihxi}, we can see that $\lim_{\xi \rightarrow \infty} \Upsilon_h(\xi) = 0$. Therefore,
\begin{equation}
\label{Upsilonh}
\frac{\Upsilon_h(\xi)}{2{\rm i}\pi} = \begin{cases}\displaystyle u_h(\mathfrak{z}(\xi)) + \sum_{l = 0}^{k - 1}  (u_h(a_l) - u_h(b_l) ) & \text{if} \ \xi \in (b_{k - 1},a_k), \\
\displaystyle u_h(a_k) + \sum_{l = 0}^{k- 1}  (u_h(a_k) - u_h(b_k) ) & \text{if} \ \xi \in [a_k,b_k], \end{cases}
\end{equation}
with the conventions $b_{-1} = -\infty$ and $a_{g + 1} = +\infty$. Note that we could start integrating along the real line coming from $+\infty$, but we would get an equivalent expression because
\begin{equation*}
\sum_{k = 0}^{g} \bm{u}(a_k) = \sum_{k = 0}^{g} \bm{u}(b_k).
\end{equation*}
The primitive $\bm{u}$ of $\dd \bm{u}$ in $(\mathbb{C} \setminus S)$ is multivalued, because this domain is not simply-connected. Yet, for the previous computation, it suffices to define it by integration based at $\infty_+$ in the simply-connected domain $\mathbb{H} \setminus S$, and it is extended to $S$ and hence $\overline{\mathbb{H}}$ by continuity. Inserting the formula \eqref{Upsilonh} in \eqref{nueqh}, we arrive to
\begin{gather}
\bm{v}_{{\rm eq},h} = 2{\rm i}\pi \biggl(\frac{\beta}{2} - 1\biggr)\label{sumv}\\
\hphantom{\bm{v}_{{\rm eq},h} =}{}
\times\Biggl[\sum_{k = 1}^{g} \bigl(\bm{u}(z_k) + \bm{u}(a_0) - \bm{u}(b_0) + \cdots + \bm{u}(a_{k - 1}) - \bm{u}(b_{k - 1})\bigr) + \sum_{k = 0}^g \frac{\bm{u}(a_k) + \bm{u}(b_k)}{2}\Biggr].\nonumber
\end{gather}

We now compute $\bm{u}(a_k)$ and $\bm{u}(b_k)$ as defined above. Denote $(\bm{e}_1,\ldots,\bm{e}_g)$ the canonical basis of $\mathbb{C}^g$. Due to the description of the representatives of the $\mathcal{A}$- and $\mathcal{B}$-cycles in Section~\ref{sec:constr-spectr-curve} and the fact that the hyperelliptic involution changes the sign of $\dd \bm{u}$, we have
\begin{equation}
\label{ub0a0}
\bm{u}(b_0) - \bm{u}(a_0) = - \frac{1}{2} \oint_{\mathcal{A}_0} \dd \bm{u} = \frac{1}{2} \sum_{l = 1}^{g} \bm{e}_l,
\end{equation}
and for any $k \in [g]$
\begin{gather}
\bm{u}(b_k) - \bm{u}(a_k)  = - \frac{1}{2} \oint_{\mathcal{A}_k} \dd \bm{u} = -\frac{1}{2}\bm{e}_k, \nonumber\\
\bm{u}(a_k) - \bm{u}(b_{k - 1})  = \frac{1}{2} \oint_{\mathcal{B}_{k} - \mathcal{B}_{k - 1}} \dd \bm{u} = \frac{1}{2} (\bm{\tau}(\bm{e}_k) - \bm{\tau}(e_{k - 1})), \label{uakbk}
\end{gather}
with the conventions $\mathcal{B}_0 = 0$ and $\bm{e}_0 = 0$. Since $a_0$ is the only Weierstra\ss{} point that does not belong to the $\mathcal{A}$- and $\mathcal{B}$-cycles specified in Section~\ref{sec:constr-spectr-curve}, $\bm{u}(\infty_-)$ can be obtained by integrating~$\dd \bm{u}$ in the first sheet $-\infty$ on the real line to $a_0$, and then to $a_0$ from $-\infty$ on the real line in the second sheet. Therefore,
\[
\bm{u}(a_0) = \frac{1}{2} \bm{u}(\infty_-).
\]
From \eqref{ub0a0} and \eqref{uakbk}, we deduce
\[
\bm{u}(b_0) = \frac{1}{2}\Biggl(\bm{u}(\infty_-) + \sum_{l = 1}^g \bm{e}_l\Biggr),
\]
and for $k \in [g]$
\begin{gather*}
\bm{u}(a_k) = \frac{1}{2}\Biggl(\bm{u}(\infty_-) + \sum_{l = k}^{g} \bm{e}_l + \sum_{l = 1}^{k} \bm{\tau}(\bm{e}_l)\Biggr), \\
\bm{u}(b_k) = \frac{1}{2}\Biggl(\bm{u}(\infty_-) + \sum_{l= k + 1}^{g} \bm{e}_l + \sum_{l = 1}^{k} \bm{\tau}(\bm{e}_l)\Biggr).
\end{gather*}
Therefore,
\begin{align*}
\sum_{k = 0}^{g} \bm{u}(a_k) &{} = \sum_{k = 0}^{g} \bm{u}(b_k) = \frac{1}{2}\Biggl[(g + 1)\bm{u}(\infty_-) + \sum_{k = 1}^{g}\Biggl( \sum_{l = k}^{g} \bm{e}_l + \sum_{l = 1}^k \bm{\tau}(\bm{e}_l)\Biggr)\Biggr] \\
 &{} = \frac{1}{2}\Biggl((g + 1)\bm{u}(\infty_-) + \sum_{l = 1}^{g} (l \bm{e}_l + (g + 1 -l)\bm{\tau}(\bm{e}_l))\Biggr).
\end{align*}

We can return to the computation of $\bm{v}_{{\rm eq}}$. By definition in \eqref{vnabl} it is real, so we can replace~$\bm{u}$ by $\operatorname{Im} \bm{u}$ in \eqref{sumv}. Since $\bm{u}(b_l) - \bm{u}(a_l)$ is real for any $l \in [0,g]$, we get
\[
\bm{v}_{{\rm eq}} = 2\pi\biggl(1 - \frac{\beta}{2}\biggr)\Biggl[ \sum_{k = 1}^{g} \biggl(\operatorname{Im}\bm{u}(z_k) + \frac{g + 1 - k}{2}\,\operatorname{Im}\bm{\tau}(e_k)\biggr)+ \frac{g + 1}{2} \operatorname{Im}\bm{u}(\infty_-)\Biggr].
\]
Since we already know that $\bm{\tau}$ and $\bm{u}(\infty_-)$ are purely imaginary, we can drop imaginary part and divide by ${\rm i}$ instead, and this is the final formula.

\subsubsection*{Acknowledgments}

G.B.\ thanks Semyon Klevtsov for his encouragement to carry out this study, Angela Ortega for discussions on quadrisecants, and IHES (where this article was revised) for hospitality and excellent work conditions..
T.B.\ is supported by ERC Project LDRAM : ERC-2019-ADG Project 884584.

\pdfbookmark[1]{References}{ref}
\LastPageEnding


\begin{thebibliography}{99}
\footnotesize\itemsep=0pt

\bibitem{APS01}
Albeverio S., Pastur L., Shcherbina M., On the {$1/n$} expansion for some
  unitary invariant ensembles of random matrices, \href{https://doi.org/10.1007/s002200100531}{\textit{Comm. Math. Phys.}}
  \textbf{224} (2001), 271--305.

\bibitem{BookAG}
Anderson G.W., Guionnet A., Zeitouni O., An introduction to random matrices,
  \textit{Cambridge Stud. Adv. Math.}, Vol. 118, \href{https://doi.org/10.1017/CBO9780511801334}{Cambridge University Press},
  Cambridge, 2009.

\bibitem{Bertolac}
Bertola M., Riemann surfaces and theta functions, {C}ourse at Concordia
  University, 2006, \url{https://mypage.concordia.ca/mathstat/bertola/ThetaCourse/ThetaCourse.pdf}.

\bibitem{Bertola}
Bertola M., Free energy of the two-matrix model/d{T}oda tau-function,
  \href{https://doi.org/10.1016/j.nuclphysb.2003.07.029}{\textit{Nuclear Phys.~B}} \textbf{669} (2003), 435--461,
  \href{https://arxiv.org/abs/hep-th/0306184}{arXiv:hep-th/0306184}.

\bibitem{bertola_boutroux_2007}
Bertola M., Boutroux curves with external field: equilibrium measures without a
  variational problem, \href{https://doi.org/10.1007/s13324-011-0012-3}{\textit{Anal. Math. Phys.}} \textbf{1} (2011), 167--211,
  \href{https://arxiv.org/abs/0705.3062}{arXiv:0705.3062}.

\bibitem{BDE}
Bonnet G., David F., Eynard B., Breakdown of universality in multi-cut matrix
  models, \href{https://doi.org/10.1088/0305-4470/33/38/307}{\textit{J.~Phys.~A}} \textbf{33} (2000), 6739--6768,
  \href{https://arxiv.org/abs/cond-mat/0003324}{arXiv:cond-mat/0003324}.

\bibitem{borodin_averages}
Borodin A., Strahov E., Averages of characteristic polynomials in random matrix
  theory, \href{https://doi.org/10.1002/cpa.20092}{\textit{Comm. Pure Appl. Math.}} \textbf{59} (2006), 161--253,
  \href{https://arxiv.org/abs/math-ph/0407065}{arXiv:math-ph/0407065}.

\bibitem{BEInt}
Borot G., Eynard B., Geometry of spectral curves and all order dispersive
  integrable system, \href{https://doi.org/10.3842/SIGMA.2012.100}{\textit{SIGMA}} \textbf{8} (2012), 100, 53~pages,
  \href{https://arxiv.org/abs/1110.4936}{arXiv:1110.4936}.

\bibitem{BEO}
Borot G., Eynard B., Orantin N., Abstract loop equations, topological recursion
  and new applications, \href{https://doi.org/10.4310/CNTP.2015.v9.n1.a2}{\textit{Commun. Number Theory Phys.}} \textbf{9} (2015),
  51--187, \href{https://arxiv.org/abs/1303.5808}{arXiv:1303.5808}.

\bibitem{BEW}
Borot G., Eynard B., Weisse A., Root systems, spectral curves, and analysis of
  a {C}hern--{S}imons matrix model for {S}eifert fibered spaces,
  \href{https://doi.org/10.1007/s00029-016-0266-6}{\textit{Selecta Math. (N.S.)}} \textbf{23} (2017), 915--1025,
  \href{https://arxiv.org/abs/1407.4500}{arXiv:1407.4500}.

\bibitem{BGG}
Borot G., Gorin V., Guionnet A., Fluctuations for multi-cut discrete
  {$\beta$}-ensembles and application to random tilings, {i}n preparation.

\bibitem{BG11}
Borot G., Guionnet A., Asymptotic expansion of {$\beta$} matrix models in the
  one-cut regime, \href{https://doi.org/10.1007/s00220-012-1619-4}{\textit{Comm. Math. Phys.}} \textbf{317} (2013), 447--483,
  \href{https://arxiv.org/abs/1107.1167}{arXiv:1107.1167}.

\bibitem{BGmulti}
Borot G., Guionnet A., Asymptotic expansion of {$\beta$} matrix models in the
  multi-cut regime, \href{https://doi.org/10.1017/fms.2023.129}{\textit{Forum Math. Sigma}} \textbf{12} (2024), e13,
  93~pages, \href{https://arxiv.org/abs/1303.1045}{arXiv:1303.1045}.

\bibitem{BGK}
Borot G., Guionnet A., Kozlowski K.K., Large-{$N$} asymptotic expansion for
  mean field models with {C}oulomb gas interaction, \href{https://doi.org/10.1093/imrn/rnu260}{\textit{Int. Math. Res.
  Not.}} \textbf{2015} (2015), 10451--10524, \href{https://arxiv.org/abs/1312.6664}{arXiv:1312.6664}.

\bibitem{BIPZ}
Br\'ezin E., Itzykson C., Parisi G., Zuber J.B., Planar diagrams, \href{https://doi.org/10.1007/BF01614153}{\textit{Comm.
  Math. Phys.}} \textbf{59} (1978), 35--51.

\bibitem{period-book}
Carlson J., M\"uller-Stach S., Peters C., Period mappings and period domains,
  \textit{Cambridge Stud. Adv. Math.}, Vol.~168, \href{https://doi.org/10.1017/9781316995846}{Cambridge University Press},
  Cambridge, 2017.

\bibitem{CFWW}
Charlier C., Fahs B., Webb C., Wong M.D., Asymptotics of {H}ankel determinants
  with a multi-cut regular potential and {F}isher--{H}artwig singularities,
  \href{https://arxiv.org/abs/2111.08395}{arXiv:2111.08395}.

\bibitem{CGmL}
Claeys T., Grava T., McLaughlin K.D.T.-R., Asymptotics for the partition
  function in two-cut random matrix models, \href{https://doi.org/10.1007/s00220-015-2412-y}{\textit{Comm. Math. Phys.}}
  \textbf{339} (2015), 513--587, \href{https://arxiv.org/abs/1410.7001}{arXiv:1410.7001}.

\bibitem{Deift}
Deift P.A., Orthogonal polynomials and random matrices: a {R}iemann--{H}ilbert
  approach, \textit{Courant Lect. Notes Math.}, Vol.~3, \href{https://doi.org/10.1090/cln/003}{American Mathematical
  Society}, Providence, RI, 1999.

\bibitem{CoursEynard}
Eynard B., Kimura T., Ribault S., Random matrices, \href{https://arxiv.org/abs/1510.04430}{arXiv:1510.04430}.

\bibitem{EM}
Eynard B., Mehta M.L., Matrices coupled in a chain.~{I}. {E}igenvalue
  correlations, \href{https://doi.org/10.1088/0305-4470/31/19/010}{\textit{J.~Phys.~A}} \textbf{31} (1998), 4449--4456,
  \href{https://arxiv.org/abs/cond-mat/9710230}{arXiv:cond-mat/9710230}.

\bibitem{Farkas-Kra}
Farkas H.M., Kra I., Riemann surfaces, \textit{Grad. Texts in Math.}, Vol.~71,
  2nd ed., \href{https://doi.org/10.1007/978-1-4612-2034-3}{Springer}, New York, 1992.

\bibitem{Fay}
Fay J.D., Theta functions on {R}iemann surfaces, \textit{Lecture Notes in
  Math.}, Vol. 352, \href{https://doi.org/10.1007/BFb0060090}{Springer}, Berlin, 1973.

\bibitem{Johan}
Johansson K., On fluctuations of eigenvalues of random {H}ermitian matrices,
  \href{https://doi.org/10.1215/S0012-7094-98-09108-6}{\textit{Duke Math.~J.}} \textbf{91} (1998), 151--204.

\bibitem{Jur91}
Jurkiewicz J., Chaotic behaviour in one-matrix models, \href{https://doi.org/10.1016/0370-2693(91)90325-K}{\textit{Phys. Lett.~B}}
  \textbf{261} (1991), 260--268.

\bibitem{Krichever}
Krichever I.M., Methods of algebraic geometry in the theory of nonlinear
  equations, \href{https://doi.org/10.1070/RM1977v032n06ABEH003862}{\textit{Russian Math. Surveys}} \textbf{32} (1977), no.~6,
  185--213.

\bibitem{Matveev}
Matveev V.B., 30 years of finite-gap integration theory, \href{https://doi.org/10.1098/rsta.2007.2055}{\textit{Philos.
  Trans.~R. Soc. Lond. Ser.~A Math. Phys. Eng. Sci.}} \textbf{366} (2008),
  837--875.

\bibitem{Mehtabook}
Mehta M.L., Random matrices, 3rd ed., \textit{Pure Appl. Math. (Amsterdam)}, Vol.~142, Elsevier, Amsterdam, 2004.

\bibitem{migdal_loop_1983}
Migdal A.A., Loop equations and {1/N} expansion, \href{https://doi.org/10.1016/0370-1573(83)90076-5}{\textit{Phys. Rep.}}
  \textbf{102} (1983), 199--290.

\bibitem{Miranda}
Miranda C., Un'osservazione su un teorema di {B}rouwer, \textit{Boll. Un. Mat.
  Ital.} \textbf{3} (1941), 5--7.

\bibitem{Mulase}
Mulase M., Cohomological structure in soliton equations and {J}acobian
  varieties, \href{https://doi.org/10.4310/jdg/1214438685}{\textit{J.~Differential Geom.}} \textbf{19} (1984), 403--430.

\bibitem{Tata1}
Mumford D., Tata lectures on theta.~{I}, \textit{Mod. Birkh\"auser Class.},
  \href{https://doi.org/10.1007/978-0-8176-4578-6}{Birkh\"auser}, Boston, MA, 2007.

\bibitem{Pastur06}
Pastur L., Limiting laws of linear eigenvalue statistics for {H}ermitian matrix
  models, \href{https://doi.org/10.1063/1.2356796}{\textit{J.~Math. Phys.}} \textbf{47} (2006), 103303, 22~pages,
  \href{https://arxiv.org/abs/math.PR/0608719}{arXiv:math.PR/0608719}.

\bibitem{Smulticut}
Shcherbina M., Fluctuations of linear eigenvalue statistics of {$\beta$} matrix
  models in the multi-cut regime, \href{https://doi.org/10.1007/s10955-013-0740-x}{\textit{J.~Stat. Phys.}} \textbf{151} (2013),
  1004--1034, \href{https://arxiv.org/abs/1205.7062}{arXiv:1205.7062}.

\bibitem{Shiota}
Shiota T., Characterization of {J}acobian varieties in terms of soliton
  equations, \href{https://doi.org/10.1007/BF01388967}{\textit{Invent. Math.}} \textbf{83} (1986), 333--382.

\end{thebibliography}
\end{document}